\newtheorem{theorem}{Theorem}
\newtheorem{proposition}{Proposition}
\newtheorem{lemma}{Lemma}
\newtheorem{corollary}{Corollary}
\newtheorem{remark}{Remark}
\newtheorem{proof}{Proof}
\newcommand*{\QEDA}{\hfill\ensuremath{\blacksquare}}
\begin{document}
%

\title{Consensus of Multi-agent Systems Under State-dependent Information Transmission}
%
%
%

\author{Gangshan~Jing ,~
       Yuanshi~Zheng,~
       and~Long~Wang~
\thanks{This work was supported by NSFC (Grant
Nos. 61020106005, 61375120 and 61304160) and the Fundamental
Research Funds for the Central Universities (Grant Nos. JB140406).}
\thanks{G. Jing and Y. Zheng are with Key Laboratory of Electronic Equipment Structure Design of Ministry of Education, School of Mechano-electronic Engineering, Xidian University, Xi'an 710071, China and also with the Center for Complex Systems, School of Mechano-electronic Engineering, Xidian University, Xi'an 710071, China
 (e-mail:nameisjing@gmail.com;zhengyuanshi2005@163.com)
 }
\thanks{L. Wang is with the Center for Systems and Control, College of Engineering, Peking
University, Beijing 100871, China
 (e-mail: longwang@pku.edu.cn)}
}

\maketitle

\begin{abstract}
In this paper, we study the consensus problem for continuous-time and discrete-time multi-agent systems in state-dependent switching networks. In each case, we first consider the networks with fixed connectivity, in which the communication between adjacent agents always exists but the influence could possibly become negligible if the transmission distance is long enough. It is obtained that consensus can be reached under a restriction of either the decaying rate of the transmission weight or the initial states of the agents. After then we investigate the networks with state-dependent connectivity, in which the information transmission between adjacent agents gradually vanishes if their distance exceeds a fixed range. In such networks, we prove that the realization of consensus requires the validity of some initial conditions. Finally, the conclusions are applied to models with the transmission law of C-S model, opinion dynamics and the rendezvous problem, the corresponding simulations are also presented.
\end{abstract}

\begin{keywords}
Multi-agent systems, state-dependent, switching networks, opinion dynamics, rendezvous.
\end{keywords}

%
\IEEEpeerreviewmaketitle

\section{Introduction}

\IEEEPARstart{D}{istributed} cooperative control of systems with multiple agents has attracted attention from different research communities in recent several years. In these systems, all the agents interact with each other via a communication topology and only local information can be employed. Therefore, in order to drive them to accomplish tasks, a distributed control law is required. A multi-agent system has a wide range of applications since it can perform a variety of collective behaviors. For instance, formation of unmanned aerial vehicles \cite{Xiao09}, attitude adjustment of spacecrafts \cite{RenA07}, flocking of multiple robots \cite{Jia13} and so on. During these challenging topics, reaching consensus is a crucial problem that we have to deal with. Moreover, many collective behaviors can be performed based on strategies to reach consensus.

So far, there have been numerous references related to the consensus problem. More specifically, \cite{Saber04} considered the consensus of continuous-time systems in which agents are of single integrator dynamics, the authors found the connectivity of the network plays an important role in reaching consensus. On this basis, the static and dynamic consensus protocols for continuous-time systems with double integrator dynamics are studied in \cite{Xie07} and \cite{Ren07}, respectively. In \cite{Zheng11,Zheng12}, the authors investigated the consensus of a heterogeneous system which consists of a number of agents with single and double integrator dynamics simultaneously. For discrete-time systems, \cite{Ren05} investigated the first-order multi-agent systems and obtained a necessary and sufficient condition for consensus. All these works also considered the case of time-dependent switching networks. It was shown that by employing their protocols, if the communication topology switches in a finite number of connected graphs, the conclusion for consensus still holds. Moreover, some literatures also have studied consensus in time-dependent networks in depth \cite{Moreau05,Cao11,Hendrickx13,Su13,Zheng14}.

In fact, there exist many systems running in switching networks which are closely related to the states of agents. As an example, in Vicsek's model \cite{Vicsek95}, all the agents keep the same speed but different headings, the key to realize swarming is making each agent update its heading by averaging the headings of agents who are close to it. For these systems, the information transmission weight varies when the agents change their states and thus there may exist an infinite number of communication graphs to be employed. Furthermore, with the evolution of the system, the connectivity of the communication topology can be possibly broken, which will lead to the failure of consensus. Therefore, such systems have very different properties and are worth exploring. A few investigations have been carried out on this issue. Cucker and Smale proposed a flocking model(C-S model) via a transmission weight dependent on state distance in \cite{Cucker07,Cucker10}. The communication weight is designed like gravity, $i.e.$, as the distance between two agents increases, the information they receive from each other gradually weakens but always exists. This implies that the communication topology is always a complete graph. The authors' research shows that convergence can be achieved under a restriction on the initial states, which is really different with the previous results of systems in time-dependent switching networks. Besides, the model of opinion dynamics introduced by Hegselmann and Krause in \cite{Krause02} is also an interesting topic. It describes the evolution of a number of opinions in a group of agents who can interact with their neighbors. Different from C-S model, H-K(Hegselmann-Krause) model includes a bounded confidence constraint, so that each agent can only interact with the agents who keep opinions within the confidence bound of its opinion. Therefore, H-K model allows both the addition and loss of links in communication topology, and thus the connectivity cannot be always kept. Several literatures related to opinion dynamics have been conducted \cite{Krause00,Blondel09,Blondel10,Ceragioli12}. In \cite{Yang14}, the author obtained a sufficient condition for consensus of continuous-time opinion model by maintaining the distance between any two agents nonincreasing. Similar to opinion dynamics, the rendezvous problem of multi-agent systems also involves the the uncertainty of the network's connectivity \cite{Lin071,Lin072}. In order to realize rendezvous, \cite{Su10} proposed an algorithm by employing a potential function to preserve the network's connectivity. Also the information transmitted between agents in \cite{Li13} and \cite{Andreasson13} is influenced by the agents' states.

Out of the above-mentioned situation, we consider the consensus problem of multi-agent systems with a general state-dependent information transmission weight. Two kinds of state-dependent switching networks are considered. In the first case, switching has no effects on the connectivity of the communication topology. Different from \cite{Andreasson13}, we mainly explore systems with damping information transmission weight without coupling extra nonlinear gains, and the communication between agents is only affected by their relative states. That is, the transmission mode in our study contains the one of C-S model as a special case. In the second case, the communication graph is fully dependent on the states of all the agents. The connectivity of the communication topology can be varying as the system evolves. Hence it can apply to opinion dynamics and the rendezvous problem. In this paper, we always assume the influence between agents decays as their distance increases. This assumption can be taken off in several circumstances, we will state it in the text.

In this paper, we investigate the consensus problem of continuous-time and discrete-time multi-agent systems respectively. For each kind of the systems, agents with first-order and second-order dynamics are separately considered. The corresponding protocols are proposed by employing protocols in the previous literatures with state-dependent communication weight instead. By using Lyapunov method and reduction to absurdity, a sufficient condition to consensus for each protocol is obtained. We find that for a part of systems with the first kind of weight, consensus can be reached under a restriction of initial states. And that for all the systems with the second kind of weight, we always require the agents' initial states to satisfy a condition for reaching consensus. Finally, we apply our results to C-S model, opinion dynamics and the rendezvous problem. Some simulations are performed to illustrate the effectiveness of the theoretical results.

{\bf Notation:} Throughout this paper, we denote the set of real numbers by $\mathbb{R}$, the set of positive real numbers by $\mathbb{R}_{>0}$, and the set of nonnegative real numbers by $\mathbb{R}_{\geq0}$. Let $\mathbb{R}^n$ be the $n-$dimensional Euclidean space, $||\cdot||$ be the Euclidean norm. $X^T$ stands for the transpose of matrix $X$, $|\mathcal{V}|$ is the cardinality of set $\mathcal{V}$. $H_0(A)$ denotes the eigenspace of matrix $A$ corresponding to zero. $\pi_{M}(x)$ denote the orthogonal projection of $x$ onto space $M$. $dim(M)$ is the dimension of space $M$. $\otimes$ represents the kronecker product. For a matrix $A\in \mathbb{R}^{n\times n}$, $\lambda_i(A)$ denotes the $i$th eigenvalue of $A$, $i.e.$, $\lambda_1(A)\leq\cdots\leq\lambda_n(A)$. $\lfloor x\rfloor$ is the largest integer not greater than $x$ and $\lceil x\rceil$ is the smallest integer not less than $x$.

\section{Problem formulation}

\subsection{Preliminaries of Graph Theory}
We use a graph $\mathcal{G}=(\mathcal{V},\mathcal{E},\mathcal{A})$ to denote the communication relationship between agents. $\mathcal{V}$ is a set consisting of some vertices, each vertex corresponds to an agent in the system. $\mathcal{E}$ is the set of edges, each edge is denoted by a pair of agents, $i.e.$, $(i,j)$. In this paper, we propose a matrix $G=[G_{ij}]\in \mathbb{R}^{n\times n}$ to show the distribution of communication links in the network. That is, $G_{ij}=1$ if $(i,j)\in\mathcal{E}$, and $G_{ij}=0$ otherwise. The set of neighbors of agent $i$ is denoted by $\mathcal{N}_i=\{j~|~G_{ij}=1\}$. $\mathcal{A}=[a_{ij}]\in\mathbb{R}^{n\times n}$ is a matrix describing the weight of information flow between agents, in which $a_{ij}$ denotes the information transmission weight between agents $i$ and $j$. Throughout this paper, we always assume that $\mathcal{G}$ is undirected, which implies that both $G$ and $\mathcal{A}$ are symmetric matrices. We use a diagonal matrix $\Delta=[\Delta_{ij}]$ with $\Delta_{ii}=\sum_{j\in\mathcal{V}}a_{ij}$ to show the degree of each agent, the Laplacian matrix of graph $\mathcal{G}$ is defined by $L=\Delta-\mathcal{A}$. By Gerschgorin Theorem, it can be easily proved that $L$ is a positive semi-definite matrix. In our work, the communications between agents may be always changing as the agents' states evolve. Hence we use $L_x$ to denote the Laplacian matrix according to state $x$ for continuous-time systems, and $L_t$ to denote the Laplacian matrix at step $t$ for discrete-time systems. A path between $i$ and $j$ in graph $\mathcal{G}$ is a sequence of distinct edges of the form $(i_1,i_2)$, $(i_2,i_3)$, $\cdots$, $(i_{i-1},i_k)$, where $i_1=i$, $i_k=j$, and $(i_r,i_{r+1})\in\mathcal{E}$ for $r\in\{1,\cdots,k-1\}$. A graph is said to be connected if there exists a path between any two distinct vertices of the graph.

The connectivity of graph $\mathcal{G}$ is written by $\kappa(\mathcal{G})$, which is the minimum size of a vertex set $S$ such that $\mathcal{G}-S$ is disconnected or has only one vertex. Therefore, $\kappa(\mathcal{G})$ can be confirmed only by $G$. Furthermore, it is straightforward to see that $\kappa(\mathcal{G})>0$ if and only if $\mathcal{G}$ is connected. Given $i,j\in\mathcal{V}(\mathcal{G})$, a set $S\subseteq\mathcal{V}(\mathcal{G})-\{i,j\}$ is an $i,j-$cut if $\mathcal{G}-S$ has no paths between $i$ and $j$.

\subsection{Systems and Consensus}
For continuous-time systems, we consider agents with both single integrator dynamics
\begin{equation}\label{ct1}
\dot{x}_i=u_i, ~~~~i\in \mathcal{V}
\end{equation}
and double integrator dynamics
\begin{equation}\label{ct2}
\begin{split}
&\dot{x}_i=v_i,\\
&\dot{v}_i=u_i, ~~~~i\in \mathcal{V}.
\end{split}
\end{equation}
For discrete-time systems, agents with both first-order dynamics
\begin{equation}\label{dt1}
x_i(t+1)=x_i(t)+u_i(t), ~~~~i\in\mathcal{V}
\end{equation}
and second-order dynamics
\begin{equation}\label{dt2}
\begin{split}
&x_i(t+1)=x_i(t)+k_1v_i(t),\\
&v_i(t+1)=v_i(t)+u_i(t),  ~~~~i\in\mathcal{V}
\end{split}
\end{equation}
are considered.

In this paper, we suppose $k_1>0$, $\mathcal{V}=\{1,\cdots,n\}$, $x_i,~v_i,~u_i\in\mathbb{R}^m$, where $m$ is a positive integer. Let $\mathbb{E}=\mathbb{R}^m$, then $x=(x_1^T,\cdots,x_n^T)^T, ~v=(v_1^T,\cdots,v_n^T)^T\in\mathbb{E}^n$. In the following, a matrix in $\mathbb{R}^{n\times n}$ may act on $\mathbb{E}^n$. That is, $Ax=(A\otimes I_m)x$ for $A\in\mathbb{R}^{n\times n}$, $x\in\mathbb{E}^n$. We say the consensus problem is solved if $x$ gradually evolves into $M=span\{\mathbf{1}_n\otimes r~|~r\in\mathbb{E}\}$ as $t\rightarrow\infty$. Specifically, if $M=\{\mathbf{1}_n\otimes\frac{1}{n}\sum_{i\in\mathcal{V}}x_i(0)\}$, the average consensus is said to be solved. Let $e_i$, $i=1,\cdots,m$ be the standard orthogonal basis of $\mathbb{R}^m$, $i.e.$, $e_i=(0_{(i-1)},1,0_{(m-i)})^T$. Then $f_i=\frac{1}{\sqrt{n}}\mathbf{1}_n\otimes e_i$, $i=1,\cdots,m$ are the standard orthogonal basis of $M$. Therefore, the orthogonal projection of $x$ onto $M$ is
\[
\begin{split}
\pi_M(x)&=\sum_{i=1}^m\langle x,f_i\rangle f_i\\
&=\sum_{i=1}^m\langle x,\frac{1}{\sqrt{n}}\mathbf{1}_n\otimes e_i\rangle\cdot \frac{1}{\sqrt{n}}\mathbf{1}_n\otimes e_i\\ &=\mathbf{1}_n\otimes\frac{1}{n}\sum_{i\in\mathcal{V}}x_i.
\end{split}
\]
For convenience of the proofs, we set $p=x-\pi_M(x)$ and $q=v-\pi_M(v)$. Hence, consensus is reached if and only if $p\rightarrow0$ and $q\rightarrow0$ as $t\rightarrow\infty$.

\subsection{Useful Lemmas}

For convenience in the proofs of the main results, several lemmas associated with graphs and matrices are listed below.
\begin{lemma}\label{rankL}
If graph $\mathcal{G}=(\mathcal{V},\mathcal{E},\mathcal{A})$ with $\mathcal{V}=\{1,\cdots,n\}$ is connected, then $H_0(L\otimes I_m)=span\{\mathbf{1}_n\otimes r~|~r\in\mathbb{E}\}=M$, where $L$ is the Laplacian matrix of $\mathcal{G}$.
\end{lemma}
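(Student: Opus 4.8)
The plan is to first pin down the kernel of $L$ itself and then lift the conclusion to the Kronecker product $L\otimes I_m$. Since the excerpt already records that $L=\Delta-\mathcal{A}$ is symmetric and positive semi-definite, its zero eigenspace $H_0(L)$ coincides with the null space $\ker L$, so it suffices to describe $\ker L$. The tool I would rely on is the standard quadratic-form identity for an undirected Laplacian: for every $y=(y_1,\dots,y_n)^T\in\mathbb{R}^n$,
\[
y^T L y=\frac12\sum_{i,j}a_{ij}(y_i-y_j)^2,
\]
which follows directly from $\Delta_{ii}=\sum_{j}a_{ij}$ together with the symmetry $a_{ij}=a_{ji}$ guaranteed by $\mathcal{G}$ being undirected. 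This representation re-proves nonnegativity and, more importantly, exposes exactly when the form vanishes.

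Next I would use connectivity to determine $\ker L$ precisely. If $Ly=0$ then $y^T L y=0$, and since every summand $a_{ij}(y_i-y_j)^2$ is nonnegative each must vanish; in particular $y_i=y_j$ whenever $(i,j)\in\mathcal{E}$, i.e. along every edge. Because $\mathcal{G}$ is connected, any two vertices are joined by a path, and propagating the edgewise equalities along such a path forces all coordinates of $y$ to coincide. Hence $y\in\Span\{\mathbf{1}_n\}$. Conversely $L\mathbf{1}_n=0$ since each row of $L$ sums to zero, so $\ker L=\Span\{\mathbf{1}_n\}$, a one-dimensional subspace of $\mathbb{R}^n$.

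Finally I would transfer this to $L\otimes I_m$ via the mixed-product rule: for $y\in\mathbb{R}^n$ and $r\in\mathbb{E}$ one has $(L\otimes I_m)(y\otimes r)=(Ly)\otimes r$, so $y\otimes r$ lies in the kernel exactly when $Ly=0$. Combining this with the fact that the null space of a Kronecker product decomposes as $\ker(L\otimes I_m)=\ker L\otimes\mathbb{R}^m$ gives
\[
H_0(L\otimes I_m)=\ker L\otimes\mathbb{E}=\Span\{\mathbf{1}_n\}\otimes\mathbb{E}=\{\mathbf{1}_n\otimes r~|~r\in\mathbb{E}\}=M,
\]
which is the claim. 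I expect the only genuinely delicate point to be the connectivity step that upgrades the local equalities $y_i=y_j$ on edges to global equality of all coordinates; the positive semi-definiteness and the Kronecker bookkeeping are routine. A dimension count ($\dim\ker L=1$, hence $\dim H_0(L\otimes I_m)=m=\dim M$) offers a convenient consistency check on the final identity.
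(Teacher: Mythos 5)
Your proof is correct, and it follows the same overall skeleton as the paper's --- characterize the zero eigenspace of $L$ and then lift through the Kronecker product --- but it fills in, from first principles, the two steps the paper outsources. Where the paper simply cites \cite{Saber04} for $rank(L)=n-1$ on a connected graph, you re-derive $\ker L=\Span\{\mathbf{1}_n\}$ directly via the quadratic-form identity $y^TLy=\frac{1}{2}\sum_{i,j}a_{ij}(y_i-y_j)^2$ and path propagation; and where the paper argues by one inclusion ($M\subset H_0(L\otimes I_m)$, since $\mathbf{1}_n$ is a zero eigenvector of $L$) plus the dimension count $\dim H_0(L\otimes I_m)=m=\dim M$, you invoke the kernel decomposition $\ker(L\otimes I_m)=\ker L\otimes\mathbb{R}^m$ to obtain equality in one step, relegating the dimension count to a consistency check. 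Both routes are sound; the paper's is shorter, yours is self-contained and makes the role of connectivity visible. One point worth stating explicitly in your connectivity step: from $y^TLy=0$ you only conclude $y_i=y_j$ whenever $a_{ij}>0$, so identifying this condition with ``$(i,j)\in\mathcal{E}$'' relies on the paper's standing convention that weights are strictly positive exactly on edges ($a_{ij}=G_{ij}\alpha(\cdot)$ with $\alpha>0$ under Assumption 1, and $G_{ij}=1$ if and only if $\alpha_{ij}\neq0$ under Assumption 2); with that observation recorded, your path-propagation argument is precisely the standard proof of the rank fact the paper cites, and the rest of your Kronecker bookkeeping is routine and correct.
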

\begin{lemma}(\cite{Cortes06})\label{eigenvalue}
Given a positive semi-definite $d\times d$ matrix $A$, we have $x^TAx\geq\lambda_2(A)||x-\pi_{H_0(A)}(x)||^2$, for any $x\in\mathbb{R}^d$.
\end{lemma}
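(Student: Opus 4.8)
The plan is to reduce the inequality to a one-line comparison of eigenvalue-weighted sums by diagonalizing $A$. Since $A$ is real symmetric and positive semi-definite, the spectral theorem furnishes an orthonormal basis $\{v_1,\dots,v_d\}$ of $\mathbb{R}^d$ consisting of eigenvectors, with $Av_i=\lambda_i(A) v_i$ and $0\le\lambda_1(A)\le\cdots\le\lambda_d(A)$. First I would expand an arbitrary $x\in\mathbb{R}^d$ in this basis, writing $x=\sum_{i=1}^d c_i v_i$ with $c_i=\langle x,v_i\rangle$, so that Parseval's identity gives $\|x\|^2=\sum_i c_i^2$ and a direct computation gives the quadratic form $x^TAx=\sum_{i=1}^d\lambda_i(A)\, c_i^2$.

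The key structural observation is that $H_0(A)$, the eigenspace for the eigenvalue $0$, is exactly the span of those $v_i$ with $\lambda_i(A)=0$; hence the orthogonal projection satisfies $\pi_{H_0(A)}(x)=\sum_{i:\lambda_i(A)=0}c_i v_i$, and therefore $x-\pi_{H_0(A)}(x)=\sum_{i:\lambda_i(A)>0}c_i v_i$ with squared norm $\|x-\pi_{H_0(A)}(x)\|^2=\sum_{i:\lambda_i(A)>0}c_i^2$. Because the eigenvalues contributing zero to $x^TAx$ are precisely those stripped away by the projection, the quadratic form collapses to $x^TAx=\sum_{i:\lambda_i(A)>0}\lambda_i(A)\,c_i^2$, so the desired inequality follows once I show that $\lambda_i(A)\ge\lambda_2(A)$ for every index $i$ with $\lambda_i(A)>0$.

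That last step is where the only real care is needed, and it hinges on $A$ being singular, which holds in all our applications where $A$ is a graph Laplacian and hence has $\mathbf{1}_n$ in its kernel, forcing $\lambda_1(A)=0$. Under $\lambda_1(A)=0$, any index $i$ with $\lambda_i(A)>0$ must satisfy $i\ge 2$, whence $\lambda_i(A)\ge\lambda_2(A)$; if instead $0$ has multiplicity at least two, then $\lambda_2(A)=0$ and the assertion reduces to the trivial bound $x^TAx\ge 0$ guaranteed by positive semi-definiteness. Assembling these pieces yields $x^TAx=\sum_{i:\lambda_i(A)>0}\lambda_i(A)\,c_i^2\ge\lambda_2(A)\sum_{i:\lambda_i(A)>0}c_i^2=\lambda_2(A)\,\|x-\pi_{H_0(A)}(x)\|^2$, as claimed. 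The main obstacle is thus not computational but the bookkeeping of degenerate spectra: one must confirm that the projection removes exactly the kernel directions, so that no eigenvalue smaller than $\lambda_2(A)$ survives in the remaining sum.
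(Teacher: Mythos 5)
The paper offers no proof of this lemma at all---it is imported directly from \cite{Cortes06}---so there is no in-paper argument to compare against; your spectral-decomposition proof is the standard one and it is correct. More importantly, you correctly isolated the one real subtlety: as literally stated for an arbitrary positive semi-definite $A$, the inequality is \emph{false} when $A$ is nonsingular (take $A=\diag(1,2)$ and $x=(1,0)^T$, so that $H_0(A)=\{0\}$, $\pi_{H_0(A)}(x)=0$, and the claim reads $1\geq2$), so the implicit hypothesis is $\lambda_1(A)=0$, i.e., that $0$ is actually an eigenvalue. As you observe, this holds in every application the paper makes of the lemma: $L_x\otimes I_m$ in Theorem \ref{th cu1} and $-\Xi_t\otimes I_m$ in Theorem \ref{th du1} are (polynomials with zero constant term in) Laplacians of connected undirected graphs, hence symmetric with $M$ in their kernel; symmetry is also what licenses your appeal to the spectral theorem, so ``positive semi-definite'' should be read as ``symmetric positive semi-definite'' throughout. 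Two minor remarks: your separate case for $0$ having multiplicity at least two is harmless but redundant, since once $\lambda_1(A)=0$ every index $i$ with $\lambda_i(A)>0$ satisfies $i\geq2$ and hence $\lambda_i(A)\geq\lambda_2(A)$ by the ordering alone, whether or not $\lambda_2(A)=0$; and your bookkeeping that the projection strips exactly the kernel coordinates ($\pi_{H_0(A)}(x)=\sum_{i:\lambda_i(A)=0}c_iv_i$ in an orthonormal eigenbasis) is exactly the point on which the whole lemma turns, and you got it right.
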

\begin{lemma}(\cite{Cucker10})\label{||x||}
For all $x\in\mathbb{E}^n$, $L\in\mathbb{R}^n$ is the Laplacian matrix of a graph, we have:

$(1)$ $||x_i-x_j||=||p_i-p_j||\leq\sqrt{2}||p||$;

$(2)$ $\displaystyle{\frac{1}{2n}\sum_{i\in\mathcal{V}} \sum_{j\in\mathcal{V}} ||x_i-x_j||^2= \frac{1}{2n}\sum_{i\in\mathcal{V}} \sum_{j\in\mathcal{V}} ||p_i-p_j||^2=||p||^2}$;

$(3)$ $x^TLx=\langle x,Lx\rangle=\displaystyle{ \frac{1}{2} \sum_{i\in\mathcal{V}} \sum_{j\in\mathcal{V}}}a_{ij}(x)||x_i-x_j||^2\geq0$.
\end{lemma}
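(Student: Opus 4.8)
The plan is to derive all three parts from a single structural observation: since $p=x-\pi_M(x)$ and $\pi_M(x)=\mathbf{1}_n\otimes\frac1n\sum_{k\in\mathcal{V}}x_k$, the components satisfy $p_i=x_i-\frac1n\sum_{k\in\mathcal{V}}x_k$, so that $\sum_{i\in\mathcal{V}}p_i=0$ (equivalently $p\perp M$). This centering identity is what forces the cross terms to cancel in parts (1) and (2). For part (1), I would first observe that the common offset $\frac1n\sum_k x_k$ cancels in the difference, giving $x_i-x_j=p_i-p_j$ and hence $\|x_i-x_j\|=\|p_i-p_j\|$; the bound then follows from $\|p_i-p_j\|^2\le 2(\|p_i\|^2+\|p_j\|^2)\le 2\sum_{k\in\mathcal{V}}\|p_k\|^2=2\|p\|^2$, the first inequality being the parallelogram bound and the second simply discarding nonnegative terms.

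For part (2) the first equality is immediate from $\|x_i-x_j\|=\|p_i-p_j\|$ established above. For the second I would expand the double sum as $\sum_i\sum_j(\|p_i\|^2-2\langle p_i,p_j\rangle+\|p_j\|^2)$; summing over the free index, the two quadratic terms each contribute $n\|p\|^2$, while the middle term equals $-2\langle\sum_i p_i,\sum_j p_j\rangle$ and vanishes by $\sum_i p_i=0$. Dividing by $2n$ returns exactly $\|p\|^2$.

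For part (3) I would use $L=\Delta-\mathcal{A}$ with $\Delta_{ii}=\sum_j a_{ij}$ to write $x^TLx=\sum_i(\sum_j a_{ij})\|x_i\|^2-\sum_i\sum_j a_{ij}\langle x_i,x_j\rangle$, then expand $\frac12\sum_i\sum_j a_{ij}\|x_i-x_j\|^2$ and invoke the symmetry $a_{ij}=a_{ji}$, so that the $\|x_i\|^2$ and $\|x_j\|^2$ contributions coincide after relabeling indices; the two expressions then match. Nonnegativity is visible term by term since each $a_{ij}\ge0$, or alternatively it follows from the positive semidefiniteness of $L$ already recorded via Gerschgorin's theorem. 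The computations are routine; the nearest thing to an obstacle is purely bookkeeping — tracking the two index summations and using symmetry of $\mathcal{A}$ at the right moment in part (3), and keeping in mind that it is precisely $\sum_i p_i=0$ that annihilates the cross terms in parts (1) and (2).
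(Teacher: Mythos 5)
Your proof is correct: the centering identity $p_i=x_i-\frac{1}{n}\sum_{k\in\mathcal{V}}x_k$ yields (1) and (2) exactly as you compute, and expanding $x^TLx$ via $L=\Delta-\mathcal{A}$ and relabeling with the symmetry $a_{ij}=a_{ji}$ yields (3), with nonnegativity term by term since $a_{ij}\geq0$. The paper gives no proof of this lemma at all --- it is quoted from \cite{Cucker10} and is not among the statements proved in the appendix --- and your argument is the standard verification; the one quibble is your closing remark, since in part (1) it is the cancellation of the common offset in $x_i-x_j=p_i-p_j$ (together with the parallelogram bound) that does the work, the zero-sum property $\sum_{i\in\mathcal{V}}p_i=0$ being needed only to kill the cross terms in part (2).
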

\begin{lemma}\label{graph}
Suppose that the connectivity of graph $\mathcal{G}$ is $\kappa(\mathcal{G})=k^*>0$, then there exist at least $k^*$ disjoint paths between any different vertices.
\end{lemma}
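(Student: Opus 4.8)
The plan is to recognize this statement as the global (vertex) version of Menger's theorem, also attributed to Whitney, and to derive it from the local form of Menger's theorem together with the definition of $\kappa(\mathcal{G})$. The key tool I would invoke is the local vertex-Menger theorem: for any two \emph{non-adjacent} vertices $i$ and $j$, the maximum number of internally vertex-disjoint paths between $i$ and $j$ equals the minimum cardinality of an $i,j$-cut. With this in hand, the lemma reduces to comparing the size of the smallest $i,j$-cut with $k^*$, and the two cases (adjacent versus non-adjacent $i,j$) are handled separately.

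First I would treat the case where $i$ and $j$ are not adjacent. Let $S$ be any $i,j$-cut, so that $\mathcal{G}-S$ contains no path between $i$ and $j$. Then $\mathcal{G}-S$ is disconnected, since it still contains the mutually separated vertices $i$ and $j$, and by the definition of connectivity this forces $|S|\geq\kappa(\mathcal{G})=k^*$. Hence every $i,j$-cut has size at least $k^*$, so the minimum $i,j$-cut has cardinality at least $k^*$. Applying the local Menger theorem then yields at least $k^*$ internally disjoint paths between $i$ and $j$.

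Next I would handle the case where $i$ and $j$ are adjacent, say joined by the edge $e=(i,j)$; here no $i,j$-cut exists, so the previous argument cannot be applied directly. The idea is to delete $e$ and form $\mathcal{G}'=\mathcal{G}-e$. Since deleting a single edge lowers the vertex-connectivity by at most one, we have $\kappa(\mathcal{G}')\geq k^*-1$. In $\mathcal{G}'$ the vertices $i$ and $j$ are non-adjacent, so the argument of the previous paragraph produces at least $k^*-1$ internally disjoint $i$-$j$ paths in $\mathcal{G}'$, none of which uses the edge $e$. Restoring $e$ and counting it as one additional path, internally disjoint from the others because its only endpoints are $i$ and $j$, gives $k^*$ internally disjoint paths between $i$ and $j$ in $\mathcal{G}$.

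The step I expect to be the main obstacle is the adjacent case: specifically the claim that removing one edge reduces $\kappa(\mathcal{G})$ by at most one, and the careful verification that appending $e$ preserves internal disjointness. Everything else follows mechanically once the local Menger theorem is granted. If a fully self-contained treatment were required, the real work would shift into proving the local Menger theorem itself, via an augmenting-path or max-flow–min-cut argument, but for the purposes of this paper it suffices to cite it as a standard result.
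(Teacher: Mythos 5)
Your proposal is correct and follows essentially the same route as the paper: both split into the non-adjacent and adjacent cases, invoke the local vertex form of Menger's theorem for the former, and handle the latter by deleting the edge $(i,j)$ and using the fact that edge deletion lowers $\kappa(\mathcal{G})$ by at most one. The only difference is presentational --- you argue directly, restoring the deleted edge as the $k^*$-th internally disjoint path, while the paper runs the same two cases as a proof by contradiction on the maximum number $l<k^*$ of disjoint paths.
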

\begin{lemma}\label{n-1}
If graph $\mathcal{G}$ is not connected, then there exist at least $n-1$ pairs of disconnected nodes in the graph.
\end{lemma}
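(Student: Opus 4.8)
The plan is to count disconnected pairs directly via the connected-component structure. Since $\mathcal{G}$ is not connected, its vertex set decomposes into $c \geq 2$ connected components of sizes $n_1, \ldots, n_c$ with $n_k \geq 1$ and $\sum_{k=1}^c n_k = n$. Two vertices $i$ and $j$ form a disconnected pair exactly when they belong to different components, since within a component there is always a path and across components there is none. Hence, writing $D$ for the number of disconnected pairs, the total number of unordered pairs splits as $\binom{n}{2} = D + \sum_{k=1}^c \binom{n_k}{2}$, so that $D = \binom{n}{2} - \sum_{k=1}^c \binom{n_k}{2}$. The whole lemma therefore reduces to the inequality $\sum_{k=1}^c \binom{n_k}{2} \leq \binom{n-1}{2}$, because $\binom{n}{2} - \binom{n-1}{2} = n - 1$.

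To establish this inequality I would use the elementary superadditivity estimate $\binom{a}{2} + \binom{b}{2} \leq \binom{a+b-1}{2}$, valid for all integers $a, b \geq 1$; a one-line computation shows the difference of the two sides equals $(a-1)(b-1) \geq 0$. Applying this to merge the components two at a time, replacing $n_1, n_2$ by the single value $n_1 + n_2 - 1$, then absorbing $n_3$, and so on, I obtain after $c-1$ steps $\sum_{k=1}^c \binom{n_k}{2} \leq \binom{n - (c-1)}{2}$. Since $c \geq 2$ forces $n - (c-1) \leq n - 1$ and $m \mapsto \binom{m}{2}$ is nondecreasing, this yields the required bound $\sum_{k=1}^c \binom{n_k}{2} \leq \binom{n-1}{2}$, and hence $D \geq n-1$.

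Intuitively the estimate is sharp: the extreme case is a single isolated vertex together with a connected block of $n-1$ vertices, which produces exactly $n-1$ disconnected pairs, so no bound stronger than $n-1$ is available. The only place demanding care is the reduction step, where one must check that the most concentrated partition (one tiny component) is what minimizes the number of cross-component pairs; I expect this to be the main, though still routine, obstacle, and the inequality $(a-1)(b-1)\geq 0$ is exactly what makes it go through. An equivalent route, if one prefers to avoid binomial coefficients, is to write $D = \tfrac{1}{2}\big(n^2 - \sum_k n_k^2\big)$ and maximize $\sum_k n_k^2$ using the same inequality in the form $a^2 + b^2 \leq (a+b-1)^2 + 1$.
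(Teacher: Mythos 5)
Your proposal is correct and follows essentially the same route as the paper: both count the disconnected pairs as $\binom{n}{2}$ minus the within-component pairs and then argue by merging components pairwise. The only (cosmetic) difference is in the merging step --- the paper uses $\binom{a+b}{2}>\binom{a}{2}+\binom{b}{2}$ to reduce to exactly two components and then minimizes $n_1(n-n_1)\geq n-1$, whereas your superadditivity bound $\binom{a}{2}+\binom{b}{2}\leq\binom{a+b-1}{2}$, with difference $(a-1)(b-1)\geq 0$, lands directly on $\sum_k\binom{n_k}{2}\leq\binom{n-1}{2}$ and dispenses with the separate two-component optimization.
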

The relevant proofs will be stated in Section 7.

\section{Consensus of Continuous-time Multi-agent Systems}

The consensus problem of continuous-time multi-agent systems has been studied in many previous works. In this section, we employ the consensus protocols widely used before and assume the information transmission between the agents becomes state-dependent. It will be shown that a very different result emerges due to this change.

\subsection{Continuous State-dependent Transmission Weight}
We consider two classes of systems with state-dependent information transmission. The first case is of fixed connectivity in communication topology, which implies that $G$ and $\kappa(\mathcal{G})$ are invariant. The communication weight between agents $i$ and $j$ is set as $a_{ij}=G_{ij}\alpha(||x_i-x_j||^2)$, where $\alpha(s)$ is a positive function which decays as the increasing of $s$. Therefore, for agent $i$, the information that it receives from agent $j$ can be denoted by $G_{ij}\alpha(||x_i-x_j||^2)(x_i-x_j)$.
We have the following assumption on $\alpha(\cdot)$.

\emph{Assumption 1}: $\alpha(\cdot): \mathbb{R}_{\geq0}\rightarrow \mathbb{R}_{>0}$ is continuous and nonincreasing, $\alpha(0)<\infty$.

In the second case, the connectivity of communication graph $\mathcal{G}=(\mathcal{V}, \mathcal{E}, \mathcal{A})$ is entirely dependent on the states of all the agents. More specifically, the communication weight between $i$ and $j$ is $a_{ij}=G_{ij}\alpha(||x_i-x_j||^2)= \alpha(||x_i-x_j||^2)$, because $G_{ij}=1$ if and only if $\alpha(||x_i-x_j||^2)\neq0$. $\alpha(\cdot)$ is under the following assumption.

\emph{Assumption 2}: $\alpha(\cdot): \mathbb{R}_{\geq0}\rightarrow \mathbb{R}_{\geq0}$ is continuous and nonincreasing, $\alpha(0)<\infty$, $\alpha(s)>0$ if $s<R^2$, $\alpha(s)=0$ if $s\geq R^2$, where $R\in\mathbb{R}_{>0}$ is a constant.

For simplicity, we denote $\alpha(||x_i-x_j||^2)$ by $\alpha_{ij}(x)$ in the rest of the paper.

We study continuous-time systems in this section. Assumption 1 and Assumption 2 will be performed respectively. It is shown that when the nonlinear weight is coupled with the state difference, a number of characteristics of these systems will emerge.

\subsection{Consensus with Fixed Connectivity of Networks}

In the case of fixed connectivity, a very long distance between a pair of agents may cause their information transmission becoming slight and cannot work effectively. For reaching consensus, we hope to obtain a bound of the distance between any agents. In the results, we will see that the boundedness of $||p||$ is the key to solve the consensus problem. Once $||p||$ is guaranteed to be bounded, the following lemma shows that the algebraic connectivity of the communication graph, written by $\lambda_2(L)$, has a nonzero lower bound. The corresponding proof is presented in Section 7.
\begin{lemma}\label{bound}
Under Assumption 1. For any $t\geq0$, if $||p(t)||$ is upper bounded, and the communication topology is connected, then $\lambda_2(L_x)$ has a nonzero lower bound.
\end{lemma}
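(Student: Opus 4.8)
The plan is to reduce the state-dependent matrix $L_x$ to a fixed reference Laplacian whose algebraic connectivity is already positive, and to transfer a uniform lower bound on the edge weights into a lower bound on $\lambda_2(L_x)$. First I would let $B$ be a uniform upper bound for $||p(t)||$. By Lemma \ref{||x||}$(1)$ this gives $||x_i-x_j||=||p_i-p_j||\leq\sqrt{2}||p||\leq\sqrt{2}B$ for every pair $i,j$, so $||x_i-x_j||^2\leq 2B^2$. Since $\alpha(\cdot)$ is nonincreasing and strictly positive by Assumption 1, every edge weight obeys the uniform bound
\[
\alpha_{ij}(x)=\alpha(||x_i-x_j||^2)\geq\alpha(2B^2)=:c>0,
\]
where $c$ is a constant that does not depend on $t$.

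Next I would introduce the unweighted Laplacian $L_0$ of the fixed graph $\mathcal{G}$, obtained by setting all existing edge weights equal to one. Because $G$ and $\kappa(\mathcal{G})$ are invariant and $\mathcal{G}$ is connected, $L_0$ is a constant matrix with $\lambda_2(L_0)>0$, and by Lemma \ref{rankL} (with $m=1$) its kernel is $\Span\{\mathbf{1}_n\}$. Using the quadratic-form identity of Lemma \ref{||x||}$(3)$, for any $y\in\mathbb{R}^n$ one has
\[
y^TL_xy=\frac{1}{2}\sum_{i\in\mathcal{V}}\sum_{j\in\mathcal{V}}G_{ij}\,\alpha_{ij}(x)\,(y_i-y_j)^2\geq\frac{c}{2}\sum_{i\in\mathcal{V}}\sum_{j\in\mathcal{V}}G_{ij}\,(y_i-y_j)^2=c\,y^TL_0y.
\]

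Finally I would pass from the quadratic forms to the eigenvalues through the Courant--Fischer (Rayleigh-quotient) characterization of $\lambda_2$. Since all weights $\alpha_{ij}(x)$ are positive and $\mathcal{G}$ is connected, $L_x$ is also a Laplacian of the same connected graph and hence, again by Lemma \ref{rankL}, has kernel $\Span\{\mathbf{1}_n\}$; therefore $\lambda_2$ of either matrix equals the minimum of its Rayleigh quotient over $y\perp\mathbf{1}_n$. Combining this with the displayed pointwise inequality gives
\[
\lambda_2(L_x)=\min_{\substack{y\perp\mathbf{1}_n\\ y\neq0}}\frac{y^TL_xy}{||y||^2}\geq c\min_{\substack{y\perp\mathbf{1}_n\\ y\neq0}}\frac{y^TL_0y}{||y||^2}=c\,\lambda_2(L_0)=\alpha(2B^2)\,\lambda_2(L_0)>0,
\]
a bound independent of $t$, which is the desired conclusion. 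The only delicate point is the last step: one must verify that $L_x$ and $L_0$ genuinely share the kernel $\Span\{\mathbf{1}_n\}$, so that the constrained minimization correctly characterizes $\lambda_2$ for both matrices. This is precisely where connectivity of $\mathcal{G}$ together with the strict positivity of $\alpha$ enters, and I do not anticipate any obstacle beyond this bookkeeping; the essential content is the uniform positive lower bound $c=\alpha(2B^2)$ on the edge weights furnished by the boundedness of $||p||$.
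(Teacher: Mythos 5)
Your proof is correct and takes essentially the same route as the paper's: both extract the uniform edge-weight lower bound (your $c=\alpha(2B^2)$, the paper's $\alpha(B)$) from the boundedness of $||p||$ via Lemma \ref{||x||}, compare the quadratic form of $L_x$ with that of the fixed unweighted Laplacian $\bar{L}$ (your $L_0$, with adjacency $G$), and conclude $\lambda_2(L_x)\geq c\,\lambda_2(\bar{L})>0$ using connectivity. The only cosmetic difference is that you invoke the Courant--Fischer minimization over $y\perp\mathbf{1}_n$, while the paper plugs the $\lambda_2$-eigenvector of $L_x$ directly into the Rayleigh quotient of $\bar{L}$ --- the identical estimate, with your version being slightly more careful about the shared kernel and about $\alpha$ acting on the squared distance.
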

Consider a group of agents with dynamics (\ref{ct1}), the protocol in \cite{Saber04} is studied:
\begin{equation}\label{cu1}
u_i=\sum\limits_{j=1}^nG_{ij} \alpha_{ij}(x)(x_j-x_i).
\end{equation}
\begin{theorem}\label{th cu1}
Consider a system consisting of $n$ agents with dynamics (\ref{ct1}). Under Assumption 1, protocol (\ref{cu1}) globally asymptotically solves the average consensus problem if the communication topology is connected.
\end{theorem}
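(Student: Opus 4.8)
The plan is to rewrite the closed loop in the compact form $\dot{x} = -L_x x$ and to track the disagreement vector $p = x - \pi_M(x)$ with a quadratic Lyapunov function. First I would verify that the average is conserved: because $G$ and $\mathcal{A}$ are symmetric, the terms $G_{ij}\alpha_{ij}(x)(x_j - x_i)$ cancel pairwise when summed over all $i$, so $\frac{d}{dt}\sum_{i\in\mathcal{V}} x_i = 0$ and hence $\pi_M(x)$ is constant in time. Consequently $\dot{p} = \dot{x} = -L_x x$, and since $\pi_M(x)\in M = H_0(L_x\otimes I_m)$ by Lemma \ref{rankL}, we have $L_x\pi_M(x)=0$, so the disagreement dynamics reduce to $\dot{p} = -L_x p$. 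If I can show $p\to 0$, then $x\to\pi_M(x)(0) = \mathbf{1}_n\otimes\frac{1}{n}\sum_{i\in\mathcal{V}}x_i(0)$, which is exactly average consensus.

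Next I would take $V = \frac{1}{2}\|p\|^2$ as the Lyapunov candidate. Differentiating along trajectories gives $\dot{V} = \langle p, \dot{p}\rangle = -p^T L_x p$, which is $\leq 0$ by Lemma \ref{||x||}(3) since $L_x$ is positive semi-definite. This immediately yields that $\|p(t)\|$ is nonincreasing, and hence $\|p(t)\|\leq\|p(0)\|$ for all $t\geq 0$, a uniform-in-time bound on the spread of the agents.

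The decisive step is to upgrade $\dot{V}\leq 0$ to strict exponential decay. Because $\|p(t)\|$ is bounded and the topology is connected, Lemma \ref{bound} supplies a constant $c>0$ with $\lambda_2(L_x)\geq c$ for all $t$. I then invoke Lemma \ref{eigenvalue} with $A = L_x$, obtaining $p^T L_x p\geq\lambda_2(L_x)\|p-\pi_{H_0(L_x)}(p)\|^2$. Since $H_0(L_x)=M$ by Lemma \ref{rankL} and $p$ is by construction orthogonal to $M$, so that $\pi_M(p)=0$, this becomes $p^T L_x p\geq c\|p\|^2 = 2cV$. Therefore $\dot{V}\leq -2cV$, and the comparison/Gr\"onwall inequality gives $V(t)\leq V(0)e^{-2ct}\to 0$, so $\|p\|\to 0$ and average consensus is reached globally, in fact exponentially. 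The main obstacle is the uniform lower bound on $\lambda_2(L_x)$: the weights $\alpha_{ij}(x)$ may become arbitrarily small as agents drift apart, so a priori $\lambda_2(L_x)$ could tend to zero and stall the convergence. The resolution is that the monotonicity of $V$ caps the spread of the agents once and for all, which is precisely the hypothesis Lemma \ref{bound} needs to bound $\lambda_2(L_x)$ away from zero; establishing that monotonicity \emph{before} appealing to the connectivity lemma is what keeps the argument free of circularity.
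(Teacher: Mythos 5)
Your proposal is correct and follows essentially the same route as the paper's own proof: the Lyapunov function $V=\frac{1}{2}\|p\|^2$, monotonicity of $V$ to bound $\|p\|$ by $\|p(0)\|$, Lemma \ref{bound} for a uniform lower bound $c>0$ on $\lambda_2(L_x)$, and Lemmas \ref{rankL} and \ref{eigenvalue} to conclude $\dot{V}\leq-c\|p\|^2$, with conservation of the average via symmetry of $L_x$. Your only additions --- making the reduction $\dot{p}=-L_xp$ and the non-circularity of the ordering explicit, and closing with Gr\"onwall to obtain an exponential rate --- are sound refinements of the same argument rather than a different approach.
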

\begin{proof}It is easy to see that $p$ satisfies the same differential function as $x$ does. Consider the Lyapunov function $V(p)=\frac{1}{2}||p||^2$, the positive definiteness of $V(p)$ obviously holds. Since graph $\mathcal{G}$ is connected, together with Lemma \ref{rankL}, it follows that $H_0(L_x\otimes I_m)=M$. According to Lemma \ref{eigenvalue}, we have $\dot{V}=-p^TL_xp\leq-\lambda_2(L_x\otimes I_m)||p-\pi_M(p)||^2=-\lambda_2(L_x)||p||^2$. Thus, $V(p(t))\leq V(p(0))$ for any $t\geq0$, implying that $||p||$ is bounded by $||p(0)||$. From Lemma \ref{bound}, there exists a constant $c>0$, such that $\lambda_2(L_x)\geq c$. Consequently, $\dot{V}\leq-c||p||^2$. That is, $\dot{V}$ is negative definite. Together with the radial unboundedness of $V$, $p$ globally asymptotically converge to $0$. Due to the fact that graph $\mathcal{G}$ is undirected, we have the symmetry of $L_x$, then $\sum_{i\in\mathcal{V}}\dot{x}_i(t)=0$ for $t\geq0$. Therefore, $\lim\limits_{t\rightarrow\infty}x_i(t)= \frac{1}{n}\sum_{i\in\mathcal{V}}x_i(0)$ for any $i\in\mathcal{V}$. That is, all the agents globally asymptotically achieve the average consensus.
\end{proof}

For agents with dynamics (\ref{ct2}), we first study the static consensus protocol in \cite{Xie07}:
\begin{equation}\label{cu21}
u_i=-kv_i+ \sum\limits_{j=1}^nG_{ij}\alpha_{ij}(x)(x_j-x_i),
\end{equation}
where $k>0$ is the feedback gain of agent $i$.
\begin{theorem}\label{th cu21}
Consider a system consisting of $n$ agents with dynamics (\ref{ct2}). Under Assumption 1, protocol (\ref{cu21}) globally asymptotically solves the consensus problem if the communication topology is connected. Specifically, if the sum of the initial velocity of each agent is zero, the average consensus problem is solved.
\end{theorem}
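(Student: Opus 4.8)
The plan is to reduce the dynamics to the consensus-error variables $p=x-\pi_M(x)$ and $q=v-\pi_M(v)$ and then to exhibit an energy function that is monotone along trajectories. First I would note that $\frac{d}{dt}\pi_M(x)=\mathbf{1}_n\otimes\frac1n\sum_i v_i=\pi_M(v)$, so that $\dot p=\dot x-\pi_M(v)=q$. For the velocity error the key observation is that the coupling term sums to zero: because $\mathcal G$ is undirected, $\sum_i\sum_j G_{ij}\alpha_{ij}(x)(x_j-x_i)=0$, hence $\frac{d}{dt}\sum_i v_i=-k\sum_i v_i$ and $\frac{d}{dt}\pi_M(v)=-k\pi_M(v)$. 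Since $L_x$ annihilates $M$ (Lemma \ref{rankL}), one has $L_x x=L_x p$, and a short computation yields the reduced system $\dot p=q$, $\dot q=-kq-L_x p$, with consensus equivalent to $(p,q)\to 0$.

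Next I would build a potential for the position coupling. Setting $\Phi(s)=\int_0^s\alpha(\tau)\,d\tau$ (well defined and nonnegative under Assumption 1) and $U=\tfrac14\sum_i\sum_j G_{ij}\Phi(\|x_i-x_j\|^2)$, the symmetry of $G$ together with the chain rule gives $\nabla_x U=L_x x=L_x p$; moreover $U$ depends on $x$ only through the differences $x_i-x_j=p_i-p_j$, so $U$ is in fact a function of $p$ alone. I would then take $V=\tfrac12\|q\|^2+U$. Differentiating and using $\langle L_x p,\pi_M(v)\rangle=\langle p,L_x\pi_M(v)\rangle=0$ gives $\dot U=\langle L_x p,q\rangle$ and $\langle q,\dot q\rangle=-k\|q\|^2-\langle q,L_x p\rangle$, so the cross terms cancel and $\dot V=-k\|q\|^2\le 0$. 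Thus $V$ is nonincreasing, and $\|q(t)\|$ and $U(p(t))$ remain bounded by their initial values.

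The main obstacle is to pass from boundedness of $U$ to boundedness of $\|p\|$, which is exactly the hypothesis Lemma \ref{bound} needs in order to supply a uniform bound $\lambda_2(L_x)\ge c>0$. Because $\mathcal G$ is connected, $\|p\|\to\infty$ forces $\|x_i-x_j\|\to\infty$ across some edge (any two vertices are joined by a path of at most $n-1$ edges, so the maximal edge length dominates the maximal pairwise distance up to a factor $n-1$); hence whenever the primitive $\Phi$ is radially unbounded, the bound on $U$ bounds $\|p\|$. This is precisely the step where the decay rate of $\alpha$ enters: a too-rapidly decaying weight produces a finite potential well and must instead be compensated by a restriction on the initial energy. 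Granting boundedness of $(p,q)$, I would invoke LaSalle's invariance principle: on the set $\dot V=0$ we have $q\equiv 0$, whence $\dot q=0$ forces $L_x p=0$, i.e. $p\in H_0(L_x\otimes I_m)=M$ by Lemma \ref{rankL}; since $p\perp M$ this gives $p=0$. Therefore the largest invariant set is $\{p=0,\,q=0\}$ and $(p,q)\to 0$, i.e. consensus is achieved.

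Finally, for the average-consensus claim I would integrate the two scalar averages directly. From $\frac{d}{dt}\sum_i v_i=-k\sum_i v_i$ we get $\frac1n\sum_i v_i(t)=e^{-kt}\,\frac1n\sum_i v_i(0)\to 0$, so the velocities reach consensus at $0$; integrating $\frac{d}{dt}\frac1n\sum_i x_i=\frac1n\sum_i v_i$ then shows the common limit of the positions is $\frac1n\sum_i x_i(0)+\frac1k\,\frac1n\sum_i v_i(0)$. In particular, if $\sum_i v_i(0)=0$ this limit equals the initial average $\frac1n\sum_i x_i(0)$, which is exactly average consensus.
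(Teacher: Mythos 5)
Your reduction to $(p,q)$, the identity $\dot V=-k\|q\|^2$ for $V=\tfrac12\|q\|^2+U$, the LaSalle step, and the computation of the consensus value (including the average-consensus case, which matches the paper's conserved quantity $\sum_i v_i+k\sum_i x_i$) are all correct. But there is a genuine gap exactly where you flag it yourself: your energy function controls $\|p\|$ only when the primitive $\Phi(s)=\int_0^s\alpha(\tau)\,d\tau$ is radially unbounded, i.e.\ $\int_0^\infty\alpha(s)\,ds=\infty$, and otherwise you fall back on ``a restriction on the initial energy.'' Theorem \ref{th cu21} asserts neither: it claims \emph{global} consensus under Assumption 1 alone, e.g.\ for $\alpha(s)=H(1+s)^{-\beta}$ with $\beta>1$, where the potential well is finite and your $V$ says nothing about $\|p\|$ for large initial data. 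In effect you have proved the analogue of Theorem \ref{th cu22} and Corollary \ref{co cu22}, which concern the undamped protocol (\ref{cu22}) and do require exactly those extra hypotheses --- not Theorem \ref{th cu21}.

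The missing idea is to exploit the absolute damping term $-kv_i$ in protocol (\ref{cu21}) by adding a cross term to the energy. The paper takes
\begin{equation*}
V(x,v)=\|kx+v\|^2+\|v\|^2+\sum_{i\in\mathcal{V}}\sum_{j\in\mathcal{V}}\int_0^{\|x_i-x_j\|^2}G_{ij}\alpha(s)\,ds,
\end{equation*}
whose derivative along trajectories is $\dot V=-2kx^TL_xx-2k\|v\|^2\le0$. The crucial point is that the quadratic part alone satisfies $V\ge\tfrac19\min\{k^2,1\}(\|x\|^2+\|v\|^2)$, so sublevel sets are compact for \emph{every} initial condition and every $\alpha$ satisfying Assumption 1; no radial unboundedness of the potential is needed. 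LaSalle then gives $v\to0$ and $x^TL_xx\to0$, hence $x\to M$ by Lemma \ref{rankL}. Note also that with damping the full velocity $v$, not merely $q$, vanishes --- consistent with your observation $\frac1n\sum_iv_i(t)=e^{-kt}\frac1n\sum_iv_i(0)$ --- which is why the paper can work directly in $(x,v)$ rather than the error coordinates. If you wish to keep your $(p,q)$ formulation, the same fix goes through: with your normalization of $U$, take $V=\|kp+q\|^2+\|q\|^2+4U$ and a short computation using $\dot p=q$, $\dot q=-kq-L_xp$ gives $\dot V=-2kp^TL_xp-2k\|q\|^2$, again with compact sublevel sets independent of the decay rate of $\alpha$.
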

\begin{proof} Consider the following Lyapunov function candidate.
\begin{equation}\label{lyapunov cu21}
V(x,v)=||kx+v||^2+||v||^2+\sum\limits_{i\in\mathcal{V}} \sum\limits_{j\in\mathcal{V}}\int_{0}^{||x_i-x_j||^2} a_{ij}(s)ds,
\end{equation}
where $a_{ij}(s)=G_{ij}\alpha(s)$. Let $||x||^2+||v||^2\rightarrow \infty$, one has $\sqrt{V(x,v)}\geq||kx+v||$, and $\sqrt{V(x,v)}\geq||v||$. Then $3\sqrt{V(x,v)}\geq||kx||-||v||+2||v||= ||kx||+||v||$, it follows that
$$
V(x,v)\geq\frac{1}{9}(||kx||+||v||)^2\geq \frac{1}{9}\min\{k^2,1\}(||x||^2+||v||^2) \rightarrow\infty.
$$
Then the radial unboundedness of $V(x,v)$ follows. The derivative of $V(x,v)$ along the trajectories of the agents is given by
\[
\begin{split}
\dot{V}&=2(kx+v)^T(kv-kv-L_xx)+2v^T(-kx-L_xx)\\
&~~~~+2\sum\limits_{i\in\mathcal{V}} \sum\limits_{j\in\mathcal{V}}G_{ij}\alpha_{ij}(x)(x_i-x_j)^T(v_i-v_j)\\
&=-2kx^TL_xx-2kv^Tv\leq0.
\end{split}
\]
Therefore, $\Omega(x,v)= \{x,v~|~V(x(t),v(t))\leq V(x(0),v(0))\}$ is positively invariant. Since $V$ is continuous, $V^{-1}[0,V(x(0),v(0))]$ is closed. Together with the radial unboundedness of $V$, $\Omega$ is bounded and thus is compact. By employing LaSalle's invariance principle, $\dot{V}\rightarrow0$ as $t\rightarrow\infty$, and since graph $\mathcal{G}$ is connected, together with Lemma \ref{rankL}, $x$ will evolve into $M$, and $v\rightarrow0$ as $t\rightarrow\infty$. That is, the position states of all the agents globally asymptotically reach consensus and the velocity of them vanish to zero in the end.

Moreover, let $U(x,v)= \sum_{i\in\mathcal{V}}v_i+ k\sum_{i\in\mathcal{V}}x_i$. Then $\dot{U}=-k\sum_{i\in\mathcal{V}}v+ \sum_{i\in\mathcal{V}}\sum_{j\in\mathcal{V}}(x_j-x_i) +k\sum_{i\in\mathcal{V}}v=0$. That is, $U(x^*,v^*)= U(x(0),v(0))$, where $x^*$ is the consensus position state of each agent. Therefore, it can be obtained that $\displaystyle{x^*= \frac{\sum_{i\in\mathcal{V}}v_i(0)+ k\sum_{i\in\mathcal{V}}x_i(0)}{nk}}$.

If $\sum_{i\in\mathcal{V}}v_i(0)=0$, it is easy to obtain $x^*=\frac{1}{n}\sum_{i\in\mathcal{V}}x_i$, which implies that the average consensus is achieved.
\end{proof}

Now we consider the dynamic consensus protocol proposed in \cite{Ren07}:
\begin{equation}\label{cu22}
u_i=\sum\limits_{j=1}^nG_{ij} \alpha_{ij}(x)(v_j-v_i)+ \sum\limits_{j=1}^nG_{ij} \alpha_{ij}(x)(x_j-x_i).
\end{equation}
Protocol (\ref{cu21}) makes the velocity of each agent gradually vanish to zero for arbitrary initial value, and thus always keeps the distance between any two agents constant in the steady state even if consensus is not reached. Hence, the compactness of $\Omega$ can be unconditionally guaranteed, and note that $||p||$ is also bounded. However, each agent applying protocol (\ref{cu22}) may obtain a nonzero velocity in the steady state, the distance between agents may be unbounded ($||p||$ will also be unbounded). To achieve global convergence, a condition of $\alpha(\cdot)$ is required to be appended.
\begin{theorem}\label{th cu22}
Consider a system consisting of $n$ agents with dynamics (\ref{ct2}). Under Assumption 1, suppose $\int_0^{\infty}\alpha(s)ds=\infty$, protocol (\ref{cu22}) globally asymptotically solves the consensus problem if the communication topology is connected.
\end{theorem}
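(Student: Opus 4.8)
The plan is to work in the reduced coordinates $p=x-\pi_M(x)$ and $q=v-\pi_M(v)$, exactly as the consensus criterion $p\to0$, $q\to0$ demands. First I would observe that because $\mathcal{G}$ is undirected, $L_x$ is symmetric and $\sum_{i\in\mathcal{V}}u_i=0$, so $\sum_{i\in\mathcal{V}}v_i$ is conserved and $\pi_M(v)$ is constant in time. Using $L_x\pi_M(x)=L_x\pi_M(v)=0$, the closed loop then collapses to the autonomous system $\dot p=q$, $\dot q=-L_xp-L_xq$, where $L_x=L(p)$ depends on the state only through the differences $p_i-p_j=x_i-x_j$. The steady velocity $\pi_M(v)=\frac1n\sum_{i\in\mathcal{V}}v_i(0)$ need not vanish, which is precisely why $\|p\|$ is not automatically bounded and an extra hypothesis on $\alpha$ is needed.

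Next I would introduce the energy-type Lyapunov function
$$V=\frac12\|q\|^2+\frac14\sum_{i\in\mathcal{V}}\sum_{j\in\mathcal{V}}\int_0^{\|x_i-x_j\|^2}a_{ij}(s)\,ds,\qquad a_{ij}(s)=G_{ij}\alpha(s),$$
combining a kinetic term with the interaction potential. A direct computation, using $\frac{d}{dt}\|x_i-x_j\|^2=2(x_i-x_j)^T(v_i-v_j)$ and Lemma \ref{||x||}(3) in polarized form to identify the potential derivative with $v^TL_xx=q^TL_xp$, yields the clean cancellation $\dot V=q^T(-L_xp-L_xq)+q^TL_xp=-q^TL_xq\le0$. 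Thus $V$ is nonincreasing along trajectories.

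The crux, and the step I expect to be the main obstacle, is establishing that $V$ is radially unbounded in $(p,q)$; this is exactly where the hypothesis $\int_0^{\infty}\alpha(s)\,ds=\infty$ enters. Radial unboundedness in $q$ is immediate from the $\frac12\|q\|^2$ term, so the work is to show the potential term is coercive in $p$. If $\|p\|\to\infty$, then by Lemma \ref{||x||}(2) $\max_{i,j}\|x_i-x_j\|\to\infty$; taking a path of length at most $n-1$ joining the extremal pair (connectivity) and applying the triangle inequality forces some edge $(a,b)\in\mathcal{E}$ to satisfy $\|x_a-x_b\|\to\infty$, whence $\int_0^{\|x_a-x_b\|^2}\alpha(s)\,ds\to\int_0^{\infty}\alpha(s)\,ds=\infty$. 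Since every summand is nonnegative, the potential diverges and $V\to\infty$. Consequently the sublevel set $\{V\le V(0)\}$ is compact and positively invariant, $\|p(t)\|$ stays bounded, and Lemma \ref{bound} supplies a constant $c>0$ with $\lambda_2(L_x)\ge c$.

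Finally I would invoke LaSalle's invariance principle on this compact invariant set. On $\{\dot V=0\}$ one has $q^TL_xq=0$; since $q\in M^\perp$, Lemma \ref{eigenvalue} together with $\lambda_2(L_x)\ge c$ gives $q^TL_xq\ge c\|q\|^2$, forcing $q=0$. Substituting $q=0$ into $\dot q=0$ gives $L_xp=0$; connectivity and Lemma \ref{rankL} yield $H_0(L_x\otimes I_m)=M$, so $p\in M^\perp$ with $L_xp=0$ forces $p=0$. Hence the largest invariant subset of $\{\dot V=0\}$ is $\{p=0,\,q=0\}$, and LaSalle gives $p\to0$, $q\to0$, i.e. consensus is solved.
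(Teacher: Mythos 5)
Your proposal is correct and takes essentially the same route as the paper's own proof: the same kinetic-plus-potential energy function (up to an overall factor of $2$), the same cancellation $\dot V=-q^TL_xq\le0$, the same path/triangle-inequality argument exploiting $\int_0^\infty\alpha(s)\,ds=\infty$ to bound $\|p\|$ and obtain compactness of the sublevel set, and the same LaSalle conclusion. Your explicit identification of the largest invariant set (first $q=0$ from $q^TL_xq=0$ with $q\in M^\perp$, then $L_xp=0$ forcing $p\in M\cap M^\perp=\{0\}$) merely spells out a step the paper treats more briefly via Lemma \ref{rankL}.
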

\begin{proof} It is clear that $x$ and $v$ in system (\ref{ct2}) with (\ref{cu22}) can be replaced by $p$ and $q$. Consider the following energy-like function
\begin{equation}\label{lyapunov cu22}
V(p,q)=||q||^2+\frac{1}{2}\sum\limits_{i\in\mathcal{V}} \sum\limits_{j\in\mathcal{V}}\int_{0}^{||p_i-p_j||^2} a_{ij}(s)ds.
\end{equation}
Differentiating $V(p,q)$ along the trajectories of agents, one has
\[
\begin{split}
\dot{V}(p,q)&=2q^T(-L_xp-L_xq)+\sum\limits_{i\in\mathcal{V}} \sum\limits_{j\in\mathcal{V}}G_{ij}\alpha_{ij}(p)(p_i-p_j)^T(q_i-q_j) \\
&=-2q^TL_xq\leq0.
\end{split}
\]
Then the set $\Omega=\{p,q| V(p,q)\leq V(p(0),q(0))\}$ is positively invariant. Before employing LaSalle's invariance principle, it is necessary to prove the compactness of $\Omega$. It is clear that $||q(t)||$ is bounded by $V(p(0),q(0))$ for any $t\geq0$. Suppose $||p(t)|| \rightarrow \infty$ as $t\rightarrow t^*$, $t^*>0$ ($t^*$ can be infinite). From Lemma \ref{||x||}, there exist a pair of agents $i$ and $j$, such that $||x_i-x_j||\rightarrow\infty$ as $t\rightarrow t^*$. Since the communication graph is connected, there exists a path $(i,i_1),...,(i_s,j)$ between $i$ and $j$. Note that $||x_i-x_j||\leq||x_i-x_{i_1}||+\cdots+||x_{i_s}-x_j||$. Therefore, there exists a constant $k\in\{1,\cdots,s\}$, such that $||x_{i_k}-x_{i_{k+1}}||\rightarrow\infty$. This yields
$$
\sum\limits_{i\in\mathcal{V}} \sum\limits_{j\in\mathcal{V}}\int_{0}^{||p_i-p_j||^2} G_{ij}\alpha(s)ds \geq \int_0^{||x_{i_k}-x_{i_{k+1}}||^2}\alpha(s)ds\rightarrow\infty,
$$
as $t\rightarrow t^*$, which conflicts with $V(p,q)\leq V(p(0),q(0))$ for all $t\geq0$. Thus, $||p(t)||$ is always bounded. Together with $||q||^2\leq V(p,q)\leq V(p(0),q(0))$, it follows the radial unboundedness of $V(p,q)$ and the compactness of $\Omega$. Therefore, all the solutions of system (\ref{ct2}) with protocol (\ref{cu22}) globally asymptotically converge into the largest invariant set in $\{\dot{V}(p,q)=0\}$. From the connectivity of the communication graph and Lemma \ref{rankL}, both $p$ and $q$ will evolve into $M$. That is, $p_i-p_j\rightarrow0$, $q_i-q_j\rightarrow0$, as $t\rightarrow\infty$ for any $i,j\in \mathcal{V}$. Note that $\sum_{i\in\mathcal{V}}p_i=\sum_{i\in\mathcal{V}}q_i=0$, therefore, $p\rightarrow0$, $q\rightarrow0$, as $t\rightarrow\infty$. That is, all the agents globally asymptotically achieve consensus.
\end{proof}

The restriction of $\alpha(\cdot)$ is actually for the decaying rate of the communication. It is clear that the faster $\alpha(\cdot)$ damps, the more difficult the condition is satisfied. When $\int_0^{\infty}\alpha(s)ds=\infty$ is false, protocol (\ref{cu21}) solves the consensus problem if the initial states of all the agents are restricted. The following corollary states it in detail.
\begin{corollary}\label{co cu22}
Consider a system consisting of $n$ agents with dynamics (\ref{ct2}). Under Assumption 1, suppose $\int_0^{\infty}\alpha(s)ds<\infty$, the communication graph $\mathcal{G}$ is connected and the following inequality holds.
\begin{equation}\label{inequality cu22}
||q(0)||^2+ \frac{1}{2}\sum\limits_{i\in\mathcal{V}} \sum\limits_{j\in\mathcal{V}}\int_0^{||p_i(0)-p_j(0)||^2}G_{ij}\alpha(s)ds< k^*\int_0^{\infty}\alpha(s)ds,
\end{equation}
where $k^*$ is the connectivity of graph $\mathcal{G}$. Then protocol (\ref{cu22}) solves the consensus problem asymptotically.
\end{corollary}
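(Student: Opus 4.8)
The plan is to recycle the Lyapunov function and almost all of the argument from Theorem~\ref{th cu22}, changing only the single step where the hypothesis $\int_0^\infty\alpha(s)\,ds=\infty$ was used. First I would observe that the proof of Theorem~\ref{th cu22} goes through verbatim up to the boundedness of $\|p\|$: with the same $V(p,q)$ of (\ref{lyapunov cu22}) one still gets $\dot V=-2q^TL_xq\le0$, so $V$ is nonincreasing and $V(p(t),q(t))\le V(p(0),q(0))$ for all $t\ge0$; in particular $\|q(t)\|^2\le V(p(0),q(0))$ is bounded. Since Assumption~1 keeps every $\alpha_{ij}>0$ and we are in the fixed-connectivity case, the graph $\mathcal G$ stays connected with constant $\kappa(\mathcal G)=k^*$, so the LaSalle/Lemma~\ref{rankL} endgame of Theorem~\ref{th cu22} will again give $p\to0$, $q\to0$ once compactness of $\Omega$ is established. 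Everything therefore reduces to proving $\|p(t)\|$ bounded, which is precisely where $\int_0^\infty\alpha=\infty$ was invoked and is now false.

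For this key step I would argue by contradiction but now extract a factor $k^*$ from the connectivity. Suppose $\|p(t)\|\to\infty$ as $t\to t^*$. By Lemma~\ref{||x||} there is a pair $i,j$ with $\|x_i-x_j\|\to\infty$, and by Lemma~\ref{graph} the connectivity $k^*$ supplies $k^*$ vertex-disjoint, hence edge-disjoint, paths joining $i$ and $j$. Along each such path the triangle inequality $\|x_i-x_j\|\le\sum_{\text{edges}}\|x_a-x_b\|$ forces at least one edge whose length is unbounded; choosing one such edge per path and passing to a common subsequence of times (there are only finitely many edges), I obtain $k^*$ distinct edges $e_1,\dots,e_{k^*}$ with $\|x_{e_\ell}\|\to\infty$. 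Because these edges are distinct, the double sum in $V$ dominates their separate contributions, giving
\[
\frac12\sum_{i\in\mathcal V}\sum_{j\in\mathcal V}\int_0^{\|p_i-p_j\|^2}G_{ij}\alpha(s)\,ds\ \ge\ \sum_{\ell=1}^{k^*}\int_0^{\|x_{e_\ell}\|^2}\alpha(s)\,ds\ \longrightarrow\ k^*\int_0^\infty\alpha(s)\,ds .
\]
Since $\|q\|^2\ge0$, this would force $V(p(t),q(t))$ to approach a value $\ge k^*\int_0^\infty\alpha(s)\,ds$, contradicting $V(p(t),q(t))\le V(p(0),q(0))<k^*\int_0^\infty\alpha(s)\,ds$, where the last strict inequality is exactly hypothesis (\ref{inequality cu22}). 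Hence $\|p(t)\|$ is bounded.

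With $\|p\|$ and $\|q\|$ both bounded, $\Omega=\{(p,q):V(p,q)\le V(p(0),q(0))\}$ is compact and positively invariant, and LaSalle's invariance principle applies exactly as in Theorem~\ref{th cu22}: on the largest invariant set contained in $\{\dot V=0\}$ one has $q^TL_xq\equiv0$, so $q\in H_0(L_x)=M$ by Lemma~\ref{rankL}; as $q\perp M$ by construction this yields $q\equiv0$, and then invariance through $\dot q=-L_xp-L_xq$ forces $L_xp=0$, i.e. $p\in M$, whence $p\equiv0$. Therefore $p\to0$ and $q\to0$, and consensus is reached asymptotically.

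The main obstacle I anticipate is the boundedness step, and within it the bookkeeping that produces the precise constant $k^*$: one must ensure the $k^*$ paths from Lemma~\ref{graph} are edge-disjoint so their unbounded edges are counted without overlap in the single double sum defining $V$, and that a single subsequence of times can be chosen along which one edge from each path simultaneously blows up. This is what makes $k^*\int_0^\infty\alpha(s)\,ds$, rather than merely $\int_0^\infty\alpha(s)\,ds$, the correct threshold in (\ref{inequality cu22}), and it is the only genuinely new ingredient beyond Theorem~\ref{th cu22}.
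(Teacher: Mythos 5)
Your proposal is correct and follows essentially the same route as the paper's proof: the same energy function $V(p,q)$ from (\ref{lyapunov cu22}), a contradiction argument in which Lemma \ref{graph} supplies $k^*$ disjoint paths between the separating pair, one blowing-up edge is extracted per path so that the double sum in $V$ exceeds $k^*\int_0^\infty\alpha(s)\,ds$, contradicting (\ref{inequality cu22}), after which the LaSalle endgame of Theorem \ref{th cu22} is reused verbatim. Your explicit care about edge-disjointness and passing to a common subsequence of times is a minor tightening of a step the paper states more loosely, not a different argument.
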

\begin{proof} We still consider the energy-like function (\ref{lyapunov cu22}), the next step is to show the compactness of $\Omega=\{p,q| V(p,q)\leq V(p(0),q(0))\}$. Suppose $||p||\rightarrow\infty$, then there exist a pair of agents $i$ and $j$, such that $||p_{i}-p_j||\rightarrow\infty$. By Lemma \ref{graph}, there exist $k^*$ disjoint paths between $i$ and $j$. As the analysis in the proof of Theorem \ref{th cu22}, in each path, there exist at least one pair of adjacent agents $i_k$ and $i_{k+1}$, such that $||p_{i_k}-p_{i_{k+1}}||\rightarrow\infty$. Employing inequality (\ref{inequality cu22}), we have
\[
\begin{split}
V(p(0),q(0))&=||q(0)||^2+\frac{1}{2}\sum\limits_{i\in\mathcal{V}} \sum\limits_{j\in\mathcal{V}}\int_0^{||p_i(0)-p_j(0)||^2}G_{ij}\alpha(s)ds\\
&<k^*\int_0^{\infty}\alpha(s)ds\\
&\leq\frac{1}{2}\sum\limits_{i\in\mathcal{V}}\sum\limits_{j\in\mathcal{V}} \int_{0}^{||p_i-p_j||^2} G_{ij}\alpha(s)ds\\
&\leq V(p,q),
\end{split}
\]
a contradiction. Thus, $||p||$ is bounded for all $t\geq0$. We now proceed as in the proof of Theorem \ref{th cu22}.
\end{proof}

\begin{remark}
All the results above can be extended to general cases. More specifically, $\alpha_{ij}(\cdot)$ can be various for different pairs of agents. Each $\alpha_{ij}(s)$ is a continuous function of $s$ and is unnecessary to be nonincreasing. In this case, the condition for $\alpha(\cdot)$ in Theorem \ref{th cu22} is replaced by the condition that there exists a spanning tree with $\mathcal{E}'$ as its set of edges, and $\int_0^\infty\alpha_{ij}(s)ds=\infty$ for any $(i,j)\in\mathcal{E}'$. If this is not true, the initial states of all the agents are required to satisfy the following inequality,
\[
||v(0)||^2+ \frac{1}{2}\sum\limits_{i\in\mathcal{V}} \sum\limits_{j\in\mathcal{V}}\int_0^{||x_i(0)-x_j(0)||^2}a_{ij}\alpha(s)ds< d^*\min_{(i,j)\in\mathcal{E}'}\int_0^{\infty}\alpha_{ij}(s)ds,
\]
where $\mathcal{E}'$ is the set of edges associated with a spanning tree. The proof is similar to that of Theorem \ref{th cu22} and Corollary \ref{co cu22}, we omit it here.
\end{remark}

\subsection{Consensus with State-dependent Connectivity of Networks}

In this subsection, the connectivity of the communication graph is possibly broken due to the evolution of the agents. For realizing consensus, we always hope that the connectivity can be maintained. In the following ,we will use the Lyapunov method to search a specific condition for the initial states to guarantee the invariance of the connectivity. It is shown that under an intensive distribution of the agents' initial states, consensus can be finally reached.

Suppose Assumption 2 is satisfied. For agents with dynamics (\ref{ct1}), the following consensus protocol is considered,
\begin{equation}\label{ccu1}
u_i=\sum\limits_{j\in\mathcal{V}}\alpha_{ij} (x)(x_j-x_i).
\end{equation}
We present a sufficient condition for consensus by restricting the initial states of the agents. See the follows:
\begin{theorem}\label{th ccu1}
Consider a system consisting of $n$ agents with dynamics (\ref{ct1}). Under Assumption 2, suppose the following inequality holds.
\begin{equation}\label{ccu1con}
\frac{1}{2}\sum\limits_{i\in\mathcal{V}} \sum\limits_{j\in\mathcal{V}}\int_0^{||x_i(0)-x_j(0)||^2}\alpha(s)ds< (n-1)\int_0^{R^2}\alpha(s)ds.
\end{equation}
Then protocol (\ref{ccu1}) solves the average consensus asymptotically.
\end{theorem}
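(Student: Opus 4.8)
The plan is to build an energy-like function whose sublevel sets cannot contain a disconnected configuration, so that hypothesis (\ref{ccu1con}) freezes the connectivity of $\mathcal{G}$ for all time and the problem reduces to the connected case. Concretely, I would take
$V(x)=\frac{1}{2}\sum_{i\in\mathcal{V}}\sum_{j\in\mathcal{V}}\int_0^{\|x_i-x_j\|^2}\alpha(s)\,ds$, which is $C^1$ since $\alpha$ is continuous. Differentiating along (\ref{ct1})--(\ref{ccu1}), using the symmetry of $\alpha_{ij}(x)$ together with $\dot{x}_i=\sum_j\alpha_{ij}(x)(x_j-x_i)$, I expect $\dot{V}=\sum_i\sum_j\alpha_{ij}(x)(x_i-x_j)^T(\dot{x}_i-\dot{x}_j)=2\sum_i\bigl(\sum_j\alpha_{ij}(x)(x_i-x_j)\bigr)^T\dot{x}_i=-2\sum_{i\in\mathcal{V}}\|\dot{x}_i\|^2\le0$, so that $V(x(t))\le V(x(0))$ on the interval of existence. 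I would also record $\sum_i\dot{x}_i=0$, so the centroid is conserved and $\pi_M(x(t))\equiv\mathbf{1}_n\otimes\frac{1}{n}\sum_{i\in\mathcal{V}}x_i(0)$.

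The heart of the argument is to show that $\mathcal{G}$ stays connected for every $t\ge0$. I would argue by contradiction: if $\mathcal{G}$ were disconnected at some instant, Lemma \ref{n-1} supplies at least $n-1$ pairs $\{i,j\}$ lying in distinct components; each such pair carries no edge, so $\alpha_{ij}(x)=0$ and Assumption 2 forces $\|x_i-x_j\|^2\ge R^2$, whence $\int_0^{\|x_i-x_j\|^2}\alpha(s)\,ds=\int_0^{R^2}\alpha(s)\,ds$. Discarding the remaining nonnegative terms of $V$ then gives $V(x(t))\ge(n-1)\int_0^{R^2}\alpha(s)\,ds$, which contradicts $V(x(t))\le V(x(0))<(n-1)\int_0^{R^2}\alpha(s)\,ds$, the last inequality being exactly (\ref{ccu1con}) (note $V(x(0))$ equals its left-hand side). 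Thus connectivity is preserved for all $t\ge0$, and, as the special case $t=0$, the initial graph is already connected.

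With connectivity in hand I would finish by a LaSalle/invariance argument. Since every edge has length below $R$ and the graph is connected, any two agents are joined by a path of at most $n-1$ edges, so $\|x_i-x_j\|\le(n-1)R$ for all $t$; by Lemma \ref{||x||}(2) this bounds $\|p\|$, and together with the conserved centroid it confines the trajectory to a compact set, which also yields global existence. Applying LaSalle, the trajectory converges to the largest invariant set contained in $\{\dot{V}=0\}=\{\dot{x}=0\}$; there $L_xx=0$, and since $\mathcal{G}$ is connected Lemma \ref{rankL} gives $H_0(L_x\otimes I_m)=M$, forcing $x\in M$, i.e. $p\to0$. Because $\pi_M(x(t))$ equals the initial average, every agent converges to $\frac{1}{n}\sum_{i\in\mathcal{V}}x_i(0)$, so average consensus is solved.

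The main obstacle is the connectivity step: the whole result hinges on matching the purely combinatorial count of $n-1$ severed pairs (Lemma \ref{n-1}) against the fixed energy cost $\int_0^{R^2}\alpha(s)\,ds$ of breaking a single link, and this pairing is what makes (\ref{ccu1con}) the natural and essentially sharp initial restriction. Observe that the \emph{finiteness} of the barrier $\int_0^{R^2}\alpha(s)\,ds$ here (in contrast to the divergent $\int_0^\infty\alpha(s)\,ds=\infty$ used in Theorem \ref{th cu22}) is precisely why a condition on the initial states becomes unavoidable. A minor technical point to handle with care is that $\alpha$ is merely continuous, so solutions are guaranteed only by Peano's theorem; I would therefore invoke the monotonicity of $V$ first on the maximal interval of existence, which then delivers the a priori bound on $\|p\|$ and hence global existence.
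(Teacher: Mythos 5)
Your proof is correct, and it rests on the same two pillars as the paper's: the monotonicity of the energy $V_1(x)=\frac{1}{2}\sum_{i}\sum_{j}\int_0^{\|x_i-x_j\|^2}\alpha(s)\,ds$ along (\ref{ct1})--(\ref{ccu1}) (your $\dot{V}=-2\sum_i\|\dot{x}_i\|^2$ is the paper's $-2x^TL_x^2x$), and Lemma \ref{n-1}, which prices any disconnected configuration at $(n-1)\int_0^{R^2}\alpha(s)\,ds$. The organization, however, genuinely differs. The paper keeps two functions: it runs LaSalle on $\frac{1}{2}\|x\|^2$, whose sublevel sets are trivially compact, concludes that every limit configuration has each pair either coincident or at distance at least $R$, and only then invokes the monotone $V_1$ to exclude the disconnected alternative \emph{in the limit}. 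You make $V_1$ do both jobs, which obliges you to repair compactness by hand---under Assumption 2, $V_1$ is bounded above by $\frac{n^2}{2}\int_0^{R^2}\alpha(s)\,ds$, so its sublevel sets are not compact---and you do so correctly: sublevel sets below the barrier contain no disconnected configuration, so the graph is connected pointwise in $t$, the diameter is at most $(n-1)R$, and together with the conserved centroid the orbit is precompact. Your route buys the strictly stronger intermediate conclusion that connectivity is preserved at \emph{every} instant (the ``invariance of connectivity'' the paper's surrounding text advertises but whose proof of this theorem only certifies asymptotically), and your remark about Peano existence for the merely continuous vector field is a legitimate technical point the paper glosses over; the paper's route buys a shorter compactness argument. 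One small step worth writing out explicitly: connectivity on the $\omega$-limit set follows from continuity of $V_1$ (limit points still satisfy $V_1<(n-1)\int_0^{R^2}\alpha(s)\,ds$), which your pointwise sublevel-set formulation covers implicitly.
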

\begin{proof} Consider the Lyapunov function $V(x)=\frac{1}{2}||x||^2$, then $\dot{V}(x)=-x^TL_xx\leq0$. Thus $||x||\leq||x(0)||$. It follows the compactness of $\{x|~V(x)\leq V(x(0))\}$. By employing LaSalle's invariance principle, we have $L_xx\rightarrow0$ as $t\rightarrow\infty$. That is, for any different $i$ and $j$, $x_i=x_j$ or $||x_i-x_j||\geq R$ at the steady state.

Let $t\rightarrow\infty$, suppose consensus is not achieved, it is obvious that the communication graph $\mathcal{G}(t)$ is disconnected. By employing Lemma \ref{n-1}, there exist at least $n-1$ pairs of agents satisfying that the distance between any two agents in a pair is larger than or equal to $R$. Thus it holds that
\begin{equation}\label{steady}
\frac{1}{2}\sum\limits_{i\in\mathcal{V}} \sum\limits_{j\in\mathcal{V}}\int_0^{||x_i-x_j||^2}\alpha(s)ds\geq (n-1)\int_0^{R^2}\alpha(s)ds.
\end{equation}
We now consider the following function:
\begin{equation}\label{smooth W}
V_1(x)=\frac{1}{2}\sum\limits_{i\in\mathcal{V}} \sum\limits_{j\in\mathcal{V}}\int_0^{||x_i-x_j||^2}\alpha(s)ds.
\end{equation}
Differentiating $V_1(x)$, yields
\begin{equation}
\begin{split}
\dot{V}_1(x)&=\sum\limits_{i\in\mathcal{V}} \sum\limits_{j\in\mathcal{V}} \alpha_{ij}(x)(x_i-x_j)^T(u_i-u_j)\\
&=2x^TL_xu\\
&=-2x^TL_x^2x\leq0.
\end{split}
\end{equation}
Consequently, $V_1(x)\leq V_1(x(0))$ for all $t\geq0$. Together with (\ref{steady}), we have
$$
(n-1)\int_0^{R^2}\alpha(s)ds\leq\frac{1}{2}\sum\limits_{i\in\mathcal{V}} \sum\limits_{j\in\mathcal{V}}\int_0^{||x_i(0)-x_j(0)||^2}\alpha(s)ds,
$$
which is in contradiction with (\ref{ccu1con}). Therefore, consensus is achieved asymptotically. Since the communication graph is undirected, we have $\sum_{i\in\mathcal{V}}\dot{x}_i=0$ for $t\geq0$. Let $x^*$ be the steady state of each agent, then $nx^*=\sum_{i\in\mathcal{V}}x_i(0)$. Therefore, the consensus state is the average of the initial states.
\end{proof}

For agents with dynamics (\ref{ct2}), the following static consensus control law is considered.
\begin{equation}\label{ccu2}
u_i=-kv_i+ \sum\limits_{j\in\mathcal{V}} \alpha_{ij}(x)(x_j-x_i).
\end{equation}
\begin{theorem}\label{th ccu2}
Consider a system consisting of $n$ agents with dynamics (\ref{ct2}). Under Assumption 2, suppose the following inequality holds:
\begin{equation}\label{ccu2con}
||v(0)||^2+\frac{1}{2}\sum\limits_{i\in\mathcal{V}} \sum\limits_{j\in\mathcal{V}} \int_0^{||x_i(0)-x_j(0)||^2} \alpha(s)ds< (n-1)\int_0^{R^2}\alpha(s)ds.
\end{equation}
Then protocol (\ref{ccu2}) solves the consensus problem asymptotically.
\end{theorem}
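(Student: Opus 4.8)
The plan is to combine the Lyapunov function already used for Theorem \ref{th cu21} with an auxiliary energy-like function tailored to the left-hand side of (\ref{ccu2con}), echoing the two-function strategy of Theorem \ref{th ccu1}. First I would reuse
\begin{equation}
V(x,v)=\|kx+v\|^2+\|v\|^2+\sum_{i\in\mathcal{V}}\sum_{j\in\mathcal{V}}\int_0^{\|x_i-x_j\|^2}\alpha(s)\,ds,
\end{equation}
whose radial unboundedness follows exactly as in Theorem \ref{th cu21} (the estimate there uses only the first two terms together with the nonnegativity of the integral term, so Assumption 2 suffices). The derivative computation is unchanged and yields $\dot V=-2kx^TL_xx-2kv^Tv\le0$, so the sublevel set $\Omega=\{(x,v)\,|\,V(x,v)\le V(x(0),v(0))\}$ is compact and positively invariant. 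LaSalle's invariance principle then drives the trajectory into the largest invariant set in $\{\dot V=0\}=\{v=0,\ x^TL_xx=0\}$; since $L_x$ is positive semi-definite, $x^TL_xx=0$ is equivalent to $L_xx=0$, so on this set $v=0$ while $x$ is constant on each connected component of $\mathcal{G}(x)$ (Lemma \ref{rankL} applied componentwise) and $\|x_i-x_j\|\ge R$ across components. In particular $v(t)\to0$.

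Next I would introduce the energy-like function
\begin{equation}
W(x,v)=\|v\|^2+\frac{1}{2}\sum_{i\in\mathcal{V}}\sum_{j\in\mathcal{V}}\int_0^{\|x_i-x_j\|^2}\alpha(s)\,ds,
\end{equation}
whose value at $t=0$ is precisely the left-hand side of (\ref{ccu2con}). The point of $W$ is that, along (\ref{ct2}) with (\ref{ccu2}), the cross terms cancel: using $\sum_{i}\sum_{j}\alpha_{ij}(x)(x_i-x_j)^T(v_i-v_j)=2x^TL_xv$ (the bilinear form underlying Lemma \ref{||x||}) together with $\frac{d}{dt}\|v\|^2=-2k\|v\|^2-2v^TL_xx$, one obtains $\dot W=-2k\|v\|^2\le0$. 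Hence $W(x(t),v(t))\le W(x(0),v(0))$ for all $t\ge0$. Note that $W$ alone cannot deliver compactness, since its integral term saturates once interagent distances exceed $R$; this is exactly why the separate function $V$ is required, just as $V_1$ was auxiliary to $\tfrac12\|x\|^2$ in Theorem \ref{th ccu1}.

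Finally I would argue by contradiction. If consensus fails then, since $v(t)\to0$, the quantity $p=x-\pi_M(x)$ does not vanish, so by compactness there is a sequence $t_k\to\infty$ along which $(x(t_k),v(t_k))\to(x^*,0)$ with $x^*\notin M$; being an $\omega$-limit point, $(x^*,0)$ lies in the invariant set above, so $L_{x^*}x^*=0$ although $x^*$ is not a consensus configuration, which forces $\mathcal{G}(x^*)$ to be disconnected. By Lemma \ref{n-1} there are then at least $n-1$ pairs of disconnected nodes, and disconnected nodes share no edge, so $\|x_i^*-x_j^*\|\ge R$ for each such pair; because $\alpha$ vanishes beyond $R^2$, each pair contributes $\int_0^{R^2}\alpha(s)\,ds$, giving
\begin{equation}
W(x^*,0)=\frac{1}{2}\sum_{i\in\mathcal{V}}\sum_{j\in\mathcal{V}}\int_0^{\|x_i^*-x_j^*\|^2}\alpha(s)\,ds\ge(n-1)\int_0^{R^2}\alpha(s)\,ds.
\end{equation}
On the other hand, continuity and monotonicity of $W$ give $W(x^*,0)=\lim_k W(x(t_k),v(t_k))\le W(x(0),v(0))$, which by (\ref{ccu2con}) is strictly less than $(n-1)\int_0^{R^2}\alpha(s)\,ds$, a contradiction. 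Thus every $\omega$-limit point lies in $M$, i.e. $p(t)\to0$ and $v(t)\to0$, so the position states reach consensus and the velocities vanish.

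The main obstacle is the second-order cross coupling: in contrast with the first-order case, the naive candidate $\tfrac12\sum_{i}\sum_{j}\int_0^{\|x_i-x_j\|^2}\alpha\,ds$ has the indefinite derivative $2x^TL_xv$ and so is not monotone. The key idea that unlocks the argument is to add $\|v\|^2$ to it so that the $\pm2x^TL_xv$ terms cancel, producing a monotone $W$ whose initial value is exactly the bound appearing in (\ref{ccu2con}); pairing this with the compactness-providing $V$ from Theorem \ref{th cu21} is what makes the connectivity-preservation contradiction go through.
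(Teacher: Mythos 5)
Your proposal is correct and takes essentially the same route as the paper: the paper's own proof likewise pairs the energy function (\ref{lyapunov cu22}) with $p,q$ replaced by $x,v$ (your $W$, whose monotonicity matches the left-hand side of (\ref{ccu2con})) with the Lyapunov function (\ref{lyapunov cu21}) of Theorem \ref{th cu21} for boundedness and compactness, then applies LaSalle's invariance principle and Lemma \ref{n-1} to reach the identical contradiction with (\ref{ccu2con}). The only differences are cosmetic: your computation $\dot W=-2k\|v\|^2$ fixes the paper's stated $-k\sum_{i\in\mathcal{V}}\|v_i\|^2$ (a harmless factor-of-two slip), and your explicit $\omega$-limit sequence argument spells out the limiting step the paper leaves informal.
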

\begin{proof} Consider the Lyapunov function (\ref{lyapunov cu22}) by replacing $p$ and $q$ with $x$ and $v$. As the same way in the proof of Theorem \ref{th cu22}, we obtain that $\dot{V}(x,v)=-k\sum\limits_{i\in\mathcal{V}}||v_i||^2\leq0$. By employing the function (\ref{lyapunov cu21}), we know that $x$ and $v$ are both bounded. Then it follows the compactness of $\{x,v|V(x,v)\leq V(x(0),v(0))\}$.
From LaSalle's invariance principle, if $t\rightarrow\infty$, one has $v_i\rightarrow0$ for any $i\in\mathcal{V}$. That is, $\dot{v}_i\rightarrow0$, implying that $||x_i-x_j||=0$ or $||x_i-x_j||\geq R$ at the steady state. Suppose consensus is not achieved in the steady state. From Lemma \ref{n-1}, for $t\rightarrow\infty$, the following holds.
$$
||v||^2+\frac{1}{2}\sum\limits_{i\in\mathcal{V}} \sum\limits_{j\in\mathcal{V}}\int_0^{||x_i-x_j||^2}\alpha(s)ds\geq (n-1)\int_0^{R^2}\alpha(s)ds.
$$
Together with $V(x,v)\leq V(x(0),v(0))$, we have
$$
||v(0)||^2+\frac{1}{2}\sum\limits_{i\in\mathcal{V}} \sum\limits_{j\in\mathcal{V}}\int_0^{||x_i(0)-x_j(0)||^2}\alpha(s)ds\geq (n-1)\int_0^{R^2}\alpha(s)ds,
$$
a contradiction with (\ref{ccu1con}). Therefore, all the agents achieve consensus asymptotically.
\end{proof}

\section{Consensus of Discrete-time Multi-agent Systems}

In this section, the consensus problem of discrete-time multi-agent systems with state-dependent information transmission laws is considered. Different from the case of continuous-time, the discontinuity of the control input can be adopted.

\subsection{Discontinuous State-dependent Transmission Weight}

Similar to the one of continuous-time systems, we use a function $\alpha(\cdot)$ to interpret the relationship between the transmission weight and the relative difference between agents' states. The previous assumptions are modified as follows by relaxing the continuity of $\alpha(\cdot)$.

\emph{Assumption 3}: $\alpha(\cdot): \mathbb{R}_{\geq0}\rightarrow \mathbb{R}_{>0}$ is nonincreasing, $\alpha(0)<\infty$.

\emph{Assumption 4}: $\alpha(\cdot): \mathbb{R}_{\geq0}\rightarrow \mathbb{R}_{\geq0}$ is nonincreasing, $\alpha(0)<\infty$, $\alpha(s)>0$ if $s<R^2$, $\alpha(s)=0$ if $s\geq R^2$, where $R\in\mathbb{R}_{>0}$ is a constant.

\subsection{A Lyapunov-like Function}

Before entering into our results, we introduce a function $w(z): \mathbb{R}_{\geq0}\rightarrow \mathbb{R}_{\geq0}$ which will be used to construct the Lyapunov function.
\[
w(z)=\left\{\begin{array}{l}
\alpha(r)z, ~~~~~~~~~~~~~~~~~~~~~~~~~~~~~~~ 0\leq z< r,\\
\sum\limits_{s=1}^{\lfloor\frac{z}{r}\rfloor}\alpha(sr)r+ \alpha(\lceil\frac{z}{r}\rceil r)(z-\lfloor\frac{z}{r}\rfloor r), ~~~~~~~~z\geq r,
\end{array}\right.
\]
where $\alpha(z)$ is nonincreasing of $z$, $r$ is a positive constant. For better understanding $w(z)$, we present an example with $r=1$ to express the relationship between $w(\cdot)$ and $\alpha(\cdot)$. The area of the shaded part of Fig. \ref{w(z)1} is equal to $w(3.5)$, while the area of the shaded part of Fig. \ref{w(z)2} is equal to $w(0.5)$. For simplicity, we define $x_{ij}(t)=x_i(t)-x_j(t)$, $W_{ij}(t)=w(||x_{ij}(t)||^2)$, $W(t)=\frac{1}{2}\sum\limits_{i\in\mathcal{V}} \sum\limits_{j\in\mathcal{V}}G_{ij}W_{ij}(t)$.
\begin{figure}
\begin{minipage}[t]{0.5\linewidth}
\centering
\includegraphics[width=0.6\textwidth]{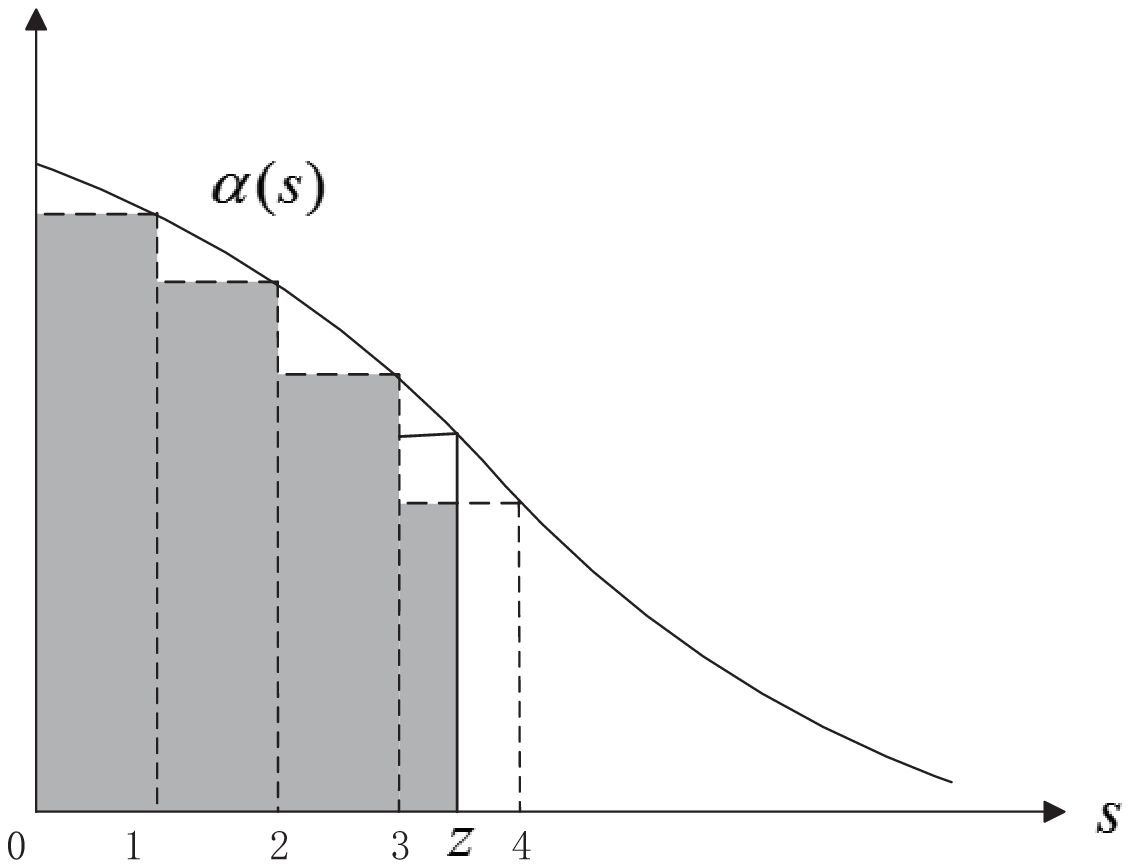}
\caption{$w(z)$ with $r=1,~z=3.5$.}\label{w(z)1}
\end{minipage}
\begin{minipage}[t]{0.5\linewidth}
\centering
\includegraphics[width=0.6\textwidth]{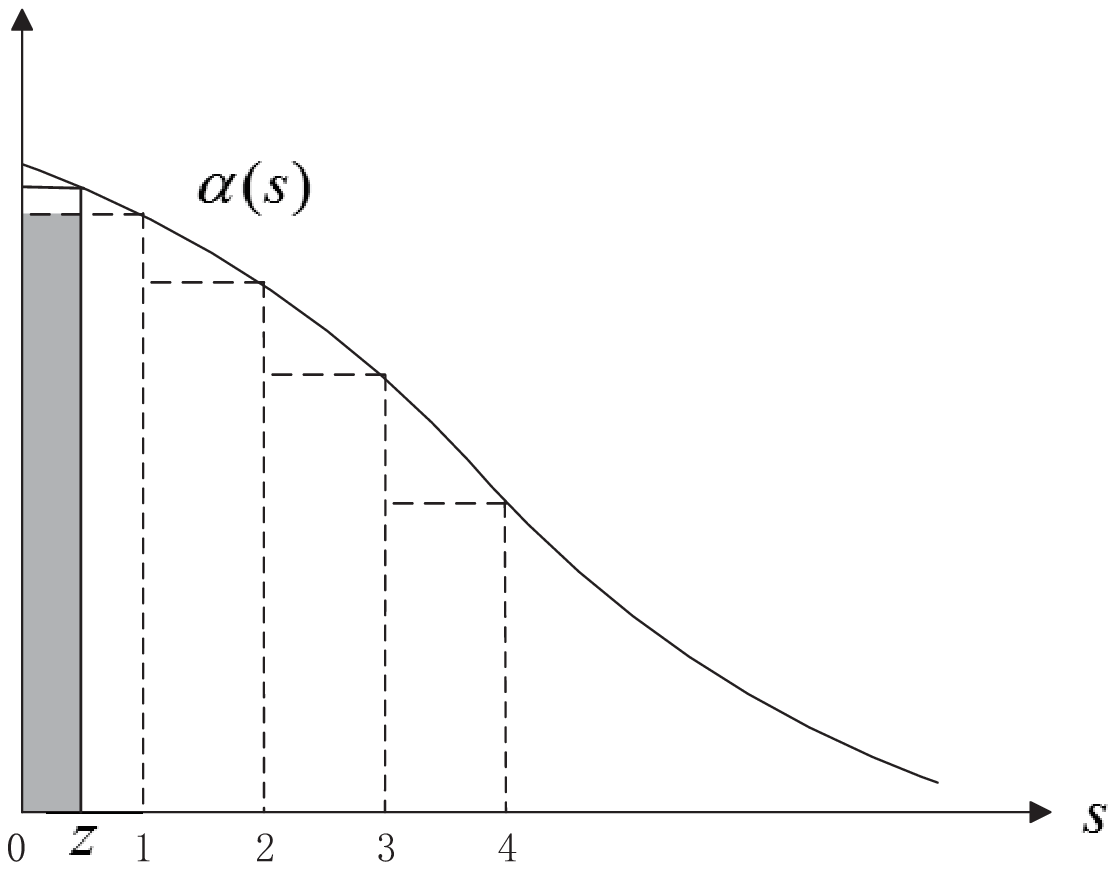}
\caption{$w(z)$ with $r=1,~z=0.5$.}\label{w(z)2}
\end{minipage}
\end{figure}

The following Proposition shows some properties of $W$, which will be important for the main results. The corresponding proof is shown in Section 7.
\begin{proposition}\label{prW}
For any $z\geq0$, the following hold.

(1). Suppose that the communication graph $\mathcal{G}$ is connected. Then $W(t)\geq0$, $W(t)=0$ if and only if $x_i=x_j$ for any $i,j\in\mathcal{V}$.

(2). For a fixed $r$, $w(z)$ is increasing of $z$.

(3). For all $t\geq0$,
\begin{equation}\label{dt1W}
W(t+1)-W(t)\leq \frac{1}{2}\sum\limits_{i\in\mathcal{V}} \sum\limits_{j\in\mathcal{V}} G_{ij} \alpha_{ij}(x(t))(||x_{ij}(t+1)||^2-||x_{ij}(t)||^2).
\end{equation}

(4). $\lim\limits_{r\rightarrow0}w(z) =\int_0^z\alpha(s)ds$ for $0\leq z<\infty$.
\end{proposition}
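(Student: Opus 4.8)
The plan is to recast $w$ as the area under a step function---precisely the picture in Figs.~\ref{w(z)1}--\ref{w(z)2}---and then read all four claims off this representation. First I would verify that for every $z\ge0$
\[
w(z)=\int_0^z \alpha\!\left(\left\lceil s/r\right\rceil r\right)ds .
\]
This follows by splitting $[0,z]$ at the grid points $r,2r,\dots$: on the cell $((k-1)r,kr]$ the integrand is the constant $\alpha(kr)$, so the full cells contribute $\sum_{s=1}^{\lfloor z/r\rfloor}\alpha(sr)r$ and the final partial cell contributes $\alpha(\lceil z/r\rceil r)(z-\lfloor z/r\rfloor r)$, matching the definition (and giving $\alpha(r)z$ when $z<r$). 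Since $\lfloor s/r\rfloor r\le s\le\lceil s/r\rceil r$ and $\alpha$ is nonincreasing, the integrand obeys the two-sided bound
\[
\alpha\!\left(\left\lceil s/r\right\rceil r\right)\le\alpha(s)\le\alpha\!\left(\left\lfloor s/r\right\rfloor r\right),
\]
which is essentially the only analytic fact the argument needs.

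Claims (1) and (2) are then immediate. For (2), the integrand is nonnegative, so $w$ is nondecreasing, with $w(z_2)-w(z_1)=\int_{z_1}^{z_2}\alpha(\lceil s/r\rceil r)ds$; it is strictly increasing wherever $\alpha>0$, in particular everywhere under Assumption~3. For (1), each summand $G_{ij}w(\|x_{ij}\|^2)$ is nonnegative, so $W(t)\ge0$; moreover the integrand of $w$ is positive near the origin, so $w(z)=0$ forces $z=0$. Hence $W(t)=0$ iff $w(\|x_{ij}(t)\|^2)=0$ on every edge, i.e.\ $x_i=x_j$ for all $(i,j)\in\mathcal{E}$, and connectivity of $\mathcal{G}$ propagates this equality along paths to every pair $i,j\in\mathcal{V}$; the converse is trivial.

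For (4) I would squeeze $w(z)$ between the lower and upper step integrals of $\alpha$,
\[
w(z)=\int_0^z\alpha(\lceil s/r\rceil r)ds\le\int_0^z\alpha(s)ds\le\int_0^z\alpha(\lfloor s/r\rfloor r)ds ,
\]
and bound the gap between the two step integrals. That gap, $\int_0^z[\alpha(\lfloor s/r\rfloor r)-\alpha(\lceil s/r\rceil r)]ds$, telescopes over the grid and is at most $r\,\alpha(0)$, which vanishes as $r\to0$. Squeezing then gives $w(z)\to\int_0^z\alpha(s)ds$; note this uses only monotonicity and $\alpha(0)<\infty$, not continuity of $\alpha$.

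The real work is in (3). Since $W(t+1)-W(t)=\tfrac12\sum_{i,j}G_{ij}[w(b_{ij})-w(a_{ij})]$ with $a_{ij}=\|x_{ij}(t)\|^2$ and $b_{ij}=\|x_{ij}(t+1)\|^2$, it suffices to prove the pointwise estimate $w(b)-w(a)\le\alpha(a)(b-a)$ for each pair. The idea is to use that $w$ is concave (its slope $\alpha(\lceil s/r\rceil r)$ is nonincreasing) and compare $\int_a^b\alpha(\lceil s/r\rceil r)ds$ with $\alpha(a)(b-a)$. When $b\ge a$ this is clean: for $s\in[a,b]$ one has $\lceil s/r\rceil r\ge s\ge a$, hence $\alpha(\lceil s/r\rceil r)\le\alpha(a)$, and integrating gives the estimate. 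The main obstacle is the regime $b<a$---the shrinking-distance case the protocols actually produce---where $(b-a)<0$ and the slope of $w$ on the relevant cell is the right-endpoint value $\alpha(\lceil a/r\rceil r)\le\alpha(a)$; the naive supporting-line bound then only yields $w(b)-w(a)\le\alpha(\lceil a/r\rceil r)(b-a)$, which is weaker than asserted. I therefore expect the decreasing case to be the delicate step, requiring a careful cell-by-cell lower bound on the integrand over $[b,a]$ to justify replacing the cell slope by $\alpha(a)$, and this is where I would concentrate the effort.
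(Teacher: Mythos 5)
Your treatment of parts (1), (2) and (4) is correct, and your representation $w(z)=\int_0^z\alpha(\lceil s/r\rceil r)\,ds$ is a tidier route than the paper's: the paper proves (2) by a three-case comparison of floor/ceiling expressions and (4) by the two-sided squeeze $\int_r^z\alpha(s)\,ds\le w(z)\le\int_0^z\alpha(s)\,ds$, whereas your single step-function picture yields both at once, and your telescoped gap bound $r\,\alpha(0)$ even quantifies the rate. Your slope estimate also supplies what the paper's proof of (3), Case 1 ($\|x_{ij}(t+1)\|\ge\|x_{ij}(t)\|$) actually omits: the paper there infers (\ref{dt1W}) from the mere nonnegativity of $W_{ij}(t+1)-W_{ij}(t)$, a non sequitur as written, though the conclusion is true by exactly your bound $\alpha(\lceil s/r\rceil r)\le\alpha(a)$ for $s\ge a$.

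Your worry about the contracting case is vindicated, and more strongly than you put it: that step does not merely need more care, it fails for every $r>0$. Within one cell $w$ has slope $\alpha(\lceil a/r\rceil r)$, so for $b<a$ in the same cell $w(b)-w(a)=\alpha(\lceil a/r\rceil r)(b-a)$, which is strictly greater than $\alpha(a)(b-a)$ whenever $\alpha(\lceil a/r\rceil r)<\alpha(a)$. Concretely, take $r=1$, $\alpha(s)=1/(1+s)$, $a=\|x_{ij}(t)\|^2=0.5$, $b=\|x_{ij}(t+1)\|^2=0.2$: then $w(b)-w(a)=-0.15$ while $\alpha(a)(b-a)=-0.2$, violating the asserted inequality. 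For two agents, (3) reduces to this single-edge estimate, and this contraction is realizable under protocol (\ref{du1}) with the admissible gain $h\approx 0.28<1/(d_{max}\alpha(0))=1$; so part (3) is false as stated for $r>0$. The paper's own Case 2 ($\|x_{ij}(t+1)\|<\|x_{ij}(t)\|$) does not repair this: it deduces (\ref{dt1W}) from $W_{ij}(t+1)-W_{ij}(t)\le0$, which proves nothing since the right-hand side is also negative. The claim is correct only under the $r=0$ convention of Remark \ref{re w(z)}, $w(z)=\int_0^z\alpha(s)\,ds$, where $\alpha(s)\ge\alpha(a)$ on $[b,a]$ closes precisely your missing case; for $r>0$ the best valid pointwise bound is your supporting-line inequality $w(b)-w(a)\le\alpha(\lceil a/r\rceil r)(b-a)$, whose quantized weight mismatches the protocol weight $\alpha_{ij}(x(t))$, so the downstream uses of (3) in Theorems \ref{th du2}, \ref{th ddu1} and \ref{th ddu2} (and the $r>0$ freedom advertised in conditions (\ref{ddu1con}) and (\ref{ddu2con})) inherit the gap unless $r=0$ is taken there or the protocols are modified to use the quantized weights.
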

\begin{remark}\label{re w(z)}
We can see that $w(z)$ is the approximation of $\int_0^z\alpha(s)ds$ in some sense. And that with the decreasing of $r$, $w(z)$ is more closer to $\int_0^z\alpha(s)ds$. Actually, when we let $w(z)=\int_0^z\alpha(s)ds$, $(1)$, $(2)$ and $(3)$ in Proposition \ref{pr symmetry} also hold. The corresponding proof is similar. In the rest of this paper, we admit $w(z)=\int_0^z\alpha(s)ds$ for $r=0$.
\end{remark}

\subsection{Consensus with Fixed Connectivity of Networks}

For agents with dynamics (\ref{dt1}), the consensus protocol is given by
\begin{equation}\label{du1}
u_i(t)=h\sum\limits_{j=1}^nG_{ij} \alpha_{ij}(x(t))(x_j(t)-x_i(t)),
\end{equation}
where $h>0$ is the control gain.
\begin{theorem}\label{th du1}
Consider a system consisting of $n$ agents with dynamics (\ref{dt1}). Under Assumption 3, protocol (\ref{du1}) globally asymptotically solves the average consensus problem if the communication topology is connected and $h<\frac{1}{d_{max}\alpha(0)}$, where $d_{max}$ is the maximum degree of all the agents.
\end{theorem}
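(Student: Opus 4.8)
The plan is to carry the argument of Theorem \ref{th cu1} over to discrete time, replacing the Lyapunov derivative of $V(p)=\tfrac12\|p\|^2$ by its one--step increment and controlling the step matrix $I-hL_{x(t)}$ spectrally. First I would put the closed loop in compact form. Because $G$ and $\mathcal A$ are symmetric, $L_{x(t)}$ is symmetric with $\mathbf 1_n^T L_{x(t)}=0$, so (\ref{dt1})--(\ref{du1}) read $x(t+1)=(I-hL_{x(t)})x(t)$ and $\sum_{i\in\mathcal V}x_i(t+1)=\sum_{i\in\mathcal V}x_i(t)-h\,\mathbf 1_n^T L_{x(t)}x(t)=\sum_{i\in\mathcal V}x_i(t)$. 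Thus the mean, and hence $\pi_M(x(t))$, is invariant; using $L_{x(t)}\mathbf 1_n=0$ one gets that $p=x-\pi_M(x)$ obeys the same recursion $p(t+1)=(I-hL_{x(t)})p(t)$.

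Next I would monitor $V(t)=\|p(t)\|^2$. Since $I-hL_{x(t)}$ is symmetric,
\[
V(t+1)-V(t)=-h\,p(t)^T L_{x(t)}\bigl(2I-hL_{x(t)}\bigr)p(t).
\]
For the upper end of the spectrum, Gerschgorin together with $\alpha_{ij}(x)=\alpha(\|x_i-x_j\|^2)\le\alpha(0)$ (Assumption 3, nonincreasing) gives $\lambda_{\max}(L_{x(t)})\le 2\max_i\sum_j G_{ij}\alpha(0)=2d_{\max}\alpha(0)$. The hypothesis $h<1/(d_{\max}\alpha(0))$ then forces $h\lambda_{\max}(L_{x(t)})<2$, so on $M^\perp$ the operator $L_{x(t)}(2I-hL_{x(t)})$ is positive semidefinite; as $p\perp M=H_0(L_{x(t)})$ by Lemma \ref{rankL}, the increment is $\le 0$. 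Hence $V$ is nonincreasing and $\|p(t)\|\le\|p(0)\|$ for all $t$.

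With $\|p\|$ now bounded, I would invoke the reasoning of Lemma \ref{bound}: by Lemma \ref{||x||}$(1)$, $\|x_i-x_j\|=\|p_i-p_j\|\le\sqrt2\,\|p(0)\|$, so every active weight satisfies $\alpha_{ij}(x)\ge\alpha(2\|p(0)\|^2)>0$ uniformly in $t$, which together with connectivity of $\mathcal G$ yields $\lambda_2(L_{x(t)})\ge c>0$ for a constant $c$ independent of $t$ (this step uses only monotonicity and positivity of $\alpha$, not continuity, so Assumption 3 is enough). Combining the two bounds, at each step $t$ every nonzero eigenvalue $\mu$ of $L_{x(t)}$ lies in $[c,\,2d_{\max}\alpha(0)]$, so $1-h\mu\in(-1,1)$ and, restricted to $M^\perp$, the step matrix has norm at most $\rho:=\max\{|1-hc|,\,|1-2h\,d_{\max}\alpha(0)|\}<1$, uniformly in $t$. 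Therefore $\|p(t+1)\|\le\rho\,\|p(t)\|$, giving $p(t)\to0$ geometrically; since the mean is invariant, $x_i(t)\to\frac1n\sum_{j\in\mathcal V}x_j(0)$, i.e. average consensus is reached.

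The main obstacle is the apparent circularity between the two spectral bounds. The uniform lower bound on $\lambda_2$, which drives the contraction, is available only after $\|p\|$ is shown bounded; but that boundedness is itself secured only through the Gerschgorin upper bound combined with the step--size condition. The single hypothesis $h<1/(d_{\max}\alpha(0))$ must therefore do double duty: it first guarantees $h\lambda_{\max}<2$ so that $V$ cannot grow (keeping $\|p\|$, and hence the weights, bounded below), and it then guarantees $1-h\mu>-1$ so the contraction factor stays strictly below one. Verifying that this one bound suffices for both roles — rather than a weaker condition such as $h<2/\lambda_{\max}$ that would only prevent growth — is the delicate point of the proof.
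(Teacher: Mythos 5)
Your proposal is correct and follows essentially the same route as the paper's proof: the Lyapunov function $V=\|p\|^2$, the Gerschgorin bound $\lambda_{\max}(L_t)\le 2d_{\max}\alpha(0)$ to make the one-step increment nonpositive, boundedness of $\|p\|$ feeding Lemma \ref{bound} to get a uniform lower bound on $\lambda_2(L_t)$, and then strict decrease (the paper phrases this via the eigenvalues of $\Xi_t=-2hL_t+h^2L_t^2$ rather than your operator-norm contraction factor $\rho$ on $M^\perp$, but these are the same estimate). Your explicit remark that Lemma \ref{bound} needs only monotonicity and positivity of $\alpha(\cdot)$, not continuity, is a justification the paper uses tacitly when citing that lemma under Assumption 3.
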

\begin{proof} Consider $V(t)=||p(t)||^2$ as a Lyapunov function. It is obvious that $V$ is positive definite. And
\[
\begin{split}
V(t+1)-V(t)&=p^T(t)(I-hL_t)^2p(t)-p^T(t)p(t)\\
&=p^T(-2hL_t+h^2L_t^2)p.
\end{split}
\]
Let $\Xi_t=-2hL_t+h^2L_t^2$, the eigenvalues of $\Xi_t$ are denoted by $\xi_i=-2h\lambda_i(L_t)+h^2\lambda_i^2(L_t)=h\lambda_i(L_t)(h\lambda_i(L_t)-2)$, and it is straightforward to see that the eigenspace of $\Xi_t$ corresponding to $\xi_i$ is similar to the one of $L_t$ corresponding to $\lambda_i(L_t)$ for any $i\in\mathcal{V}$. From Gerschgorin Theorem, $\lambda_i(L_t)\leq \max\limits_{i\in\mathcal{V}} \{2\sum\limits_{j\in\mathcal{N}_i} \alpha_{ij}(x)\} \leq 2d_{max} \alpha(0)$. Therefore, $h\lambda_i(L_t)-2<\frac{1}{d_{max}\alpha(0)}\cdot 2d_{max}\alpha(0)-2=0$. Thus, $V(t+1)-V(t)=p^T\Xi_tp\leq0$. That is, $\Xi_t$ is negative definite. Since graph $\mathcal{G}$ is connected, together with Lemma \ref{rankL} and Lemma \ref{eigenvalue}, one has $H_0(\Xi_t\otimes I_m)=H_0(L_t\otimes I_m)=M$, and
\[
\begin{split}
V(t+1)-V(t)&\leq-\lambda_2(-\Xi_t) ||p-\pi_{H_0(\Xi_t\otimes I_m)}(p)||^2\\
&=-\lambda_2(-\Xi_t)||p-\pi_M(p)||^2 \\
&= -\lambda_2(-\Xi_t)||p||^2\leq0,
\end{split}
\]
where $\lambda_2(-\Xi_t)$ is the smallest nonzero eigenvalue of $-\Xi_t$. Then $||p||$ is bounded by $||p(0)||$. From Lemma \ref{bound}, $\lambda_2(L_t)$ is lower bounded, which implies that all the nonzero eigenvalues have lower bounded. Together with $\lambda_i(L_t)\leq 2d_{max} \alpha(0)$, there exists a constant $c>0$ such that $|\lambda_i(L_t)|\leq c$. Hence, there exists a $c'<0$ such that $-\lambda_2(-\Xi_t)\leq c'$. Then $V(t+1)-V(t)$ is negative definite. From Lyapunov's Theorem, $p\rightarrow0$ as $t\rightarrow\infty$. Note that $\sum_{i\in\mathcal{V}}x_i(t+1)=\sum_{i\in\mathcal{V}}x_i(t)$ in every step, which results in $\lim\limits_{t\rightarrow\infty}x_i(t)= \frac{1}{n}\sum_{i\in\mathcal{V}}x_i(0)$. That is, the average consensus is achieved. Together with the radial unboundedness of $V$, the conclusion is global.
\end{proof}

For agents with dynamics (\ref{dt2}), the following protocol is considered,
\begin{equation}\label{du2}
u_i(t)=-k_2 v_i(t)+ k_3 \sum\limits_{j=1}^nG_{ij} \alpha_{ij}(x(t))(x_j(t)-x_i(t)),
\end{equation}
where $k_2$, $k_3 >0$, $i=1,\cdots,n$.
\begin{theorem}\label{th du2}
Consider a system consisting of $n$ agents with dynamics (\ref{dt2}). Under Assumption 3, protocol (\ref{du2}) globally asymptotically solves the consensus problem if the communication graph is connected, and the following conditions for $k_1, k_2 , k_3 $ are satisfied,
\begin{equation}\label{d2c1}
k_2 <min\{2, k_1+1\},
\end{equation}
\begin{equation}\label{d2c2}
k_3 <min\{\frac{k_2 (2-k_2 )}{2d_{max}\alpha(0)k_1(k_1-k_2 +1)}, \frac{k_2 }{d_{max}\alpha(0)(k_1+1)}\}.
\end{equation}
Specifically, if the sum of the initial velocity of each agent is zero, the average consensus problem is solved.
\end{theorem}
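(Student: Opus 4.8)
\emph{Proof plan.} The plan is to mirror the continuous-time argument of Theorem~\ref{th cu21}, but with the graph potential realized by the discrete surrogate $W(t)$ of Proposition~\ref{prW} in place of $\frac12\sum_{i,j}\int_0^{\|x_{ij}\|^2}a_{ij}(s)\,ds$. First I would substitute protocol~(\ref{du2}) into~(\ref{dt2}) and pass to the error variables $p=x-\pi_M(x)$, $q=v-\pi_M(v)$. Since $L_t$ annihilates $M$, one obtains the closed reduced system
\[
p(t+1)=p(t)+k_1q(t),\qquad q(t+1)=(1-k_2)q(t)-k_3L_tp(t),
\]
so that $L_tx=L_tp$ and consensus is equivalent to $p\to0$, $q\to0$.

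The heart of the proof is the choice of Lyapunov function. Imitating the $\|kx+v\|^2$ device that produced position damping in Theorem~\ref{th cu21}, I would set $y(t)=\frac{k_2}{k_1}p(t)+q(t)$; a direct check gives the clean recursion $y(t+1)=y(t)-k_3L_tp(t)$, whose increment $\|y(t+1)\|^2-\|y(t)\|^2$ supplies the crucial \emph{negative} term $-\frac{2k_2k_3}{k_1}\,p^TL_tp$. I then propose
\[
V(t)=\|y(t)\|^2+\tfrac{1}{k_1}\|q(t)\|^2+c\,W(t),\qquad c=\frac{k_3(k_1-k_2+1)}{k_1^2},
\]
which is nonnegative because $W(t)\ge0$ by Proposition~\ref{prW}(1) and $c>0$ exactly when $k_2<k_1+1$. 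To bound the potential increment I would combine Proposition~\ref{prW}(3) with $\|x_{ij}(t+1)\|^2-\|x_{ij}(t)\|^2=2k_1p_{ij}^Tq_{ij}+k_1^2\|q_{ij}\|^2$, yielding $W(t+1)-W(t)\le 2k_1\,p^TL_tq+k_1^2\,q^TL_tq$.

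The main obstacle is establishing $V(t+1)-V(t)\le0$. Expanding the three increments produces, besides the good terms $-\frac{1}{k_1}k_2(2-k_2)\|q\|^2$ and $-\frac{2k_2k_3}{k_1}p^TL_tp$, the discretization errors $\left(1+\tfrac1{k_1}\right)k_3^2\,p^TL_t^2p$ and $ck_1^2\,q^TL_tq$, together with an indefinite cross term in $p^TL_tq$. The weights $\frac1{k_1}$ and $c$ chosen above are precisely those that cancel this cross term, which is the delicate point: unlike the continuous case, damping acts only on $q$, so the $p$-direction is controlled solely through $y$, and the balance is tight. For the surviving error terms I would invoke the Gerschgorin bound $\lambda_{\max}(L_t)\le2d_{\max}\alpha(0)$ used in Theorem~\ref{th du1}, so that $p^TL_t^2p\le2d_{\max}\alpha(0)\,p^TL_tp$ and $q^TL_tq\le2d_{\max}\alpha(0)\|q\|^2$. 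Demanding that the two resulting coefficients remain negative gives exactly the two thresholds in~(\ref{d2c2}), while $k_2<2$ (the other half of~(\ref{d2c1})) is what keeps $-k_2(2-k_2)<0$. Hence $V(t+1)-V(t)\le-\delta_1\|q(t)\|^2-\delta_2\,p(t)^TL_tp(t)$ for fixed $\delta_1,\delta_2>0$.

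It then remains to extract convergence. As in Theorem~\ref{th cu21}, the quadratic part $\|y\|^2+\frac1{k_1}\|q\|^2$ already dominates $\|p\|^2+\|q\|^2$ (using $k_2>0$), so the monotonicity $V(t)\le V(0)$ forces $\|p(t)\|$ to stay bounded; Lemma~\ref{bound} then yields $\lambda_2(L_t)\ge c_0>0$, and since $H_0(L_t)=M$ with $p\perp M$, Lemma~\ref{eigenvalue} gives $p^TL_tp\ge c_0\|p\|^2$. Summing the decay inequality over $t$ shows $\sum_t(\delta_1\|q(t)\|^2+\delta_2\,p(t)^TL_tp(t))<\infty$, so $\|q(t)\|\to0$ and $p^TL_tp\to0$, whence $\|p(t)\|\to0$; this is consensus. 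Finally, summing the $q$-recursion over $i$ and using $\mathbf{1}_n^TL_t=0$ gives $\sum_{i}v_i(t+1)=(1-k_2)\sum_{i}v_i(t)$, so $\sum_{i}v_i(0)=0$ makes $\sum_{i}x_i$ invariant and pins the common limit to $\frac1n\sum_{i}x_i(0)$, i.e.\ average consensus.
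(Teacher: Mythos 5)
Your proposal is correct, and its core is the same as the paper's: your $V(t)=\|y\|^2+\tfrac{1}{k_1}\|q\|^2+cW(t)$ is exactly the paper's Lyapunov function (\ref{lyapunov du21}) rescaled by $k_1^{-2}$ and written in the error variables $(p,q)$; your estimate $W(t+1)-W(t)\le 2k_1\,p^TL_tq+k_1^2\,q^TL_tq$ is the paper's application of Proposition \ref{prW}(3); and the Gerschgorin bound $\lambda_{\max}(L_t)\le 2d_{max}\alpha(0)$ produces precisely the paper's inequalities (\ref{d2cc1})--(\ref{d2cc2}), hence the thresholds (\ref{d2c1})--(\ref{d2c2}). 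Where you genuinely depart is the endgame. The paper invokes LaSalle's invariance principle on the compact sublevel set and concludes $v^T\Xi_{1t}v+x^T\Xi_{2t}x\to0$; you instead establish a uniform decrement $V(t+1)-V(t)\le-\delta_1\|q\|^2-\delta_2\,p^TL_tp$, import Lemma \ref{bound} and Lemma \ref{eigenvalue} (which the paper uses only in the proof of Theorem \ref{th du1}) to get $p^TL_tp\ge c_0\|p\|^2$, and finish by telescoping the decay inequality. This buys real rigor: LaSalle for a discrete-time system whose update matrix is state-dependent is delicate to justify, whereas your summability argument is elementary and self-contained (your use of Lemma \ref{bound} under Assumption 3 rather than Assumption 1 is fine, since its proof needs only monotonicity and positivity of $\alpha$, and the paper itself uses it this way in Theorem \ref{th du1}). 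Two further merits: the recursion $y(t+1)=y(t)-k_3L_tp(t)$ makes the cross-term cancellation transparent where the paper simply expands (\ref{t+1-t}), and your derivation confirms that the stray ``$h$'' in the paper's $\Xi_{2t}=-2k_1k_2k_3L_t+(k_1^2+h)k_3^2L_t^2$ should read $k_1$. Finally, for the consensus value you use the contraction $\mathbf{1}_n^Tv(t+1)=(1-k_2)\mathbf{1}_n^Tv(t)$ instead of the paper's invariant $U(t)$; this suffices for the stated average-consensus claim, though the paper's $U$ additionally identifies the limit $x^*=\frac{k_1k_3}{nk_2}U(0)$ for arbitrary initial velocities.
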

\begin{proof} Consider the following function as a Lyapunov function candidate,
\begin{equation}\label{lyapunov du21}
V(t)=||k_2 x+k_1v||^2+k_1||v||^2+ \frac{1}{2}k_3(k_1+1-k_2 )\sum\limits_{i\in\mathcal{V}} \sum\limits_{j\in\mathcal{V}} G_{ij}W_{ij}(t).
\end{equation}
Employing Proposition \ref{prW}, we have
\begin{equation}\label{t+1-t}
\begin{split}
V(t+1)-V(t)&\leq ||k_2 x(t+1)+k_1v(t+1)||^2- ||k_2 x(t)+k_1v(t)||^2\\
&~~~~+k_1||v(t+1)||^2-k_1||v(t)||^2\\
&~~~~+k_1^2k_3(k_1+1-k_2)v^TL_tv \\
&~~~~+2k_1k_3(k_1+1-k_2)x^TL_tv \\
&=v^T[k_1k_2(k_2-2)I+k_1^2k_3(k_1+1-k_2)L_t]v\\
&~~~~+x^T[-2k_1k_2k_3 L_t+(k_1^2+h)k_3 ^2L_t^2]x.
\end{split}
\end{equation}
Let $\Xi_{1t}=k_1k_2(k_2-2)I+k_1^2k_3(k_1+1-k_2)L_t$ and $\Xi_{2t}=-2k_1k_2k_3 L_t+(k_1^2+h)k_3 ^2L_t^2$. Then $V(t+1)-V(t)\leq0$ if $v^T\Xi_{1t}v+x^T\Xi_{2t}x\leq0$. To achieve this, we just require the following inequalities for any $i\in\mathcal{V}$.
\begin{equation}\label{d2cc1}
k_1k_2 (k_2 -2)+k_1^2k_3 (k_1+1-k_2 )\lambda_i<0,
\end{equation}
\begin{equation}\label{d2cc2}
-2k_1k_2 k_3 +(k_1^2+h)k_3 ^2\lambda_i<0.
\end{equation}
By Gerschgorin Theorem, it holds that $\lambda_i\leq \max\limits_{i\in\mathcal{V}} \{2\sum\limits_{j\in\mathcal{N}_i} \alpha_{ij}(x)\} \leq 2d_{max} \alpha(0)$. Hence, conditions (\ref{d2c1}) and (\ref{d2c2}) lead to (\ref{d2cc1}) and (\ref{d2cc2}). Consequently, $V(t+1)-V(t)\leq0$.

Since $k_2 <k_1+1$, together with the nonnegativity of $W_{ij}$ and the definition of $V$ in (\ref{lyapunov du21}), one has $\sqrt{V}\geq||k_2 x+k_1v||\geq||k_2 x||-||k_1v||$, and $\sqrt{V}\geq\sqrt{k_1}||v||$. Then $\sqrt{V}+2\sqrt{k_1V}\geq||k_2 x||+k_1||v||$. Therefore, if $||x||^2+||v||^2\rightarrow\infty$, we have
$$
V\geq\frac{\min\{k_2,k_1\}(||x||^2+||v||^2)} {(1+2\sqrt{k_1})^2}\rightarrow\infty.
$$
It follows the radial unboundedness of $V$ and the compactness of the invariant set $\Omega=\{V(t)\leq V(0)\}$. Invoking LaSalles's invariance principle, $v^T\Xi_{1t}v+x^T\Xi_{2t}x\rightarrow0$ as $t\rightarrow\infty$. Note that $v^T\Xi_{1t}v=0$ if and only if $v=0$, while $x^T\Xi_{2t}x=0$ if and only if $x\in H_0(\Xi_{2t})$. From the connectivity of graph $\mathcal{G}$ and Lemma \ref{rankL}, $H_0(\Xi_{2t})=H_0(L_t)=M$. Consequently, the position states of the agents globally asymptotically achieve consensus, and the velocity states of the agents globally asymptotically converge to the origin.

Now we explore the consensus state for the group of agents. Consider $U(t)=\frac{1}{k_3 }\sum_{i\in\mathcal{V}}v_i(t)+ \frac{k_2 }{k_1k_3 }\sum_{i\in\mathcal{V}}x_i(t)$. One has
$$
U(t+1)-U(t)=-\frac{k_2 }{k_3 }\sum_{i\in\mathcal{V}} v_i +\sum_{i\in\mathcal{V}} \sum_{j\in\mathcal{V}}G_{ij}\alpha_{ij}(x_j-x_i) +\frac{k_2}{k_3}\sum_{i\in\mathcal{V}} v_i=0.
$$
Therefore, let $x^*$ denote the consensus state, it follows that $U(x(0),v(0))=U(x^*,0)$. We finally have $x^*=\displaystyle{\frac{k_1k_3} {nk_2}U(0)}$.

Moreover, if $\sum_{i\in\mathcal{V}}v_i(0)=0$, it is clear that $x^*=\frac{1}{n}\sum_{i\in\mathcal{V}}x_i(0)$.
\end{proof}
\begin{remark}\label{re dt1}
$\alpha_{ij}(\cdot)$ in this subsection can be multiple for different pairs of agents and each one satisfies Assumption 3. If this change happens, let $\alpha_{max}(0)=\max\limits_{i,j\in\mathcal{V}}\alpha(0)$, the condition of $h$ in Theorem \ref{th du1} becomes to be $h<\frac{1}{d_{max}\alpha_{max}(0)}$ instead. The rest of the conclusions are undisturbed and the corresponding proofs are the same. For Theorem \ref{th du2}, there are various $w_{ij}$ due to different $\alpha_{ij}(\cdot)$, then $W(t)=\frac{1}{2}\sum\limits_{i\in\mathcal{V}} \sum\limits_{j\in\mathcal{V}}G_{ij}w_{ij}(||x_{ij}(t)||^2)$. By the similar approach, we can obtain the same result as Theorem \ref{th du2} except for replacing $\alpha(0)$ in (\ref{d2c2}) with $\alpha_{max}(0)$.
\end{remark}

\subsection{Consensus with State-dependent Connectivity of Networks}

For agents with dynamics (\ref{dt1}), the consensus protocol is given by
\begin{equation}\label{ddu1}
u_i(t)=h\sum\limits_{j=1}^n \alpha_{ij}(x(t))(x_j(t)-x_i(t)),
\end{equation}
where $h>0$ is the control gain.
\begin{theorem}\label{th ddu1}
Consider a system consisting of $n$ agents with dynamics (\ref{dt1}). Under Assumption 4, suppose $h<\frac{1}{(n-1)\alpha(0)}$, and there exists an $r\in[0,R^2)$, such that
\begin{equation}\label{ddu1con}
W(0)<(n-1)w(R^2).
\end{equation}
Then protocol (\ref{ddu1}) asymptotically solves the average consensus problem.
\end{theorem}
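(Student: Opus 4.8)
The plan is to mirror the continuous-time argument of Theorem \ref{th ccu1}, but to route the convergence step through the algebraic-connectivity estimate of Lemma \ref{bound} rather than LaSalle's principle, since Assumption 4 no longer requires $\alpha(\cdot)$ to be continuous and hence the map $x(t)\mapsto x(t+1)$ may be discontinuous. First I would write the closed loop of (\ref{dt1}) with (\ref{ddu1}) in matrix form as $x(t+1)=(I-hL_t)x(t)$, where $L_t$ is the state-dependent Laplacian carrying the weights $\alpha_{ij}(x(t))$. Because $\mathcal{G}$ is undirected, $L_t\mathbf{1}_n=0$ and $\sum_i x_i$ is conserved, so $\pi_M(x(t))$ is constant and $p(t)=x(t)-\pi_M(x(t))$ obeys the same recursion $p(t+1)=(I-hL_t)p(t)$. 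For the potential I would fix an $r\in[0,R^2)$ for which (\ref{ddu1con}) holds and work with $W(t)$, read (as with the continuous $V_1$ of Theorem \ref{th ccu1}) as the sum over all ordered pairs $\tfrac12\sum_i\sum_j w(\|x_{ij}(t)\|^2)$; note that $w$ is nondecreasing and saturates, $w(z)=w(R^2)=\max w$ for $z\geq R^2$.

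Second, I would establish two monotonicity facts. Using the per-pair inequality $w(b)-w(a)\leq\alpha(a)(b-a)$ underlying Proposition \ref{prW}(3) — which also holds for pairs beyond range, where $\alpha(a)=0$ and $w$ is already saturated so $w(b)-w(a)\leq0$ — summation over all pairs together with Lemma \ref{||x||}(3) gives $W(t+1)-W(t)\leq y^TL_ty-x^TL_tx$ for $y=(I-hL_t)x$, and expanding yields $y^TL_ty-x^TL_tx=-h\,x^TL_t^2(2I-hL_t)x$. The analogous computation for $\|p\|^2$ gives $\|p(t+1)\|^2-\|p(t)\|^2=-h\,p^TL_t(2I-hL_t)p$. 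The crucial estimate is that, since each agent has at most $n-1$ neighbours of weight $\leq\alpha(0)$, Gerschgorin gives $\lambda_{\max}(L_t)\leq 2(n-1)\alpha(0)$ for every $t$ irrespective of connectivity; hence $h<\tfrac{1}{(n-1)\alpha(0)}$ forces $2I-hL_t\succ0$. As $L_t\succeq0$ commutes with $2I-hL_t$, both $L_t^2(2I-hL_t)$ and $L_t(2I-hL_t)$ are positive semidefinite, so $W(t)$ and $\|p(t)\|$ are nonincreasing; in particular $W(t)\leq W(0)<(n-1)w(R^2)$ for all $t$.

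Third — the heart of the proof — I would show the communication graph never disconnects. Suppose $\mathcal{G}(t)$ is disconnected at some $t$. By Lemma \ref{n-1} there are at least $n-1$ pairs of nodes with no path between them; each such pair has no direct edge, so $\|x_{ij}(t)\|\geq R$ and therefore $w(\|x_{ij}(t)\|^2)=w(R^2)$. Counting each unordered pair twice in the double sum then gives $W(t)\geq(n-1)w(R^2)$, contradicting the bound $W(t)<(n-1)w(R^2)$ from the previous step. Hence $\mathcal{G}(t)$ stays connected for all $t\geq0$.

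Finally, with connectivity preserved and $\|p(t)\|$ bounded, Lemma \ref{bound} furnishes a constant $c>0$ with $\lambda_2(L_t)\geq c$. On $M^{\perp}$, where $p$ lives by Lemma \ref{rankL}, the eigenvalues of $L_t(2I-hL_t)$ are $\lambda(2-h\lambda)$ with $\lambda\in[c,2(n-1)\alpha(0)]\subset(0,2/h)$, which are bounded below by some $c''>0$; thus $\|p(t+1)\|^2\leq(1-hc'')\|p(t)\|^2$ with $1-hc''\in[0,1)$, so $p(t)\to0$ geometrically. Since $\pi_M(x)$ is conserved, the common limit is $\tfrac1n\sum_i x_i(0)$, i.e. average consensus. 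The main obstacle I anticipate is the monotonicity/connectivity-preservation step: it needs both the all-pairs reading of the potential together with the saturation of $w$ (so that disconnection is genuinely penalised by $(n-1)w(R^2)$), and the eigenvalue bound $\lambda_{\max}(L_t)\leq2(n-1)\alpha(0)$ that makes the complete-graph worst case, and hence the threshold $h<\tfrac{1}{(n-1)\alpha(0)}$, the sharp one. Closing convergence through Lemma \ref{bound} then sidesteps the discontinuity of $\alpha$ that would block a LaSalle-type argument.
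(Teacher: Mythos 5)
Your reductions --- the recursion $p(t+1)=(I-hL_t)p(t)$, the monotonicity of $\|p\|^2$ and of $W$ via Proposition \ref{prW}(3) together with the Gerschgorin bound $\lambda_{\max}(L_t)\leq 2(n-1)\alpha(0)$, and the counting step combining Lemma \ref{n-1} with the saturation $w(z)=w(R^2)$ for $z\geq R^2$ --- all match the paper's computations, and your observation that the contradiction already forces $\mathcal{G}(t)$ to stay connected at every \emph{finite} $t$ is correct and slightly stronger than the paper, which runs the same counting argument only at the steady state. The genuine gap is your final step. Lemma \ref{bound} is stated and proved under Assumption 1: its proof lower-bounds every edge weight by $\alpha(B)$, where $B$ bounds all pairwise distances, and this is positive only because Assumption 1 forces $\alpha>0$ everywhere. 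Under Assumption 4 the edge weights are $\alpha(\|x_{ij}\|^2)$ with $\|x_{ij}\|<R$, and Assumption 4 permits $\alpha(s)\rightarrow 0$ as $s\rightarrow R^2$ from below (e.g.\ $\alpha(s)=\max\{0,R^2-s\}$). Your invariant $W(t)<(n-1)w(R^2)$ does \emph{not} keep edge lengths uniformly away from $R$: since the slope of $w$ on the last subinterval before $R^2$ is $\alpha(\lceil z/r\rceil r)=0$, the function $w$ already saturates at $z=\lfloor R^2/r\rfloor\, r\leq R^2$, so a pair can sit at squared distance arbitrarily close to $R^2$ while contributing no more than $w(R^2)$ to $W$. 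Hence no uniform $c>0$ with $\lambda_2(L_t)\geq c$ exists, the spectrum of $L_t(2I-hL_t)$ on $M^{\perp}$ has no positive lower bound, and the geometric contraction $\|p(t+1)\|^2\leq(1-hc'')\|p(t)\|^2$ is unjustified; connectivity at every finite $t$ by itself gives nothing when weights may vanish asymptotically.

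The repair is exactly the paper's route, and your reason for avoiding it is unfounded: since $V(t)=\|x(t)\|^2$ is nonincreasing and bounded below, $V(t+1)-V(t)\rightarrow 0$ requires no continuity of the update map, and the uniform bound $2-h\lambda_i(L_t)\geq 2-2h(n-1)\alpha(0)>0$ then forces $x(t)^TL_tx(t)\rightarrow 0$. Along any convergent subsequence (the sublevel set $\{\|x\|\leq\|x(0)\|\}$ is compact) every pair of limit positions is either coincident or at distance at least $R$; the monotonicity of $\alpha$ suffices to pass to the limit even though $\alpha$ may be discontinuous. If the limit is not consensus, Lemma \ref{n-1} and the continuity of $w$ in $z$ give $\lim_{t\rightarrow\infty}W(t)\geq(n-1)w(R^2)$, which together with the monotonicity of $W$ contradicts (\ref{ddu1con}); conservation of $\sum_{i\in\mathcal{V}}x_i$ then identifies the limit as the average. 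In short: keep everything up to and including your connectivity-preservation step, but close the argument with this limit-point contradiction, as the paper does, rather than through Lemma \ref{bound}.
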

\begin{proof} Suppose (\ref{ddu1con}) holds. Consider $V(t)=||x(t)||^2$ as a Lyapunov function candidate, one has $V(t+1)-V(t)=x^T(-2hL_t+h^2L_t^2)x$. Let $\Xi_t=-2hL_t+h^2L_t^2$, the eigenvalues of $\Xi_t$ are $\xi_i=h\lambda_i(L_t)(h\lambda_i(L_t)-2)\leq h\lambda_i (\frac{1}{n-1}\alpha(0)\cdot 2d_{max}\alpha(0)-2)\leq0$, $i\in\mathcal{V}$. Then $\Omega=\{x~|~||x||\leq||x(0)||\}$ is positively invariant and compact. Consequently, $V(t+1)-V(t)\rightarrow0$ as $t\rightarrow\infty$. That is, $x_i=x_j$ or $||x_i-x_j||\geq R$ when $t\rightarrow\infty$. Suppose consensus is not reached. Employing Lemma \ref{n-1}, we have $W(t)\geq(n-1)w(R^2)$ as $t\rightarrow\infty$.

For $W(t)$, from Proposition \ref{prW}, the following holds,
\[
\begin{split}
W(t+1)-W(t)&\leq u^TL_tu+2x^TL_tu\\
&= h^2x^TL_t^3x-2hx^TL_t^2x.
\end{split}
\]
Since $h^2\lambda_i^3(L_t)-2h\lambda_i^2(L_t)= \lambda_i(L_t)\xi_i\leq0$, we have $\frac{1}{2}(n-1)w(R^2)\leq W(t)\leq W(0)$, which conflicts with (\ref{ddu1con}). We then obtain the conclusion.
\end{proof}

For agents with dynamics (\ref{dt2}), the following protocol is considered:
\begin{equation}\label{ddu2}
u_i(t)=-k_2 v_i(t)+ k_3 \sum\limits_{j=1}^n \alpha_{ij}(x(t))(x_j(t)-x_i(t)),
\end{equation}
where $k_2 $, $k_3 >0$, $i=1,\cdots,n$.
\begin{theorem}\label{th ddu2}
Consider a system consisting of $n$ agents with dynamics (\ref{dt2}). Under Assumption 4, suppose that $k_1, k_2$ and $k_3 $ satisfy (\ref{d2c1}) and
\begin{equation}\label{d21c2}
k_3 <min\{\frac{k_2 (2-k_2 )}{2(n-1)\alpha(0)k_1(k_1-k_2 +1)}, \frac{k_2 }{(n-1)\alpha(0)(k_1+1)}\}.
\end{equation}
And there exists an $r\in[0,R^2)$, such that
\begin{multline}\label{ddu2con}
||k_2 x(0)||^2+2k_1k_2 x(0)^Tv(0)+(k_1^2+k_1)||v(0)||^2+ \\ \frac{1}{2}\sum\limits_{i\in\mathcal{V}} \sum\limits_{j\in\mathcal{V}}(k_1+1-k_2 )k_3 W_{ij}(0) < (k_1+1-k_2)k_3 (n-1) w(R^2).
\end{multline}
Then protocol (\ref{ddu2}) asymptotically solves the consensus problem.
\end{theorem}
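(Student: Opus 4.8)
The plan is to combine the energy-decrease computation of Theorem \ref{th du2} with a connectivity-maintenance argument in the spirit of Theorem \ref{th ddu1}. I would take as Lyapunov function the discrete analogue of (\ref{lyapunov du21}) adapted to state-dependent connectivity,
\[
V(t)=\|k_2 x+k_1 v\|^2+k_1\|v\|^2+\tfrac{1}{2}k_3(k_1+1-k_2)\sum_{i\in\mathcal{V}}\sum_{j\in\mathcal{V}}W_{ij}(t),
\]
so that, by construction, $V(0)$ is exactly the left-hand side of (\ref{ddu2con}). First I would show $V$ is nonincreasing. Using Proposition \ref{prW}(3) (in which $G_{ij}\alpha_{ij}=\alpha_{ij}$ under Assumption 4) to dominate the increment of the potential term, the computation is identical to (\ref{t+1-t}) and yields $V(t+1)-V(t)\le v^T\Xi_{1t}v+x^T\Xi_{2t}x$ with $\Xi_{1t},\Xi_{2t}$ as in the proof of Theorem \ref{th du2}. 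Since every agent has at most $n-1$ neighbours, Gerschgorin's theorem gives $\lambda_i(L_t)\le 2(n-1)\alpha(0)$, and conditions (\ref{d2c1}) and (\ref{d21c2}) are precisely what is needed to make $\Xi_{1t}$ negative definite uniformly and $\Xi_{2t}\preceq0$ with $H_0(\Xi_{2t})=H_0(L_t)$. Radial unboundedness of $V$ (same estimate as in Theorem \ref{th du2}, using $k_2<k_1+1$) then makes the sublevel set $\Omega=\{V(t)\le V(0)\}$ compact and positively invariant, so $x$, $v$, and hence $p$, stay bounded.

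The decisive step, and the one I expect to be the main obstacle, is to upgrade this into connectivity of $\mathcal{G}(t)$ at \emph{every} $t$ rather than merely in the limit. Because the quadratic part of $V$ is nonnegative, $V(t)\le V(0)$ forces
\[
\tfrac{1}{2}\sum_{i\in\mathcal{V}}\sum_{j\in\mathcal{V}}W_{ij}(t)\le\frac{V(0)}{k_3(k_1+1-k_2)}<(n-1)w(R^2)
\]
for all $t$, the last inequality being (\ref{ddu2con}). If $\mathcal{G}(t)$ were disconnected at some step, Lemma \ref{n-1} would produce at least $n-1$ pairs with $\|x_i-x_j\|\ge R$, each contributing $W_{ij}\ge w(R^2)$ by the monotonicity of $w$ (Proposition \ref{prW}(2)), whence $\tfrac{1}{2}\sum_{i,j}W_{ij}(t)\ge(n-1)w(R^2)$, a contradiction. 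Thus the chosen initial bound propagates forward and keeps the graph connected at all times.

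Once connectivity holds for all $t$, I would finish by a summability (discrete LaSalle) argument. The uniform estimates $v^T\Xi_{1t}v\le-\mu_1\|v\|^2$ and $x^T\Xi_{2t}x\le-c'\,x^TL_tx$ (both following from the Gerschgorin bound together with (\ref{d2c1}) and (\ref{d21c2})) give $V(t)-V(t+1)\ge\mu_1\|v(t)\|^2+c'\,x(t)^TL_tx(t)$; summing over $t$ and using $V\ge0$ shows $\|v(t)\|\to0$ and $x(t)^TL_tx(t)\to0$. Since $\mathcal{G}(t)$ is connected and $\|p\|$ is bounded, Lemma \ref{bound} yields a uniform bound $\lambda_2(L_t)\ge c>0$; combined with $H_0(L_t\otimes I_m)=M$ (Lemma \ref{rankL}) and Lemma \ref{eigenvalue} this gives $x^TL_tx\ge c\|p\|^2$, so $\|p(t)\|\to0$. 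Hence the positions reach consensus while the velocities vanish, which is the claim; if desired, the conserved quantity $U(t)$ of Theorem \ref{th du2} pins down the common limit. The only genuine delicacy is passing from $x(t)^TL_tx(t)\to0$ to $\|p(t)\|\to0$ when the weight is possibly discontinuous and time-varying, and the uniform $\lambda_2$ lower bound furnished by Lemma \ref{bound} is exactly what circumvents it.
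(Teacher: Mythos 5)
Your Lyapunov function, the decrease estimate via Proposition \ref{prW}, the Gerschgorin bound with $d_{max}=n-1$, and the identification of the left-hand side of (\ref{ddu2con}) with $V(0)$ all coincide with the paper's proof, and your intermediate claim that $\mathcal{G}(t)$ remains connected at \emph{every} step (via Lemma \ref{n-1} and the monotonicity of $w$) is correct and in fact sharper than what the paper states; your telescoping-sum replacement of LaSalle is also a sound move, since invoking an invariance principle for a discontinuous, state-dependent update map is delicate. The genuine gap is in your last step: Lemma \ref{bound} is stated, and its proof only works, under Assumption 1. Its argument bounds every edge weight below by $\alpha(B)$, where $B$ is an upper bound on pairwise distances; under Assumption 4 adjacent agents may sit at distances arbitrarily close to $R$, and $\alpha(s)$ may tend to $0$ as $s\to R^{2-}$ (e.g., the smoothed weight (\ref{c alpha})), so boundedness of $\|p\|$ together with connectivity yields \emph{no} uniform positive lower bound on $\lambda_2(L_t)$. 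Consequently the inequality $x^TL_tx\ge c\|p\|^2$ is unavailable, and $x^TL_tx=\frac{1}{2}\sum_{i,j}\alpha_{ij}\|x_{ij}\|^2\to0$ does not by itself force $\|p\|\to0$: the sum can vanish asymptotically with several pairs hovering just below distance $R$ while their weights collapse.

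The repair needs no spectral bound and is precisely the paper's closing contradiction, for which you already hold every ingredient. From $\|v(t)\|\to0$, $x^TL_tx\to0$ and compactness of the sublevel set, any limit configuration of $p$ has each pair either coincident or at distance $\ge R$: a pair with limit distance $d\in(0,R)$ would contribute at least $\alpha\bigl(\tfrac{d^2+R^2}{2}\bigr)\tfrac{d^2}{2}>0$ to $x^TL_tx$ along the subsequence, since $\alpha>0$ on $[0,R^2)$. If some limit point were not consensus, Lemma \ref{n-1} would give at least $n-1$ pairs at distance $\ge R$ there, and because $w$ (unlike $\alpha$) is continuous as well as nondecreasing, $\liminf_t \frac{1}{2}\sum_{i,j}W_{ij}(t)\ge(n-1)w(R^2)$, contradicting the uniform bound $\frac{1}{2}\sum_{i,j}W_{ij}(t)\le V(0)/\bigl(k_3(k_1+1-k_2)\bigr)<(n-1)w(R^2)$ that you yourself derived for the connectivity argument. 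Since $p$ has zero mean, every all-equal limit point is $p^*=0$, hence $\|p(t)\|\to0$. As a side remark, your $\lambda_2$ route is salvageable in the special case $\alpha(R^{2-})>0$ (e.g., the indicator weight (\ref{d alpha})), by $\lambda_2(L_t)\ge\alpha(R^{2-})\lambda_2(\bar L_t)$ and minimizing over the finitely many connected unweighted graphs on $n$ nodes, but Assumption 4 does not guarantee that case.
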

\begin{proof} Suppose condition (\ref{ddu2con}) holds. Let $G_{ij}=1$ for any $i,j\in\mathcal{V}$, (\ref{lyapunov du21}) is considered as the Lyapunov function candidate. From the radial unboundedness of $V$, we get the compactness of $\{x,v|~V(t)\leq V(0)\}$. Due to the fact that (\ref{d2c1}) and (\ref{d2c2}) are satisfied, together with (\ref{t+1-t}), one has $V(t+1)-V(t)\leq0$, and $V(t+1)-V(t)=0$ if and only if $v^T\Xi_{1t}v+x^T\Xi_{2t}x=0$. By invoking LaSalle's invariance principle, we have $v^T\Xi_{1t}v+x^T\Xi_{2t}x\rightarrow0$ when $t\rightarrow\infty$. That is, $v_i\rightarrow0$, $||x_i-x_j||\rightarrow0$ or $||x_i-x_j||\geq R$, for $t\rightarrow\infty$. Suppose consensus is not achieved. By employing Lemma \ref{n-1}, it follows that
\[ \begin{split}
V(0)\geq V(t)&\geq\frac{1}{2}\sum\limits_{i\in\mathcal{V}} \sum\limits_{j\in\mathcal{V}}(k_1+1-k_2 )k_3 W_{ij}(t) \\&\geq \frac{1}{2} (k_1+1-k_2)k_3 (n-1)w(R^2),
\end{split}
\]
as $t\rightarrow\infty$. This contradicts with (\ref{ddu2con}). Therefore consensus is achieved asymptotically.
\end{proof}

\begin{remark}\label{contiuity}
Under Assumption 4, note that $W(R^2)$=0 if $r=R^2$. Then (\ref{ddu1con}) and (\ref{ddu2con}) will never be satisfied. Therefore, $r<R^2$ is necessary in Theorem \ref{th ddu1} and Theorem \ref{th ddu2}. Moreover, when $r$ is changed, the validy of (\ref{ddu1con}) or (\ref{ddu2con}) may also be changed. Although smaller $r$ make $W(R^2)$ larger, but it does not mean that smaller $r$ is more possible to satisfy the conditions, because $W(z)$ will also become larger. The examples in Section 5.2 will show us this in detail.
\end{remark}

\section{Applications and Simulations}

\subsection{Applications to the Transmission Law of C-S Model}

In C-S model \cite{Cucker07}, the communication weight between any two agents is set as
\begin{equation}\label{csaij}
a_{ij}=\frac{H}{(1+||x_i-x_j||^2)^\beta},
\end{equation}
where $H>0$ and $\beta\geq0$ are system parameters. That is, $\alpha(s)=\frac{H}{(1+s)^\beta}$, $\mathcal{G}$ is a complete graph.

We now solve the consensus problem for a group of mobile agents applying (\ref{csaij}) as the information transmission weight.

For agents with single integrator dynamics and protocol (\ref{cu1}), Fig. \ref{fig cu1} describes the evolution of the agents, which consists of $30$ agents with random initial states. Fig. \ref{fig du1} gives the simulation of the system (\ref{dt1}) with protocol (\ref{du1}).

\begin{figure}
\begin{minipage}[t]{0.5\linewidth}
\centering
\includegraphics[width=0.8\textwidth]{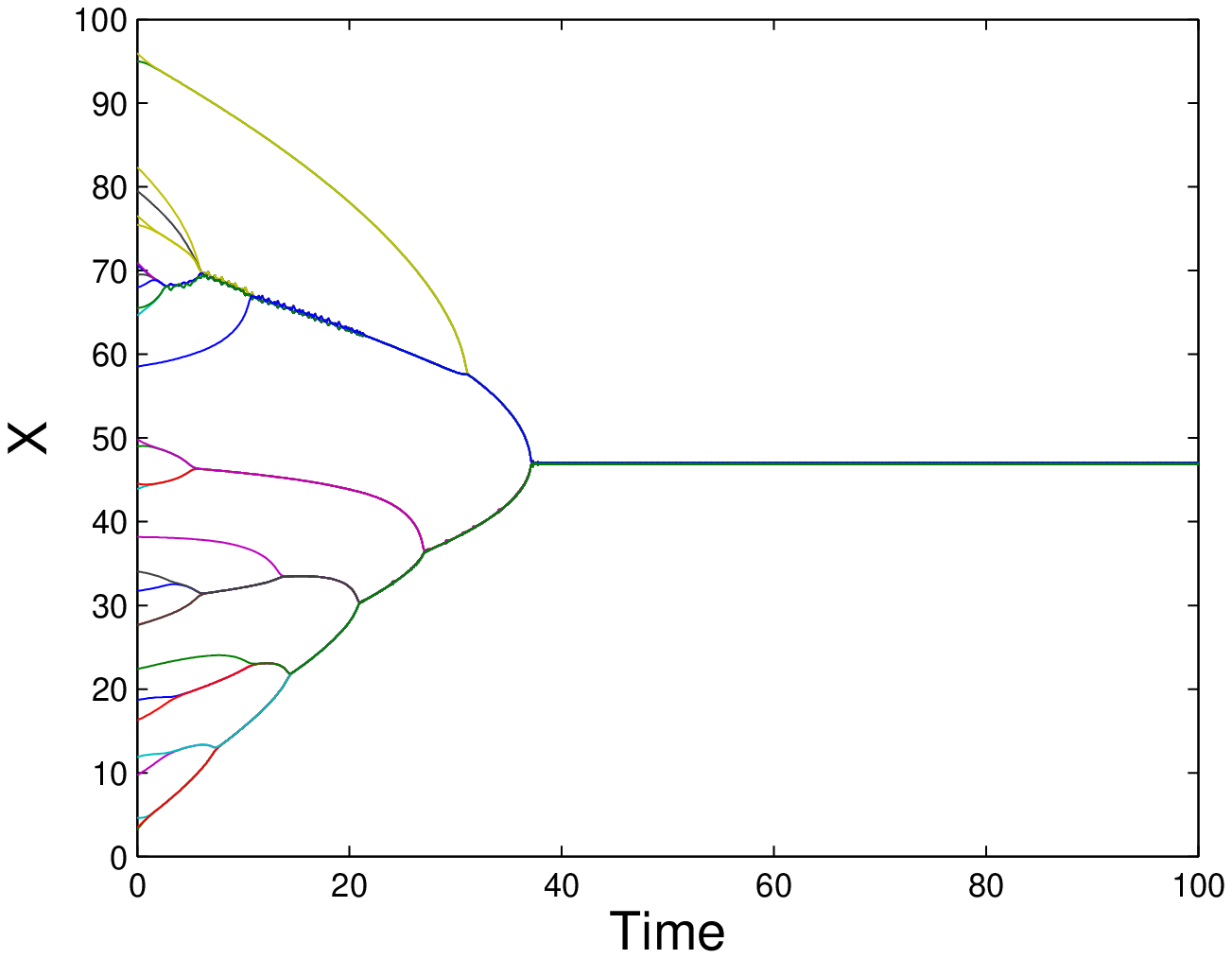}
\caption{Agents with dynamics (\ref{ct1}) and protocol (\ref{cu1}).}\label{fig cu1}
\end{minipage}
\begin{minipage}[t]{0.5\linewidth}
\centering
\includegraphics[width=0.8\textwidth]{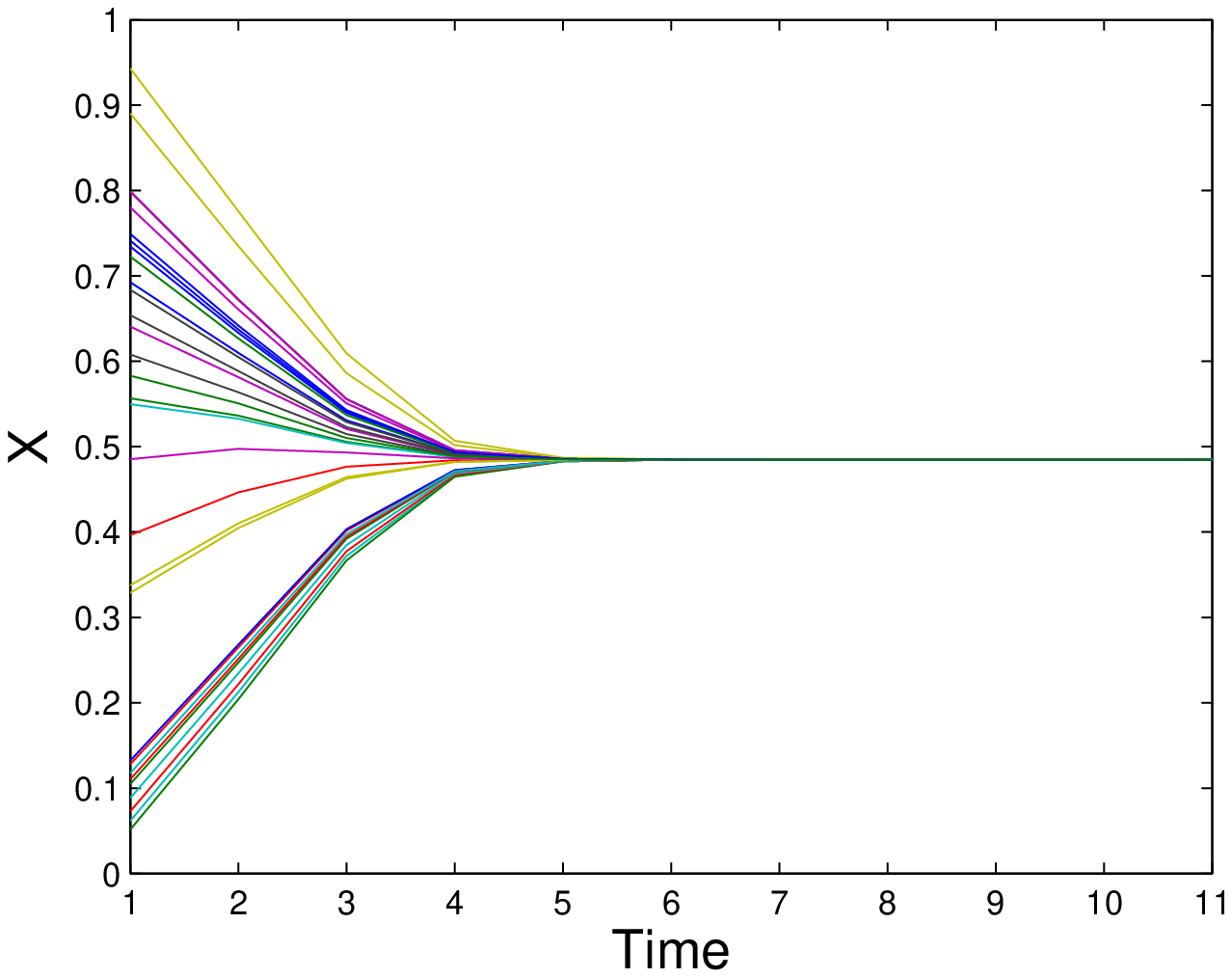}
\caption{Agents with dynamics (\ref{dt1}) and protocol (\ref{du1}).}\label{fig du1}
\end{minipage}
\end{figure}

For agents with double integrator dynamics (\ref{ct2}), we consider a multi-agent system consisting of $6$ agents, each agent is of dynamics (\ref{ct2}) and employs protocol (\ref{cu21}) with $\alpha(s)=\frac{H}{(1+s)^{\beta}}$, $H=1$, $\beta=3$, $k=1$, $\mathcal{G}$ is a complete graph. According to Theorem \ref{th cu21}, consensus can be achieved under arbitrary initial states. Fig. \ref{fig cu21} shows the results. Moreover, by employing the same $\alpha(s)$ with $H=1$ and $\beta=1$, let $k_1=1$, $k_2=1.5$, $k_3=0.14$. Fig. \ref{fig du2} describes the evolution of the agents with dynamics (\ref{dt2}).
\begin{figure}[htbp]
\centering
\includegraphics[width=0.4\textwidth]{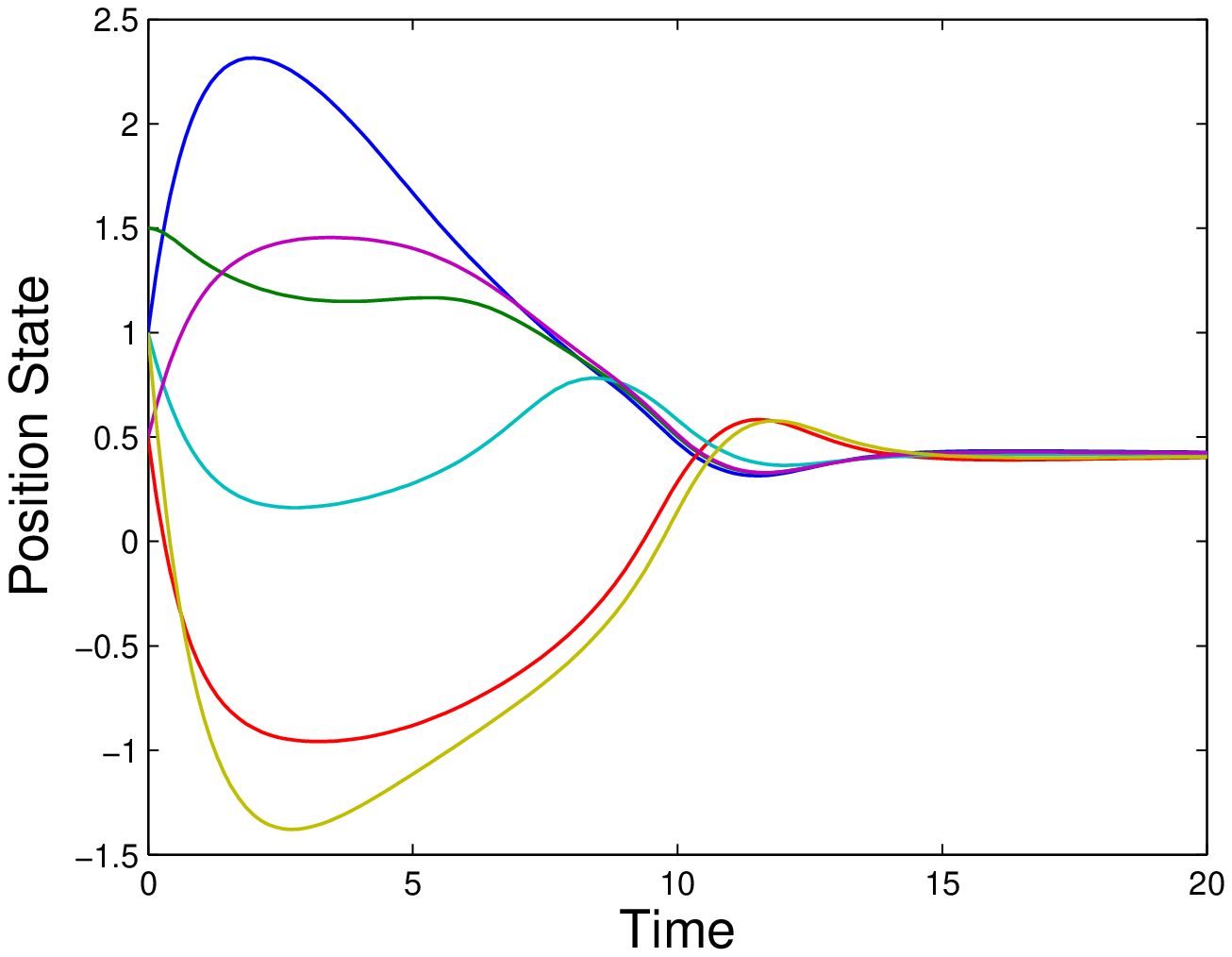}
\includegraphics[width=0.4\textwidth]{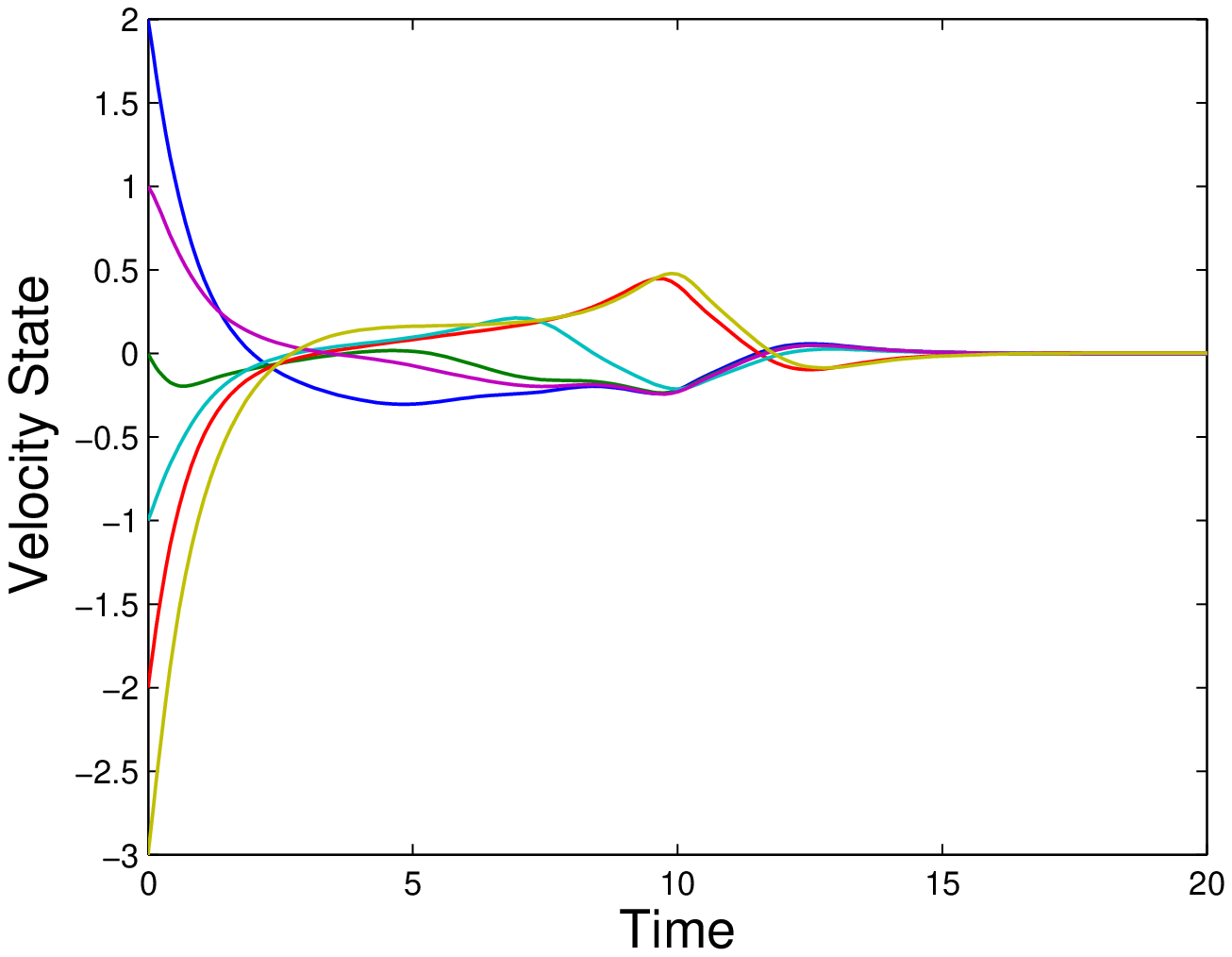}
\caption{Agents with dynamics (\ref{ct2}) and protocol (\ref{cu21}).}\label{fig cu21}
\end{figure}
\begin{figure}[htbp]
\centering
\includegraphics[width=0.4\textwidth]{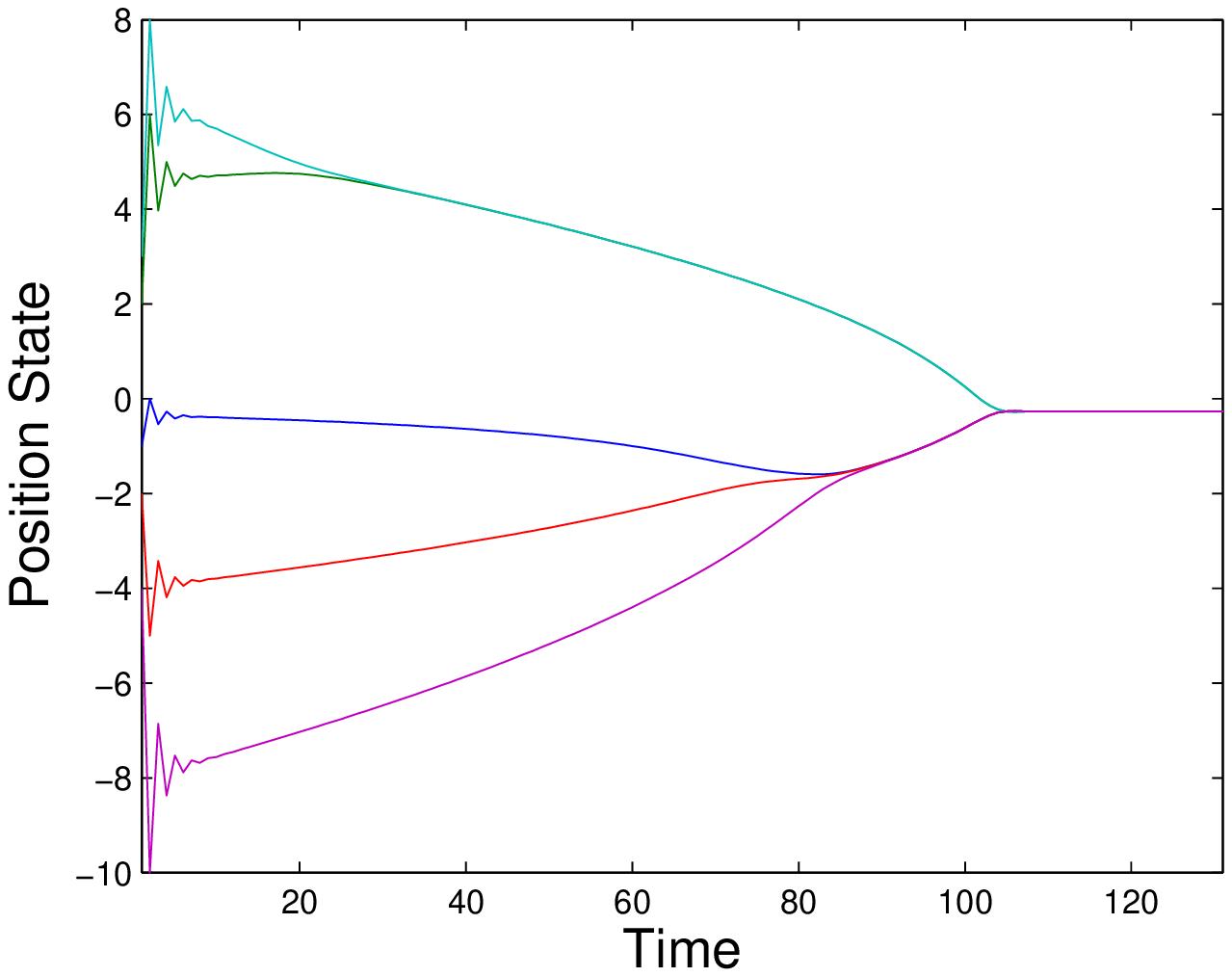}
\includegraphics[width=0.4\textwidth]{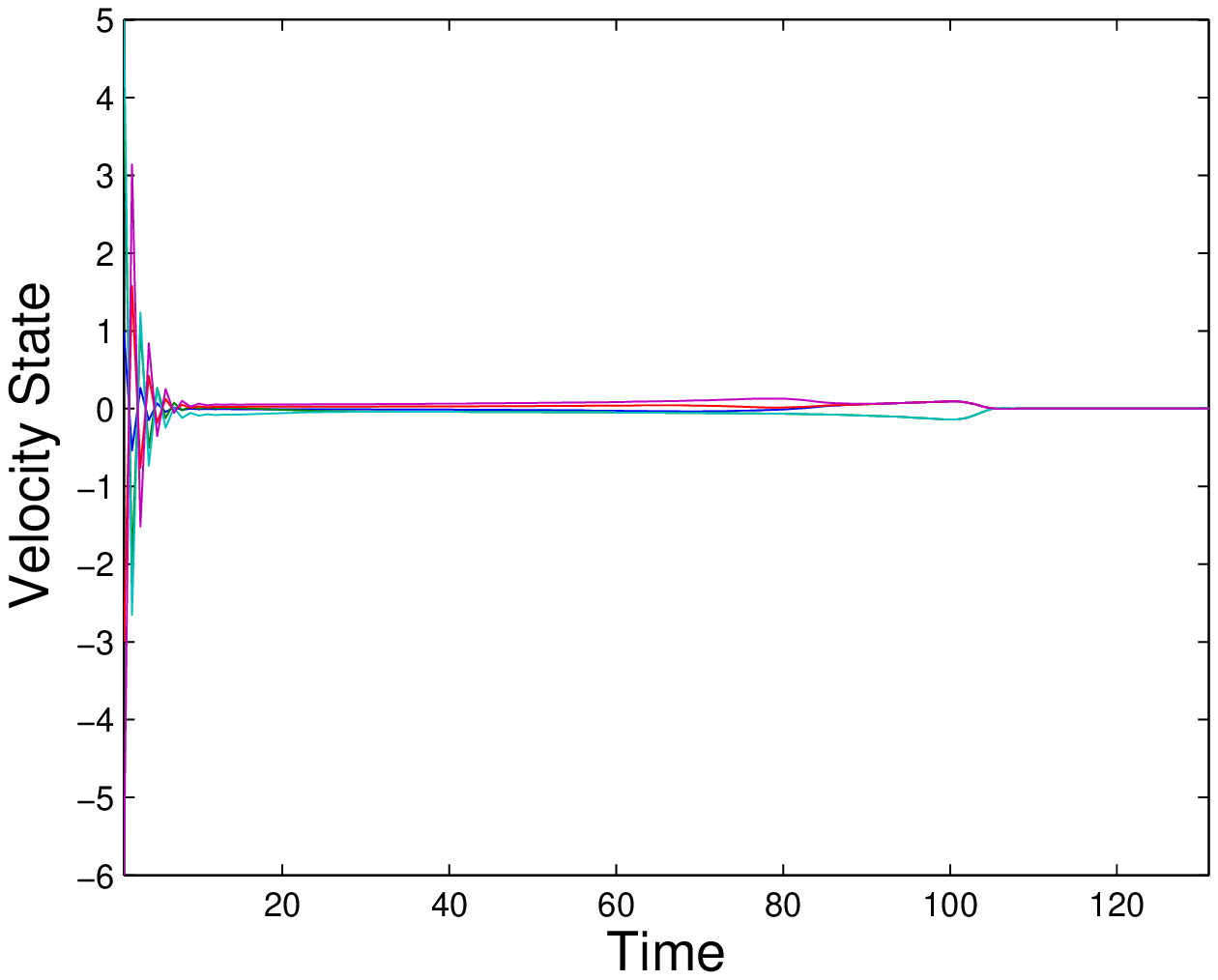}
\caption{Agents with dynamics (\ref{dt2}) and protocol (\ref{du2}).}\label{fig du2}
\end{figure}

When protocol (\ref{cu22}) is applied, it is necessary to explore a condition for $\alpha$ to solve the consensus problem. According to Theorem \ref{th cu22}, one just requires $\int_0^\infty\alpha(s)ds=\infty$ to realize consensus. Note that
\[
\int_0^\infty\frac{H}{(1+s)^\beta}ds=\begin{cases}
\frac{H}{1-\beta}(1+s)^{1-\beta}\big|_0^\infty, & \beta\neq1,\\
H\ln(1+s)\big|_0^\infty, & \beta=1.
\end{cases}
\]
Therefore, if $\beta\leq1$, then $\int_0^\infty\alpha(s)ds=\infty$. That is, the average consensus is asymptotically reached. Otherwise, if $\beta>1$, $\int_0^\infty\alpha(s)ds<\infty$, due to Corollary \ref{co cu22}, the average consensus is achieved if the following inequality holds:
\begin{equation}\label{cscon}
||v(0)||^2+ \frac{1}{2}\sum\limits_{i\in\mathcal{V}} \sum\limits_{j\in\mathcal{V}}\int_0^{||x_i(0)-x_j(0)||^2}\alpha(s)ds< (n-1)\int_0^{\infty}\alpha(s)ds.
\end{equation}
Now we investigate a system consisting of $6$ agents with dynamics (\ref{ct2}) and protocol (\ref{cu22}), the initial states of the agents and $\alpha(s)$ are chosen the same as the ones in the last example. It is clear that consensus is failed to be reached in Fig. \ref{fig cu22f} since condition (\ref{cscon}) is not satisfied. When we set $H=150$ and $\beta=3$, (\ref{cscon}) is guaranteed and the average consensus is asymptotically achieved, as shown in Fig. \ref{fig cu22s}.

\begin{figure}[htbp]
\centering
\includegraphics[width=0.4\textwidth]{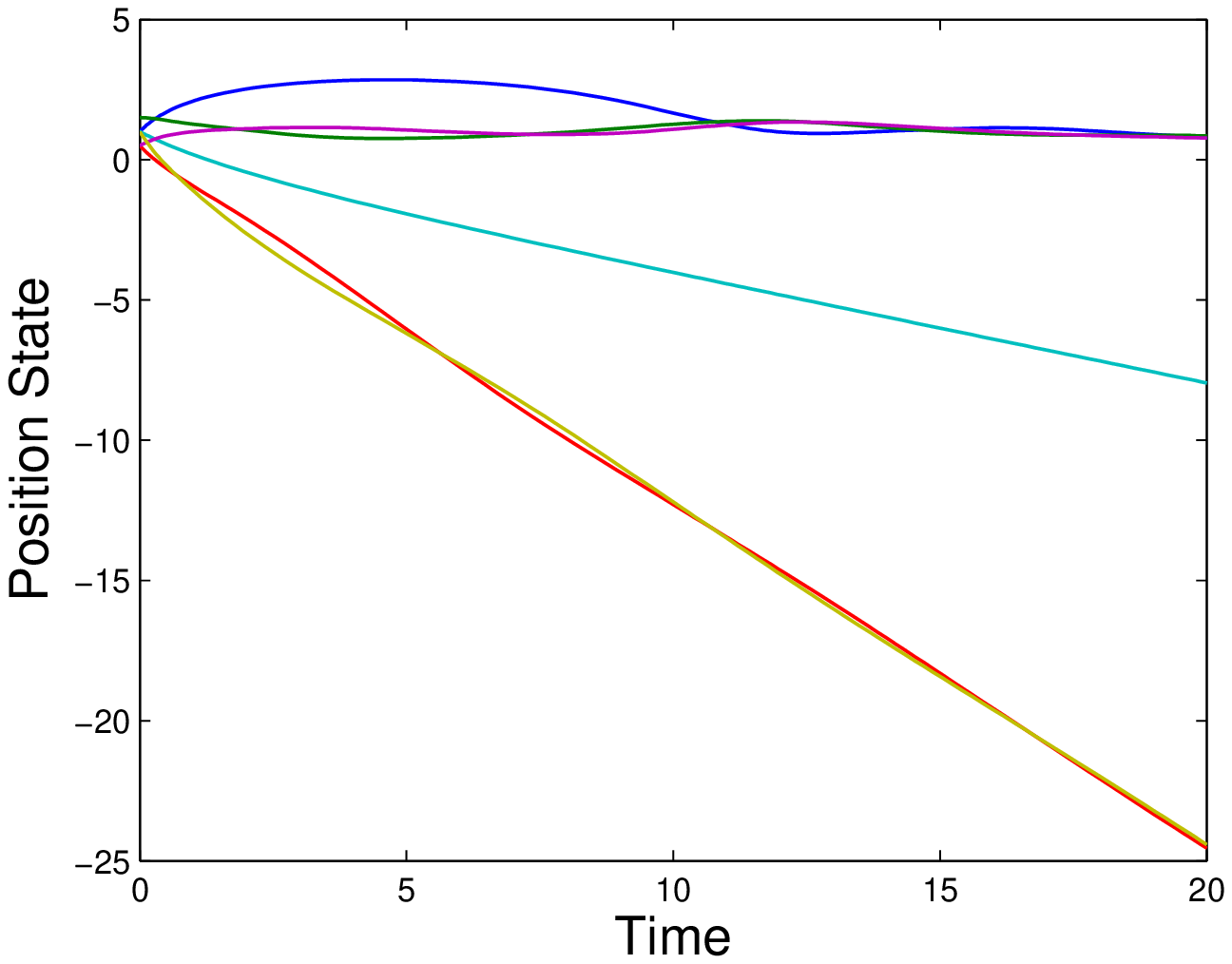}
\includegraphics[width=0.4\textwidth]{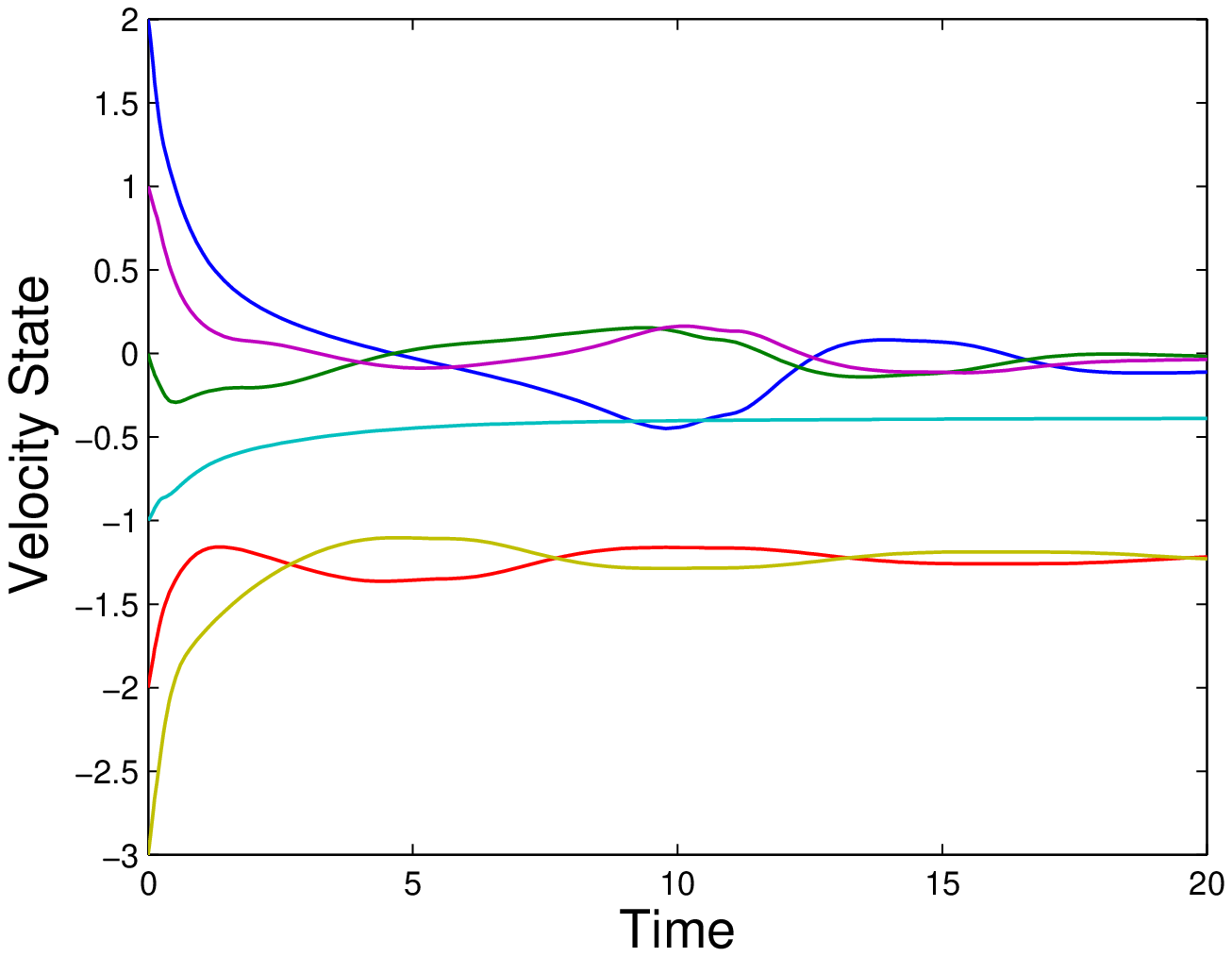}
\caption{Agents with dynamics (\ref{ct2}) and protocol (\ref{cu22}), $H=1$.}\label{fig cu22f}
\end{figure}

\begin{figure}[htbp]
\centering
\includegraphics[width=0.4\textwidth]{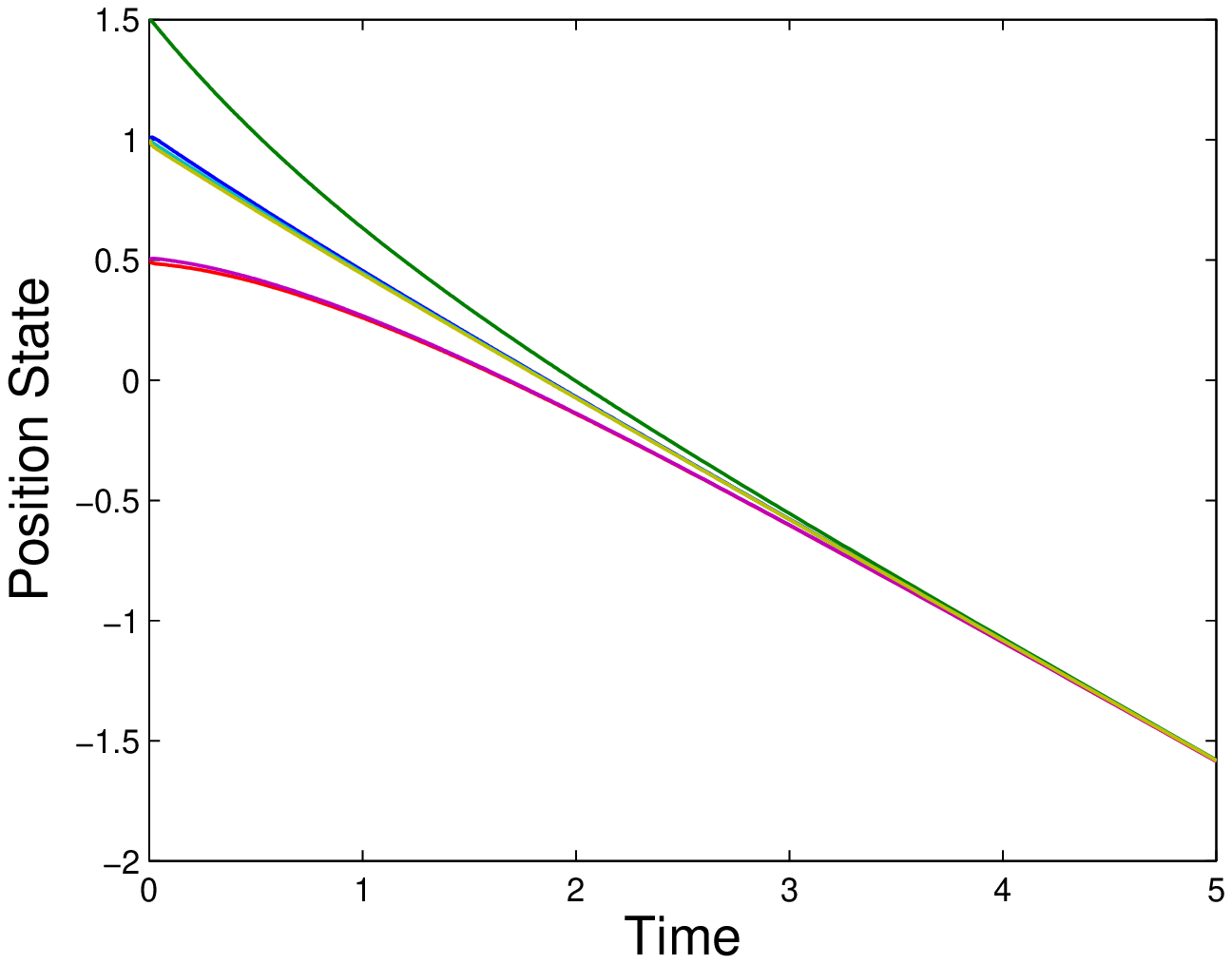}
\includegraphics[width=0.4\textwidth]{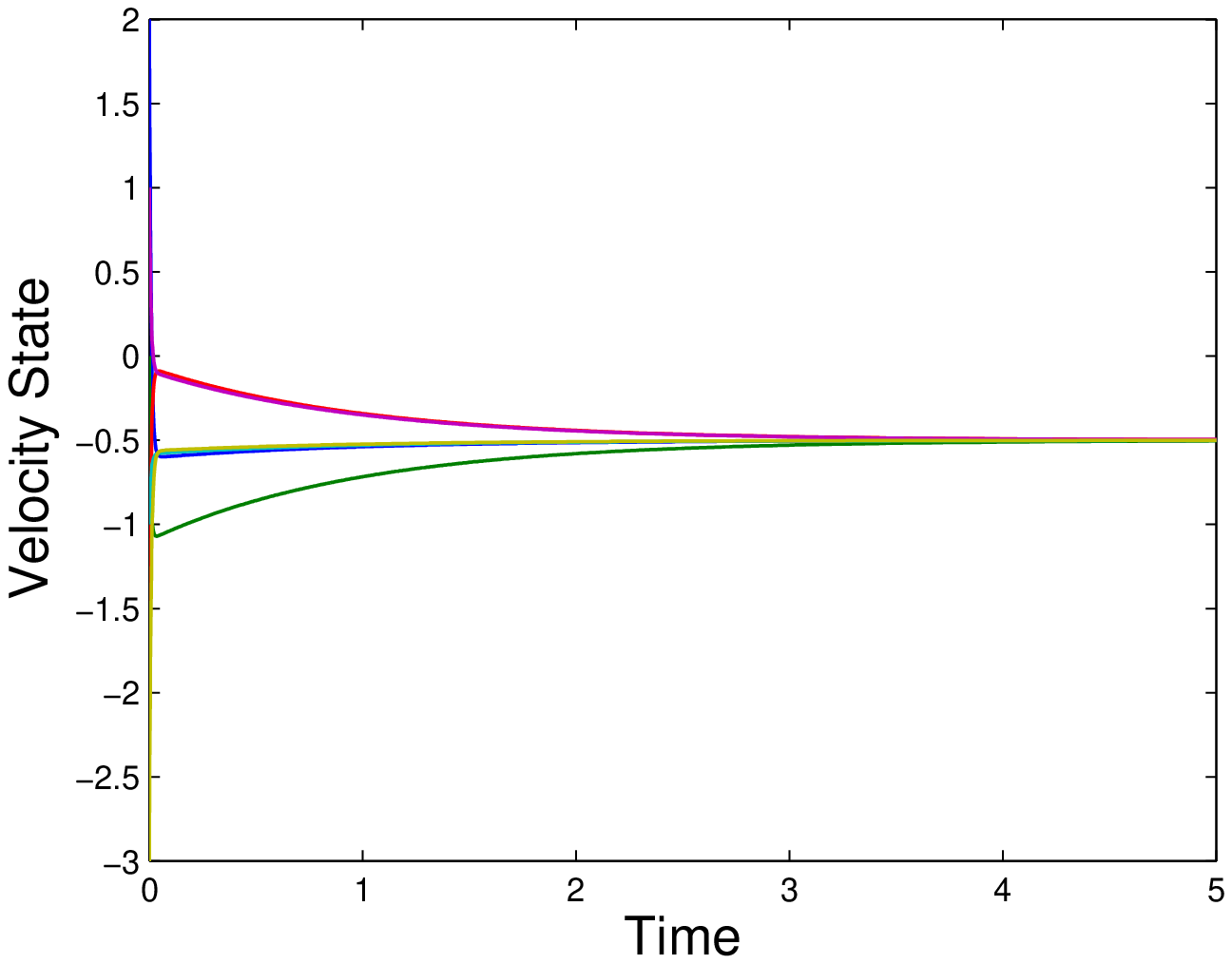}
\caption{Agents with dynamics (\ref{ct2}) and protocol (\ref{cu22}), $H=150$.}\label{fig cu22s}
\end{figure}

\subsection{Applications to Opinion Dynamics}

In this section, we consider the consensus problem of opinion formation among a group of agents. In detail, each agent keeps a real number as its opinion and updates it by taking a weighted average for the opinions of its neighbors. Two agents are called neighbors if their opinions keep a distance less than a constant(called by the confidence bound). Moreover, the weights may change with the evolution of the opinions. In the following, we will see that under a specified opinion-dependent dynamics, what kind of initial profiles can lead to a consensus.

For continuous-time agents, the following smoothed model is considered:
\begin{equation}\label{smooth}
\dot{x}_i=\sum_{i\in\mathcal{V}}\alpha(||x_i-x_j||^2)(x_j-x_i),
\end{equation}
where
\begin{equation}\label{c alpha}
\alpha(s)=\begin{cases}
c, & 0\leq s< (R-\varepsilon)^2,\\
f(s), & (R-\varepsilon)^2\leq s<R^2,\\
0, & s\geq R^2.
\end{cases}
\end{equation}
$x_i\in\mathbb{R}$ denotes the opinion of agent $i$, $c>0$ is the communication weight between neighbors, $R>0$ is the bound of confidence, $f(s)$ is a nonincreasing and Lipschitz continuous function of $s$ in $[(R-\varepsilon)^2, R^2]$, and $f((R-\varepsilon)^2)=c$, $f(R^2)=0$. This smoothed model makes such an assumption that when the opinion of agent $j$ is running out of the confidence bound of agent $i$, the information transmission between them vanishes smoothly. In \cite{Ceragioli12}, $\varepsilon$ is set by a sequence which $f(s)$ closely depends on, $i.e.$, $f(s)=\frac{c}{\varepsilon}(R-\sqrt{s})$, this model is called an $\varepsilon$ approximation for H-K model. It is obvious that Theorem \ref{th ccu1} can be applied to this model. Therefore, the average consensus can be reached if the initial states of agents satisfy (\ref{ccu1con}).

In fact, if the initial opinions are symmetrically distributed, we can obtain a more relaxed condition.

Consider a system consisting of $n$ agents, agent $i$ keeps a real number $x_i$ as its opinion. Assume that $x_i\leq x_j$ if $i\leq j$. We say the states are symmetrically distributed if there exists a real number $x_0$, such that $x_0=\frac{x_i+x_j}{2}$ for any $i+j=n$. We present the following proposition, the relevant proof is presented in Section 7.
\begin{proposition}\label{pr 2n-3}
Consider model (\ref{smooth}) with $n\geq4$ agents, suppose the initial states of the agents are symmetrically distributed. For any $t>0$, if the communication graph is disconnected, there are at least $2n-3$ pairs of disconnected agents.
\end{proposition}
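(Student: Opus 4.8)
The plan is to split the statement into a dynamical part and a purely combinatorial part. The dynamical part shows that the reflection symmetry present at $t=0$ is preserved by the flow of (\ref{smooth}) for every $t>0$; the combinatorial part then counts, on a symmetric configuration of $n$ points on the line, how many pairs must lie in distinct connected components once the graph is disconnected. Throughout I identify a ``disconnected pair'' with a pair lying in different components, which on the line forces their distance to be at least $R$; this matches the way Lemma \ref{n-1} is invoked in the earlier proofs.

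First I would establish invariance of the symmetry. Let $\sigma(y)=2x_0-y$ and set $y_i(t)=2x_0-x_{n+1-i}(t)$. A direct substitution shows that, because the weight in (\ref{smooth}) depends only on $\|x_i-x_j\|^2$, the trajectory $y(t)$ obtained by reflecting and reindexing solves the same system; intuitively the vector field is equivariant under $\sigma$ acting on all coordinates together with the involution $i\mapsto n+1-i$. Since the weight function in (\ref{smooth}) is Lipschitz continuous (piecewise Lipschitz with matching values at the breakpoints), solutions are unique, so a symmetric initial datum $y(0)=x(0)$ forces $x_{n+1-i}(t)=2x_0-x_i(t)$ for all $t\ge0$. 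The only consequence I will use is that the \emph{multiset} of positions $\{x_1(t),\dots,x_n(t)\}$ is invariant under $\sigma$ at every time; working with the multiset sidesteps any question of whether the labelling order is preserved.

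Next I would exploit the one-dimensional geometry. Sorting the positions at a fixed $t$, two agents are adjacent iff their distance is below $R$, and since consecutive gaps are nonnegative and telescope, a pair is joined by a path iff every consecutive gap between them is smaller than $R$. Hence the connected components are exactly the maximal runs of consecutive sorted positions delimited by ``large'' gaps of length at least $R$, and the graph is disconnected iff at least one large gap occurs. Because the sorted multiset is $\sigma$-invariant, the set of large gaps is symmetric about $x_0$: any large gap forces a mirror large gap, unless the gap sits exactly at $x_0$.

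The count then reduces to a short case analysis on the component sizes $m_1,\dots,m_c$, which form a palindrome, with the number of cross-component pairs equal to $\tfrac12\big(n^2-\sum_a m_a^2\big)$. If there are at least three components, symmetry yields a leftmost block and its mirror of common size $a\ge1$ plus a nonempty symmetric remainder, and the cross-pair count is $a(2n-3a)$; this downward parabola is minimized over $a\in\{1,\dots\}$ at $a=1$, giving $2n-3$, attained by the extremal profile $(1,n-2,1)$ (one agent far left, one far right, the rest clustered at $x_0$), and coarser-than-three-block refinements only increase the count. The remaining possibility is a single large gap, which by symmetry must sit at $x_0$, so $n$ is even and the two components have equal size $n/2$, contributing $n^2/4$ pairs. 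I expect this even two-block case to be the main obstacle: one must verify $n^2/4\ge 2n-3$, which holds for $n\ge 6$ but fails at $n=4$, where the genuine minimum is $n^2/4=4<2n-3$ (and such a symmetric two-block split does persist under the flow). Thus the argument cleanly delivers at least $2n-3$ disconnected pairs for $n\ge 5$, and the borderline $n=4$ either needs separate treatment or the exact even split must be excluded; assembling the cases gives the claimed bound.
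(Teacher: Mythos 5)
Your proposal follows essentially the same route as the paper: the symmetry-invariance step is the paper's Lemma \ref{pr symmetry} (proved there by computing $\dot{x}_i+\dot{x}_j=0$ for mirror pairs, rather than via your reflection-equivariance-plus-uniqueness argument; both ultimately need the Lipschitz continuity you note), and the counting step is the paper's appendix case analysis over palindromic component-size profiles, where your extremal profile $(1,n-2,1)$ and your two-block count $n^2/4$ appear as the subcases $n_1=1$ and $n_1=n/2$. The valuable part of your write-up is the borderline case you flag, because it exposes an actual error in the paper rather than a removable technicality. In the paper's Case 2 ($n$ even), the subcase $n_1=n/2$ gives $g(r)=(n^2-2n)/4$ connected pairs, and the concluding claim $g(r)\leq(n^2-5n+6)/2$ ``for $n\geq4$'' amounts to $(n-2)(n-6)\geq0$, which fails at $n=4$. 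Your counterexample is genuine: take $n=4$ and $x(0)=x_0+(-a-\varepsilon,\,-a,\,a,\,a+\varepsilon)$ with $\varepsilon<R<2a$. This configuration is symmetrically distributed, each mirror pair keeps a constant midpoint, intra-pair distances shrink while inter-pair distances grow, so for every $t>0$ the graph is disconnected with exactly $2\cdot 2=4<5=2n-3$ disconnected pairs. Hence Proposition \ref{pr 2n-3} holds only for $n\geq5$, and Theorem \ref{th ctsymmetry}(ii) inherits the defect at $n=4$: for small $\varepsilon$ your configuration satisfies (\ref{ct symmetric}) with the constant $2n-3=5$ (the four separated pairs contribute only $4\int_0^{R^2}\alpha(s)ds$) yet never reaches consensus, so at $n=4$ the constant must be replaced by $4$.

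Two small repairs to your count for $r\geq3$ components. First, the cross-pair bound $a(2n-3a)$ is concave in $a$, so it is minimized at an endpoint of the admissible range $1\leq a\leq\lfloor(n-1)/2\rfloor$, not automatically at $a=1$; you should check the upper endpoint as well. The cleanest fix is the factorization $a(2n-3a)-(2n-3)=(a-1)(2n-3a-3)$, and since $a\leq(n-1)/2$ implies $2n-3a-3\geq(n-3)/2\geq0$ for $n\geq3$, the bound $a(2n-3a)\geq 2n-3$ holds for every admissible $a$. Second, your remark that middle-to-middle cross pairs only increase the count is what makes the reduction to the outermost blocks legitimate. With these two points made explicit, your argument is complete and correct for $n\geq5$, which is the most the proposition can deliver.
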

\begin{theorem}\label{th ctsymmetry}
Consider model (\ref{smooth}) with $n$ symmetrically distributed opinions in the initial time. Then the following statements hold.

(i). For $2\leq n\leq3$, the average consensus of the opinions is achieved if and only if the initial communication graph is connected.

(ii). For $n\geq4$, the average consensus of the opinions is achieved if the following inequality holds:
\begin{equation}\label{ct symmetric}
\frac{1}{2} \sum_{i\in\mathcal{V}} \sum_{j\in\mathcal{V}} \int_0^{||x_i(0)-x_j(0)||^2}\alpha(s)ds< (2n-3)\int_0^{R^2}\alpha(s)ds.
\end{equation}
\end{theorem}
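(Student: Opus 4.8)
The plan is to treat part (ii) as a refinement of the argument used for Theorem~\ref{th ccu1}, in which the counting Lemma~\ref{n-1} is replaced by the sharper symmetric estimate of Proposition~\ref{pr 2n-3}, while part (i) is settled by a direct analysis of the two remaining low-dimensional cases. Since model~(\ref{smooth}) is exactly protocol~(\ref{ccu1}) with the specific weight~(\ref{c alpha}), and since~(\ref{c alpha}) makes $\alpha$ Lipschitz continuous, the right-hand side is locally Lipschitz, so solutions exist and are unique; this uniqueness is what lets Proposition~\ref{pr 2n-3} speak about the configuration at every $t>0$ rather than only at $t=0$.

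For part (ii), I would first run the energy argument of Theorem~\ref{th ccu1} verbatim: with $V(x)=\tfrac12\|x\|^2$ one gets $\dot V=-x^{T}L_xx\le0$, so $\|x(t)\|\le\|x(0)\|$, the sublevel set is compact, and LaSalle's invariance principle forces $L_xx\to0$. Hence at the steady state every pair satisfies $x_i=x_j$ or $\|x_i-x_j\|\ge R$. Suppose, for contradiction, that consensus is \emph{not} reached; then the limiting graph is disconnected, and because the initial profile is symmetric, Proposition~\ref{pr 2n-3} yields at least $2n-3$ pairs whose mutual distance is $\ge R$. I would then reuse the functional
\[
V_1(x)=\frac{1}{2}\sum_{i\in\mathcal{V}}\sum_{j\in\mathcal{V}}\int_0^{\|x_i-x_j\|^2}\alpha(s)\,ds,
\]
whose derivative is $\dot V_1=-2x^{T}L_x^{2}x\le0$, so $V_1(x(t))\le V_1(x(0))$ for all $t$. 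Since $\alpha(s)=0$ for $s\ge R^2$, each of the $2n-3$ distant (unordered) pairs contributes $\int_0^{R^2}\alpha(s)\,ds$, and the double sum counts each pair twice, giving $V_1\ge(2n-3)\int_0^{R^2}\alpha(s)\,ds$ in the limit. Combined with $V_1(x(0))<(2n-3)\int_0^{R^2}\alpha(s)\,ds$ from~(\ref{ct symmetric}) and the monotonicity of $V_1$, this is a contradiction, so consensus holds; averageness follows from the symmetry of $L_x$, which gives $\sum_i\dot x_i=0$ and hence $x^*=\tfrac1n\sum_i x_i(0)$.

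For part (i) I would argue the two directions by hand. For the ``if'' direction, note that with $x_1\le\cdots\le x_n$ and reflection symmetry, connectedness of the initial graph forces (for $n=2$) $|x_1-x_2|<R$, and (for $n=3$) $|x_1-x_2|<R$ with the central agent fixed at $x_0$; along the flow the relevant inter-agent distances are strictly decreasing while positive, so connectivity is preserved and the states contract to $x_0$, i.e.\ average consensus. For the ``only if'' direction, if the initial graph is disconnected then its components are pairwise separated by distance $\ge R$; each component evolves under its own internal averaging dynamics, so its convex hull is non-expanding and its barycenter is fixed, whence distinct components can never meet and consensus is impossible.

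The main obstacle is twofold. The delicate analytic point is the legitimacy of upgrading the pair count from $n-1$ (Lemma~\ref{n-1}) to $2n-3$ (Proposition~\ref{pr 2n-3}): this rests entirely on the invariance of the symmetric configuration along the flow, so I must be sure that the reflection symmetry is preserved for all $t>0$ --- supplied here by Proposition~\ref{pr 2n-3} together with uniqueness of solutions --- before the counting estimate may be inserted into the $V_1$ chain. The second, more subtle, point is the necessity direction of part (i): because the bounded-confidence weight only acts below range $R$, one must rule out that boundary agents of separated components drift toward one another, which is exactly where the non-expansiveness of each component's convex hull is needed.
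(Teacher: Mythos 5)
Your proposal is correct and follows essentially the same route as the paper: for part (ii) you rerun the $V$/$V_1$ LaSalle argument of Theorem~\ref{th ccu1} with Lemma~\ref{n-1} upgraded to Proposition~\ref{pr 2n-3} (whose validity for all $t>0$ rests on the symmetry-invariance established in Lemma~\ref{pr symmetry}), which is precisely what the paper does, and for part (i) you carry out the same direct low-dimensional analysis of the error dynamics. The only cosmetic difference is in the necessity half of (i): the paper simply notes that under the symmetric configuration disconnection freezes the error variables ($\dot e=0$, since the isolated agents and the fixed midpoint do not move), whereas your non-expanding convex-hull/fixed-barycenter argument is a slightly more general way of reaching the same conclusion.
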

\begin{proof} (i). From the analysis in the proof of Theorem \ref{th ccu1}, we know that preserving the connectivity of the communication graph is the key to make the agents reach consensus.

For $n=2$. We let $x_1$ and $x_2$ be the two agents' opinions and $e=x_2-x_1$. Then $\dot{e}=-2\alpha_{12}e$. Sufficiency: Note that $\dot{e}\geq0$ if $e<0$ and $\dot{e}\leq0$ if $e>0$, which in turn implies that $|e|$ is decreasing of $t$, together with $|e(0)|<R$, we have $|e|<R$ for any $t\geq0$. Necessity: Suppose that $|e(0)|\geq R$, then $\dot{e}=0$, consensus will never be reached.

For $n=3$. Let $x_1$, $x_2$, $x_3$ be the three opinions and $x_1\leq x_2\leq x_3$. From Lemma \ref{pr symmetry}, we have $x_2=\frac{x_1+x_3}{2}$ and $\dot{x}_2=0$ for any $t\geq0$. Then $\dot{x}_1=\alpha_{12}(x_2-x_1)+\alpha_{13}(x_3-x_1) =(\alpha_{12}+2\alpha_{13})(x_2-x_1)$. Similarly, we have $\dot{x}_3=(\alpha_{23}+2\alpha_{13})(x_2-x_3)$. Let $e_1=x_1-x_2$, $e_2=x_3-x_2$, it follows that $\dot{e}_1=-(\alpha_{12}+2\alpha_{13})e_1$, $\dot{e}_2=-(\alpha_{23}+2\alpha_{13})e_2$. Sufficiency: Due to the fact that $|e_1(0)|<R$, $|e_2(0)|<R$, we obtain that $|e_1(t)|<R$ and $|e_2(t)|<R$ for any $t\geq0$. That is, the connectivity of the communication graph is maintained. Necessity: Suppose the initial communication graph is not connected. If $|e_1(0)|>R$, then $\dot{x}_1=0$, together with $\dot{x}_2=0$, one has $\dot{e}_1=0$, a contradiction. If $|e_2(0)|>R$, then $\dot{x}_3=0$, together with $\dot{x}_2=0$, the consensus cannot be reached.

(ii). By employing Proposition \ref{pr 2n-3}, the proof is similar to the one of Theorem \ref{th ccu1}.
\end{proof}

Now we consider an example of the smoothed opinion dynamics (\ref{smooth}). Suppose the system consists of $20$ evenly distributed opinions in the initial time. Let $R=1$, $\varepsilon=0.1$, $c=1$, which implies that $f(s)=10(1-\sqrt{s})$, the distance between adjacent agents is set as $d=0.2$. It can be calculated that (\ref{ct symmetric}) cannot be satisfied. Fig. \ref{fig smoothf} shows the evolution of all the opinions and the variation of the Lyapunov function (\ref{smooth W}). If we change $d$ to be $0.05$, (\ref{ct symmetric}) can be guaranteed. The average consensus is achieved, and (\ref{smooth W}) gradually vanishes, as shown in Fig. \ref{fig smooths}.
\begin{figure}
\centering
\includegraphics[width=0.4\textwidth]{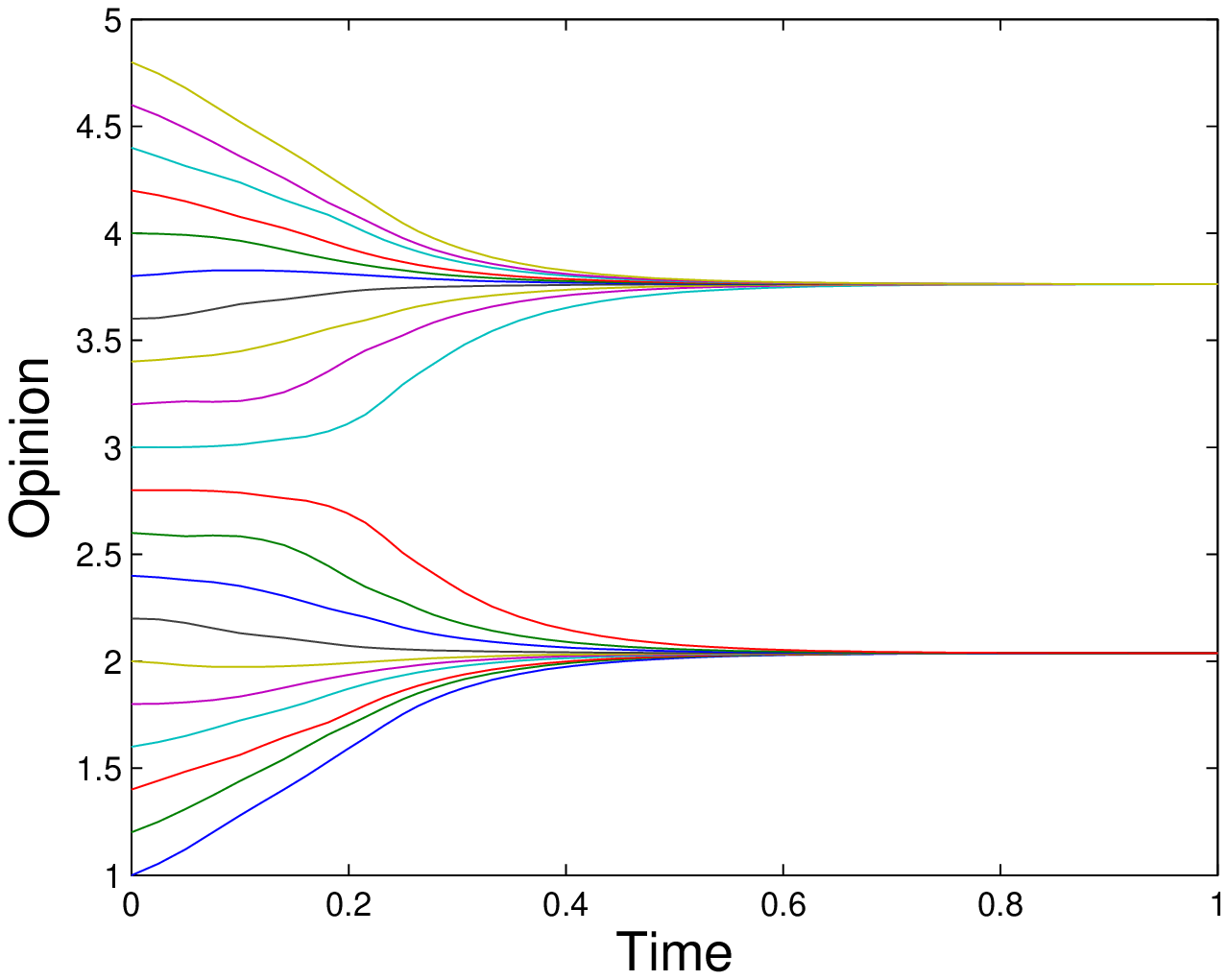}
\includegraphics[width=0.4\textwidth]{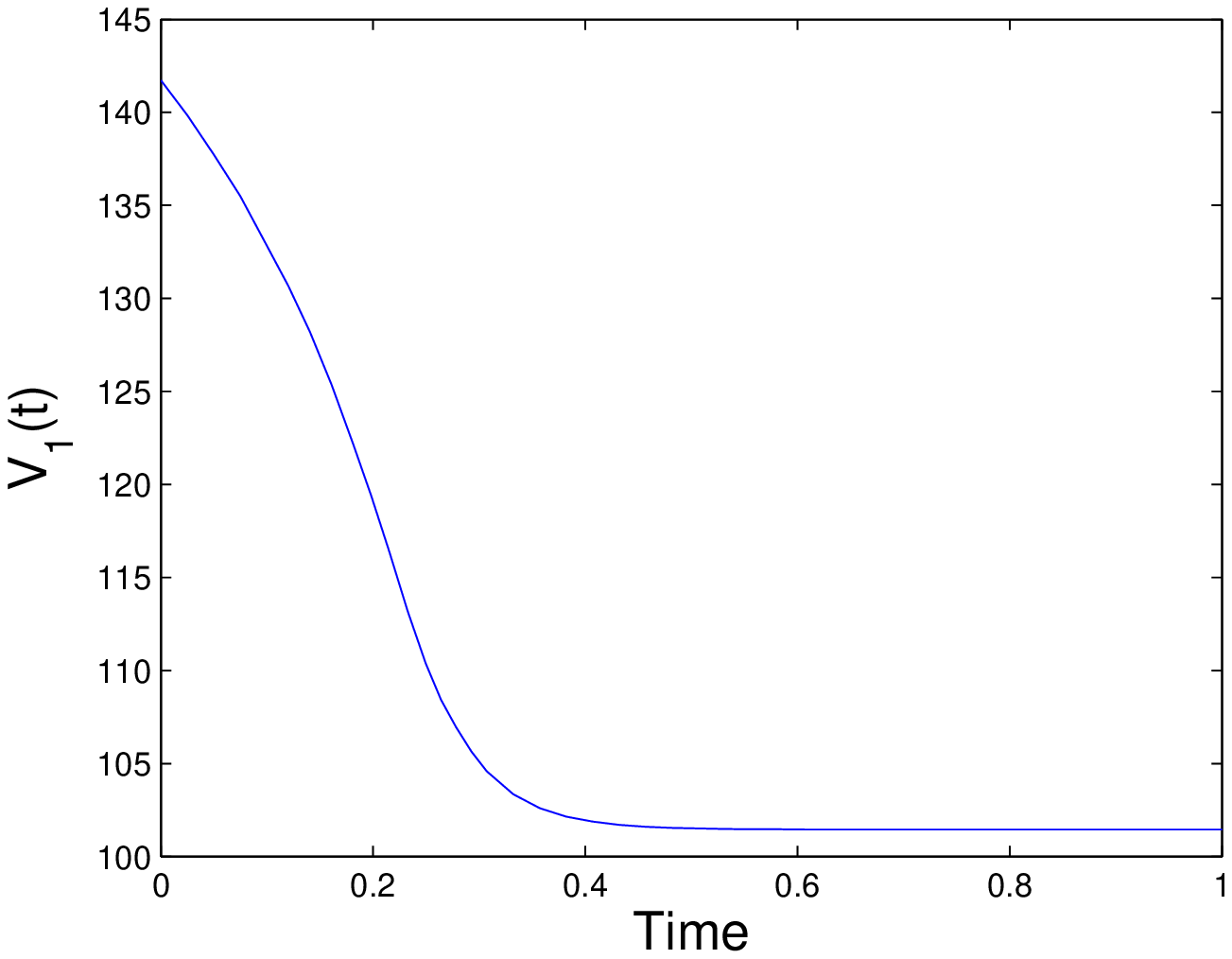}
\caption{Model (\ref{smooth}) with $d=0.2$.}\label{fig smoothf}
\end{figure}
\begin{figure}
\centering
\includegraphics[width=0.4\textwidth]{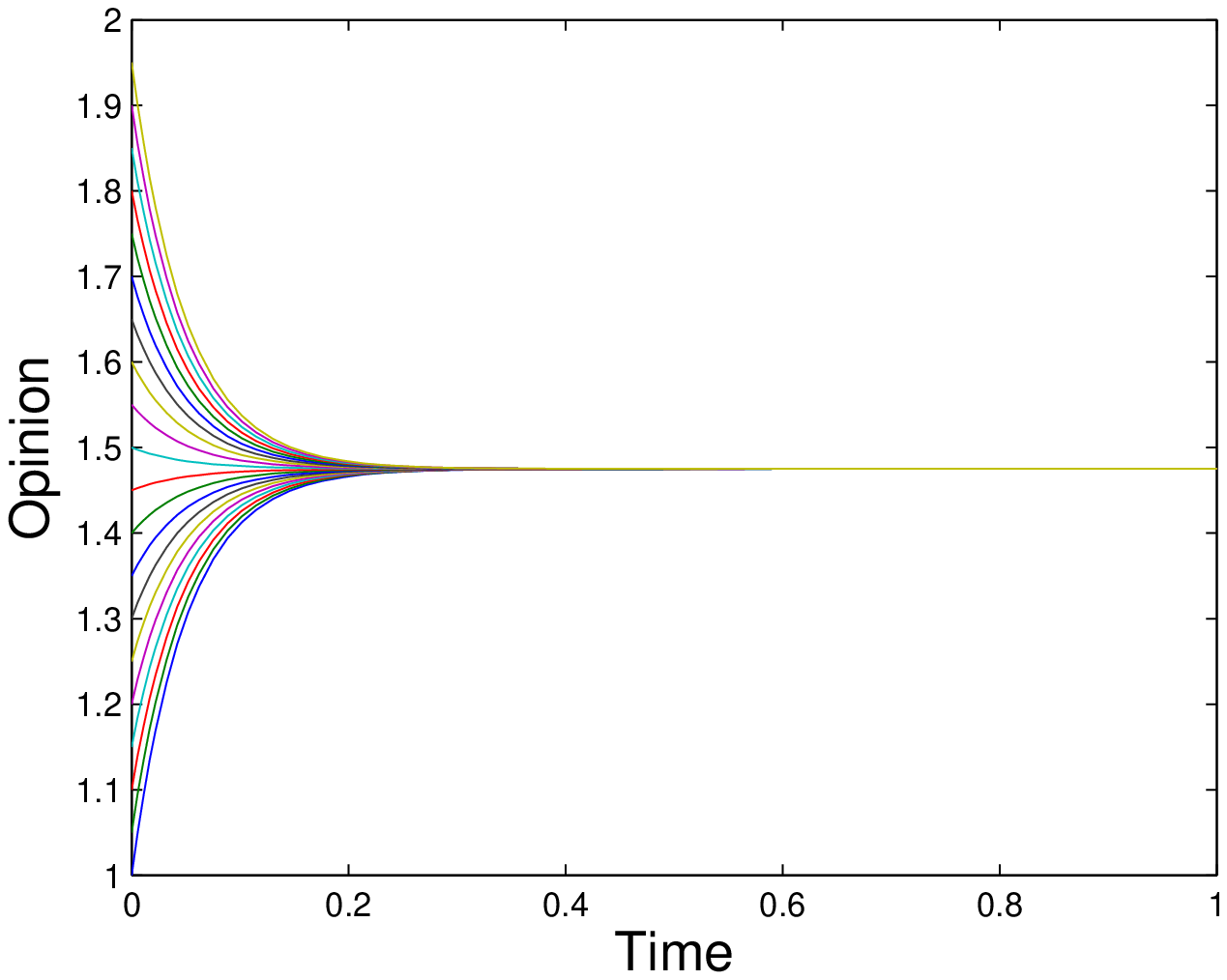}
\includegraphics[width=0.4\textwidth]{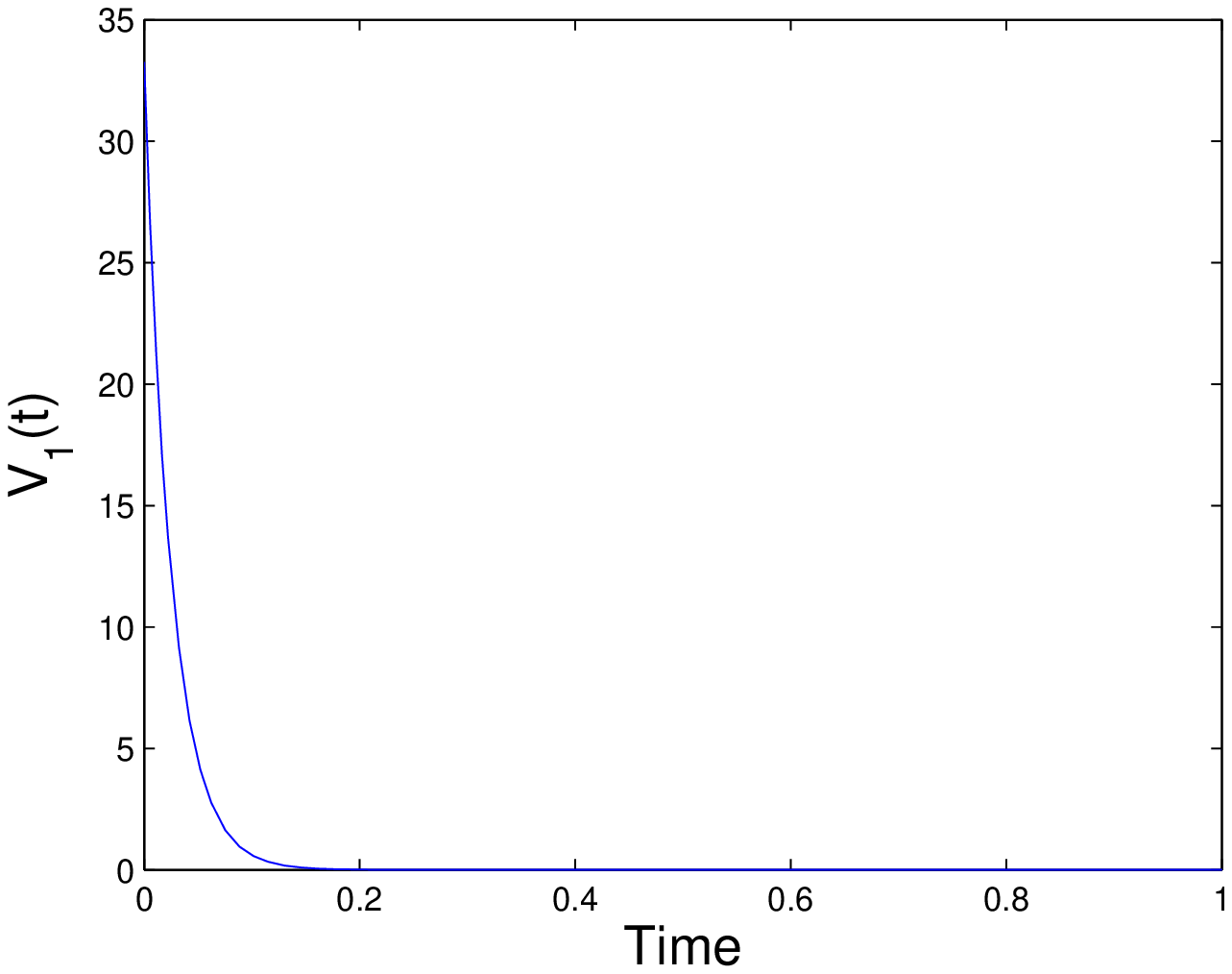}
\caption{Model (\ref{smooth}) with $d=0.05$.}\label{fig smooths}
\end{figure}

For discrete-time opinion dynamics, the following opinion evolution model is considered:
\begin{equation}\label{dt opinion}
x_i(t+1)=\sum_{j\in\mathcal{V}}w_{ij}(x)x_j(t).
\end{equation}
where $w_{ij}\geq0$ denotes the weight between agent $i$ and agent $j$, and $\sum_{j\in\mathcal{V}}w_{ij}=1$ for any $i\in\mathcal{V}$. When the system starts running, each agent will take those agents into account whose opinions differ from its own not more than the confidence bound $R>0$. We make an assumption that each agent employs the same weight $i.e.$, $h>0$ when it considers its neighbors except itself. Since the agent will consider its own opinion in a positive way, to make this hold, we assume $(n-1)h<1$. Then model (\ref{dt opinion}) can be rewritten by
\begin{equation}\label{dt opinion1}
x_i(t+1)=(1-h\sum_{j\neq i}\alpha_{ij})x_i(t)+ h\sum_{j\neq i}\alpha_{ij}x_j(t),
\end{equation}
where
\begin{equation}\label{d alpha}
\alpha(s)=\begin{cases}
1, & 0\leq s<R^2;\\
0, & s\geq R^2.
\end{cases}
\end{equation}
Then (\ref{dt opinion}) is equivalent to (\ref{dt1}) with (\ref{ddu1}). Since $\alpha(\cdot)$ is nonincreasing and $h<\frac{1}{n-1}$, Theorem \ref{th ddu1} can be employed. The agents will achieve the average consensus of opinions if (\ref{ddu1con}) holds.

Similar to Theorem \ref{th ctsymmetry}, the following results for discrete-time opinion dynamics are valid, we omit the corresponding proof due to its simpleness.
\begin{theorem}\label{th dtsymmetry}
Consider model (\ref{dt opinion1}) with $n$ symmetrically distributed opinions in the initial time and $h<\frac{1}{\alpha(0)(n-1)}$. Then the following statements hold.

(i). For $2\leq n\leq3$, the average consensus of the opinions is achieved if and only if the initial communication graph is connected.

(ii). For $n\geq4$, the average consensus of the opinions is achieved if there exists an $r\in[0,R^2)$, such that
\begin{equation}\label{dt symmetric}
W(0)<(2n-3)w(R^2).
\end{equation}
\end{theorem}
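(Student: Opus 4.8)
The plan is to transcribe the proof of Theorem~\ref{th ctsymmetry} into the difference setting, writing the dynamics compactly as $x(t+1)=(I-hL_t)x(t)$ from (\ref{dt opinion1}). Two facts carry over from the continuous analysis and I would establish them first: the symmetric profile is preserved by the update, since equal confidence weights acting on opinions placed symmetrically about the midpoint $x_0$ cancel in pairs, so $x_0$ stays fixed; and the mean $\tfrac{1}{n}\sum_i x_i(t)$ is conserved because $L_t$ is symmetric, giving $\mathbf{1}_n^TL_t=0$ and hence $\sum_i x_i(t+1)=\sum_i x_i(t)$, so that any consensus value is automatically the initial average. The genuinely new ingredient, absent in continuous time, is the step-size bound $h<\tfrac{1}{\alpha(0)(n-1)}$, which I will use to keep every one-step multiplier inside $(-1,1)$.

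For part (i) I would handle $n=2$ and $n=3$ directly. With $e=x_2-x_1$ the case $n=2$ reduces to the scalar recursion $e(t+1)=(1-2h\alpha_{12})e(t)$; when the pair is connected one has $0<\alpha_{12}\le\alpha(0)$ and $h<\tfrac{1}{\alpha(0)}$, so $0<2h\alpha_{12}<2$, the multiplier lies in $(-1,1)$, $|e|$ strictly decreases, and $|e(0)|<R$ is therefore preserved, forcing $e(t)\to0$. If instead $|e(0)|\ge R$ then $\alpha_{12}=0$ and $e$ is frozen, which proves the ``only if''. For $n=3$ symmetry keeps $x_2$ stationary, i.e. $x_2(t+1)=x_2(t)$; writing $e_1=x_1-x_2$, $e_2=x_3-x_2$ and using $x_3-x_1=2(x_2-x_1)$ gives the decoupled recursions $e_1(t+1)=(1-h(\alpha_{12}+2\alpha_{13}))e_1(t)$ and $e_2(t+1)=(1-h(\alpha_{23}+2\alpha_{13}))e_2(t)$. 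Since the coefficients are at most $3\alpha(0)$ and $h<\tfrac{1}{2\alpha(0)}$, the multipliers again lie in $(-1,1)$ whenever the graph is connected, so connectivity is maintained and both errors decay; disconnection isolates the outer agents exactly as for $n=2$.

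For part (ii) I would run the scheme of Theorem~\ref{th ddu1} with $n-1$ replaced by $2n-3$. Fix an $r\in[0,R^2)$ realizing (\ref{dt symmetric}) and take $V(t)=\|x(t)\|^2$. By Gerschgorin $\lambda_i(L_t)\le 2(n-1)\alpha(0)$, so $h<\tfrac{1}{(n-1)\alpha(0)}$ makes each eigenvalue $\xi_i=h\lambda_i(L_t)(h\lambda_i(L_t)-2)$ of $-2hL_t+h^2L_t^2$ nonpositive; hence $V(t+1)-V(t)\le0$, and $\Omega=\{x:\|x\|\le\|x(0)\|\}$ is compact and positively invariant. LaSalle's principle then yields $\|x_i-x_j\|=0$ or $\|x_i-x_j\|\ge R$ in the limit. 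If consensus failed the limiting graph would be disconnected, so by Proposition~\ref{pr 2n-3} (using $n\ge4$ and symmetry) there would be at least $2n-3$ disconnected pairs, each with $\|x_{ij}\|^2\ge R^2$, giving $W(t)\ge(2n-3)w(R^2)$ in the limit by the monotonicity of $w$ in Proposition~\ref{prW}(2). But Proposition~\ref{prW}(3) together with $u=-hL_tx$ gives $W(t+1)-W(t)\le h^2x^TL_t^3x-2hx^TL_t^2x$, whose $i$-th diagonalized term equals $\lambda_i(L_t)\xi_i\le0$, so $W$ is nonincreasing and $W(t)\le W(0)$. Chaining $(2n-3)w(R^2)\le W(0)$ contradicts (\ref{dt symmetric}); consensus therefore holds, and by the conserved mean the common value is the initial average.

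The main obstacle I anticipate is the part~(i) bookkeeping: in continuous time $\dot e=-(\,\cdot\,)e$ decays for free, whereas here I must verify the one-step contraction $|1-h(\,\cdots)|<1$, and this is precisely where the hypotheses $h<\tfrac{1}{\alpha(0)(n-1)}$ are needed; I also must confirm that the linear symmetric update preserves the symmetric profile, so that $x_2$ is stationary for $n=3$ and Proposition~\ref{pr 2n-3} is applicable for $n\ge4$.
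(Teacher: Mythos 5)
Your proof is correct and follows exactly the route the paper intends: the paper omits this proof as ``similar to Theorem \ref{th ctsymmetry}'', and your argument is precisely that transcription — part (i) via the scalar error recursions $e(t+1)=(1-h(\cdot))e(t)$ with the step-size bound supplying the one-step contraction, part (ii) by rerunning the Lyapunov/LaSalle scheme of Theorem \ref{th ddu1} with Lemma \ref{n-1} replaced by Proposition \ref{pr 2n-3} and the monotonicity of $W$ from Proposition \ref{prW}. The two points you flag (the uniform contraction needed for $e(t)\rightarrow0$, obtained from $\alpha_{12}(t)\geq\alpha(\|e(0)\|^2)>0$ once $|e|$ is nonincreasing, and the preservation of the symmetric profile under the discrete update, which your pairwise-cancellation remark settles and which is what licenses applying Proposition \ref{pr 2n-3}) are exactly the routine verifications the paper's ``simpleness'' remark sweeps under the rug.
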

Consider model (\ref{dt opinion1}) with $15$ evenly distributed opinions in the initial time. The distance between adjacent agents is $d=0.35$. Set $R=1$, $r=0.1$, $h=\frac{1}{n}$, the initial states do not satisfy (\ref{dt symmetric}). Fig. \ref{fig dopinionf} describes the evolution of opinions and $W(t)$, we can observe that the opinions fail to reach consensus. When we set $d=0.08$, (\ref{dt symmetric}) is valid for $r=0.1$. The average consensus is reached, as shown in Fig. \ref{fig dopinions}.

\begin{figure}[htbp]
\centering
\includegraphics[width=0.4\textwidth]{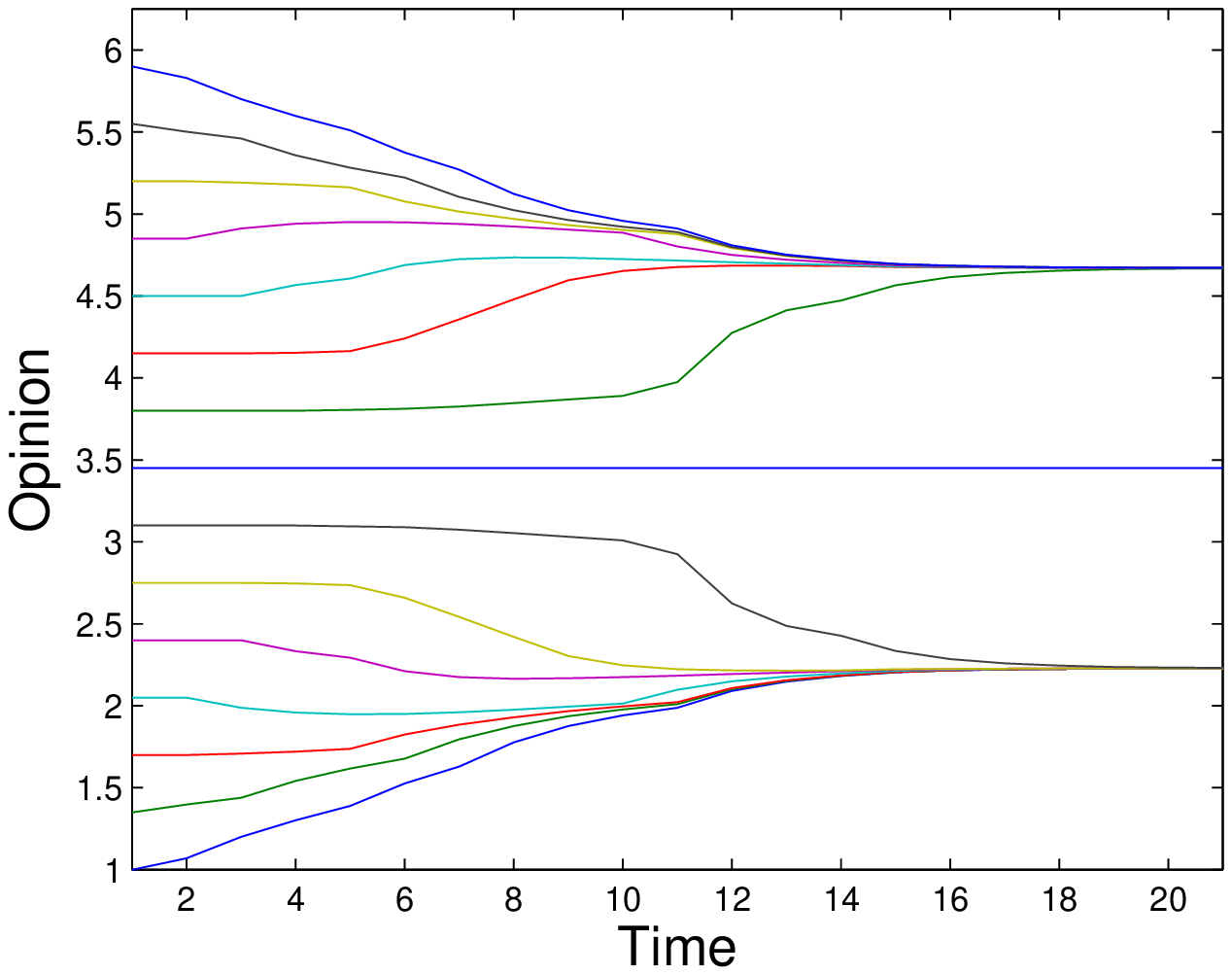}
\includegraphics[width=0.4\textwidth]{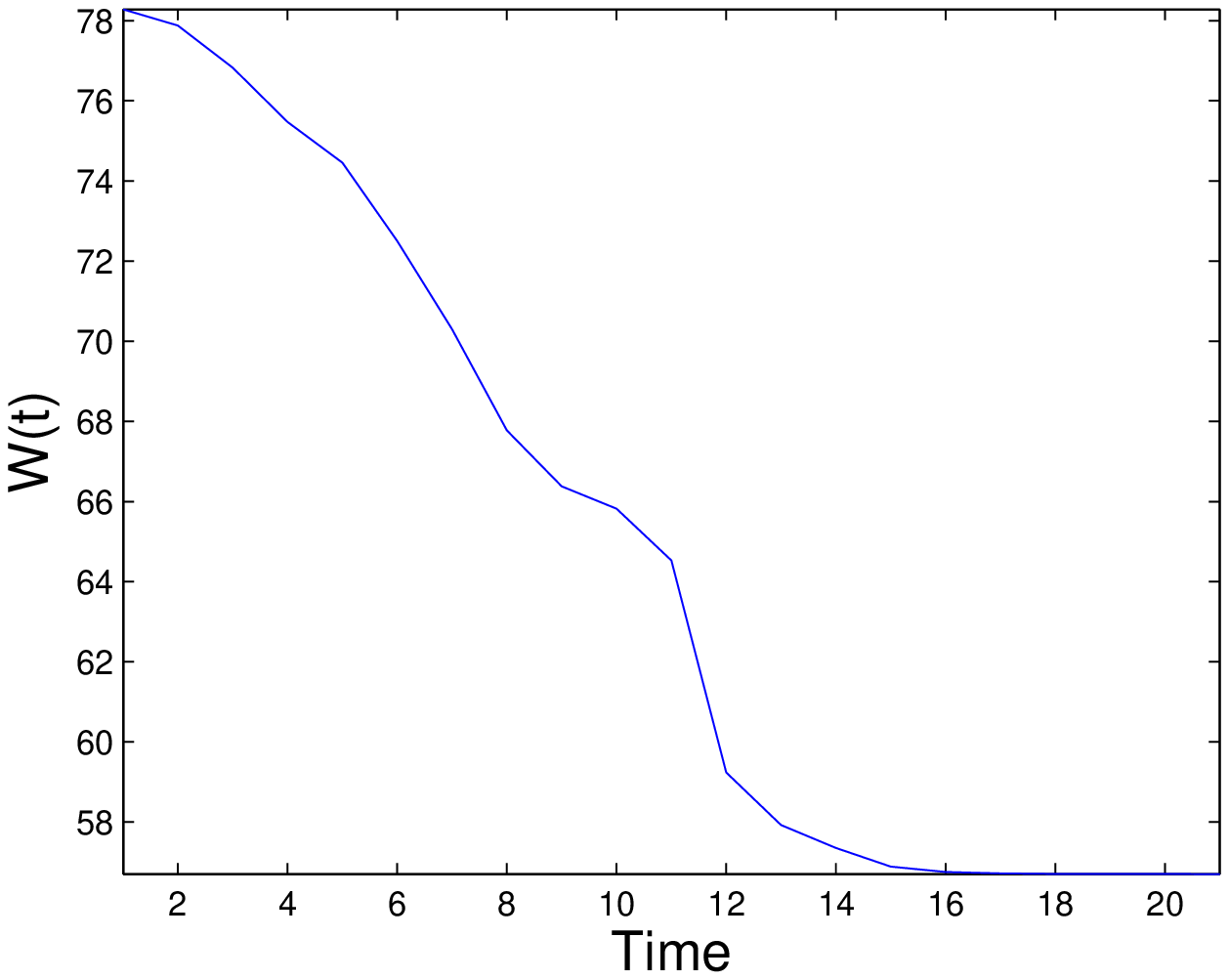}
\caption{Model (\ref{dt opinion}) with $d=0.35$.}\label{fig dopinionf}
\end{figure}

\begin{figure}[htbp]
\centering
\includegraphics[width=0.4\textwidth]{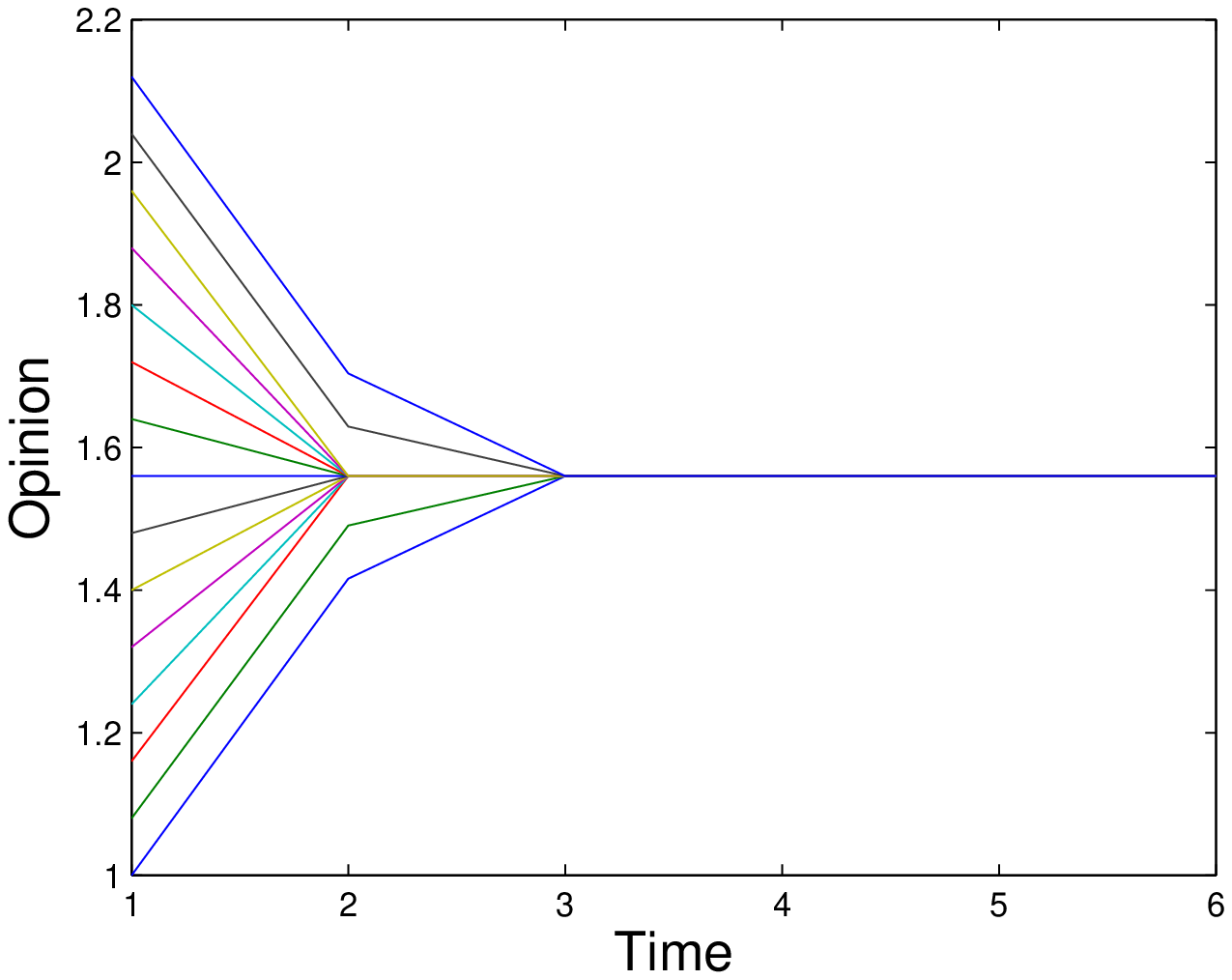}
\includegraphics[width=0.4\textwidth]{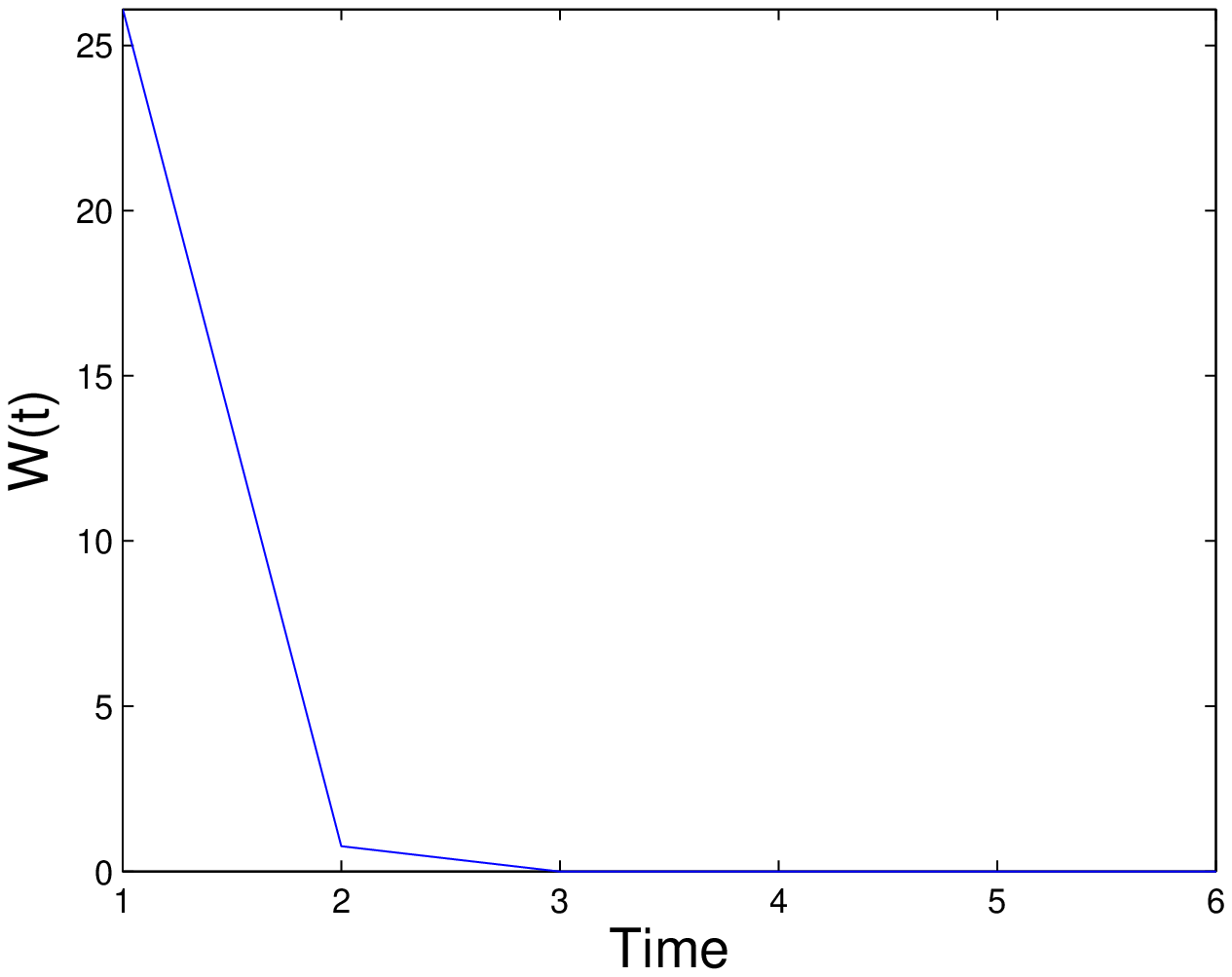}
\caption{Model (\ref{dt opinion}) with $d=0.08$.}\label{fig dopinions}
\end{figure}

It is easy to see that Theorem \ref{th ctsymmetry} and Theorem \ref{th dtsymmetry} also hold when $\alpha(\cdot)$ just satisfies Assumption 2 and Assumption 4, respectively. Because the corresponding proof does not require a particular $\alpha(\cdot)$. In order to verify that taking a different $r$ is helpful to satisfy the initial condition, we give an example in the following.

Now we consider model (\ref{dt opinion1}) with a varying communication weight. Assume that there are $20$ evenly distributed opinions in the initial time, the communication weight between agents decays when their opinion difference increases. Let $R=1.5$, $\alpha(s)=-10s+25$, $h=\frac{1}{\alpha(0)n}$. It is found that when we set $d=0.07$, (\ref{dt symmetric}) hold for $r=1.8$ but it does not hold for $r=0$. Fig. \ref{fig dopinions1} shows the result for $d=0.07$.
\begin{figure}[htbp]
\centering
\includegraphics[width=0.4\textwidth]{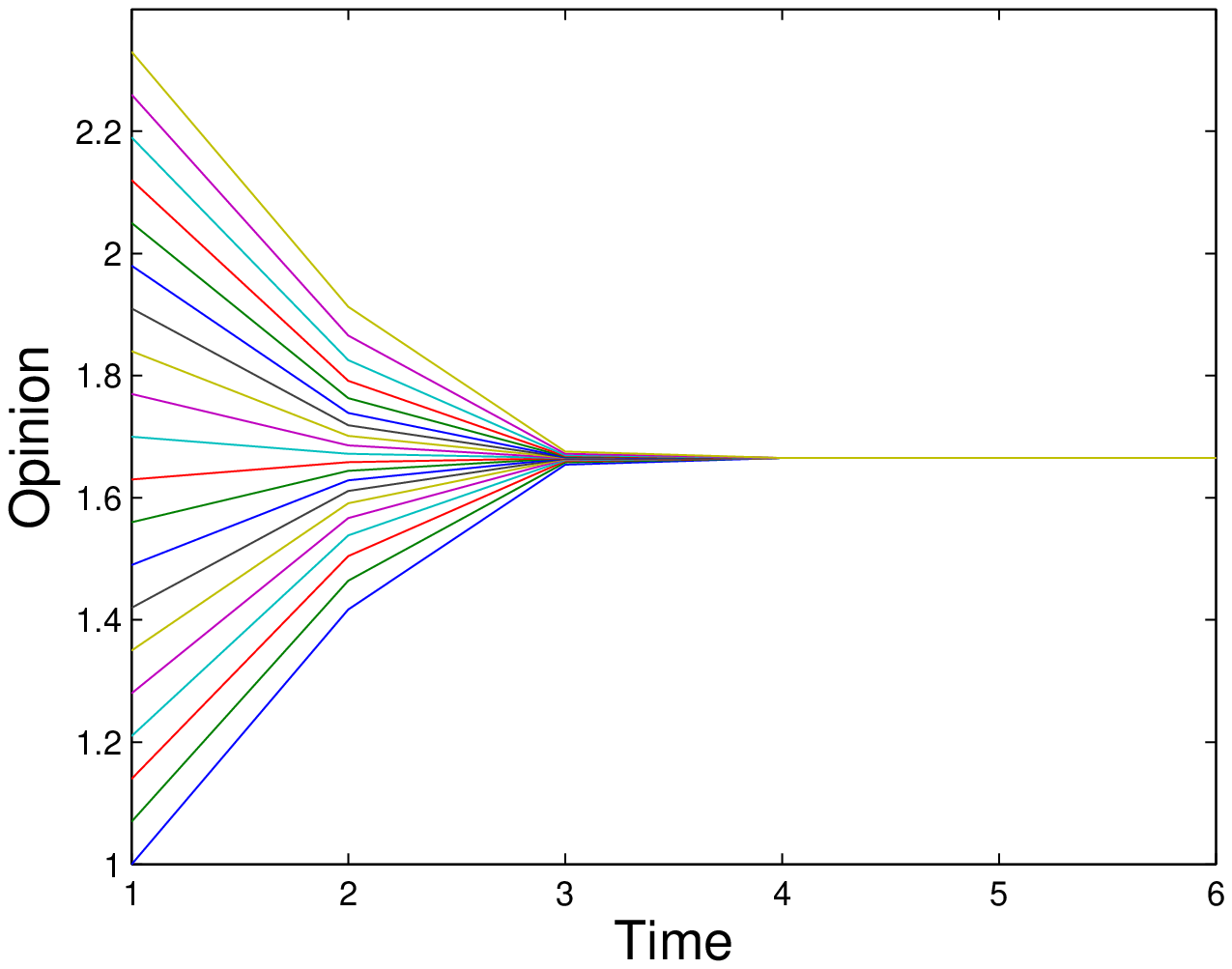}
\includegraphics[width=0.4\textwidth]{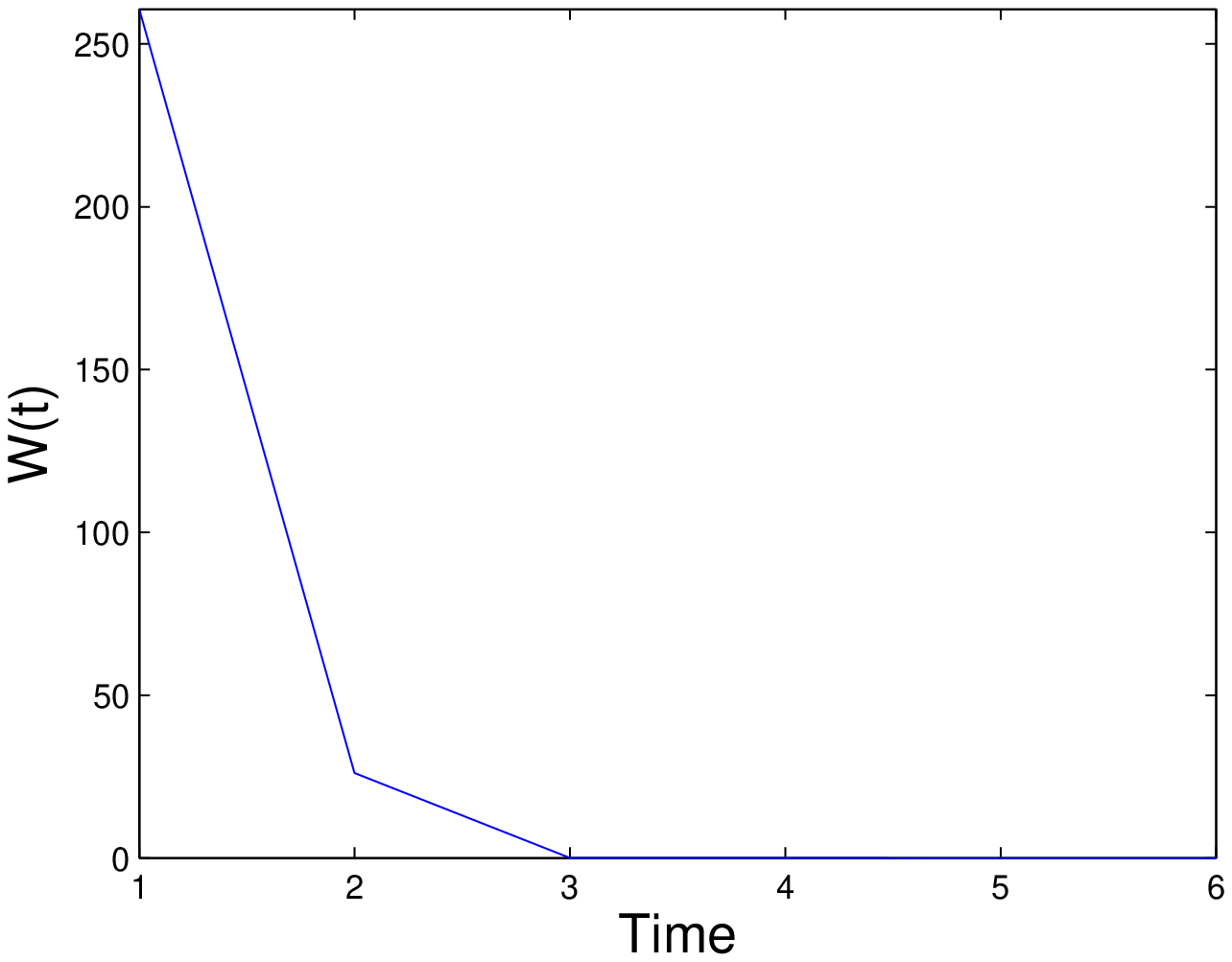}
\caption{Model (\ref{dt opinion}) with $d=0.07$.}\label{fig dopinions1}
\end{figure}

\subsection{Applications to Rendezvous}

Now we consider the rendezvous problem of multiple agents with continuous-time dynamics and discrete-time dynamics. In such problems, some communication links may be lost due to the moving of the agents and therefore the rendezvous will not be realized \cite{Lin071,Lin072}. Unlike the study in \cite{Su10}, we do not employ potential functions to preserve the connectivity of the network. What we mainly concern about is that under what kind of initial states the network can be always connected. In the following, model (\ref{ct2}) with (\ref{ccu2}) and model (\ref{dt2}) with (\ref{ddu2}) will be applied to solve the rendezvous problem, several simulations are represented. In the simulations, the red point denotes the initial state of an agent and the blue point is its final state. The lines in different colors denote the trajectory of the agents.

For continuous-time systems with dynamics (\ref{ct2}), suppose there is a system consisting of $6$ agents. With protocol (\ref{ccu2}), all the agents move in the plane and employ (\ref{c alpha}) as the transmission weight. In Fig. \ref{cfrendezvous}, the rendezvous fails since the connectivity of the communication network is broken during the agents' moving. We can see that even if the consensus of the agents' position states is not reached, the velocity of all the agents still vanish to zero in the end. Under condition (\ref{ccu2con}), Fig. \ref{csrendezvous} shows that the rendezvous problem is solved.
\begin{figure}[htbp]
\centering
\includegraphics[width=0.4\textwidth]{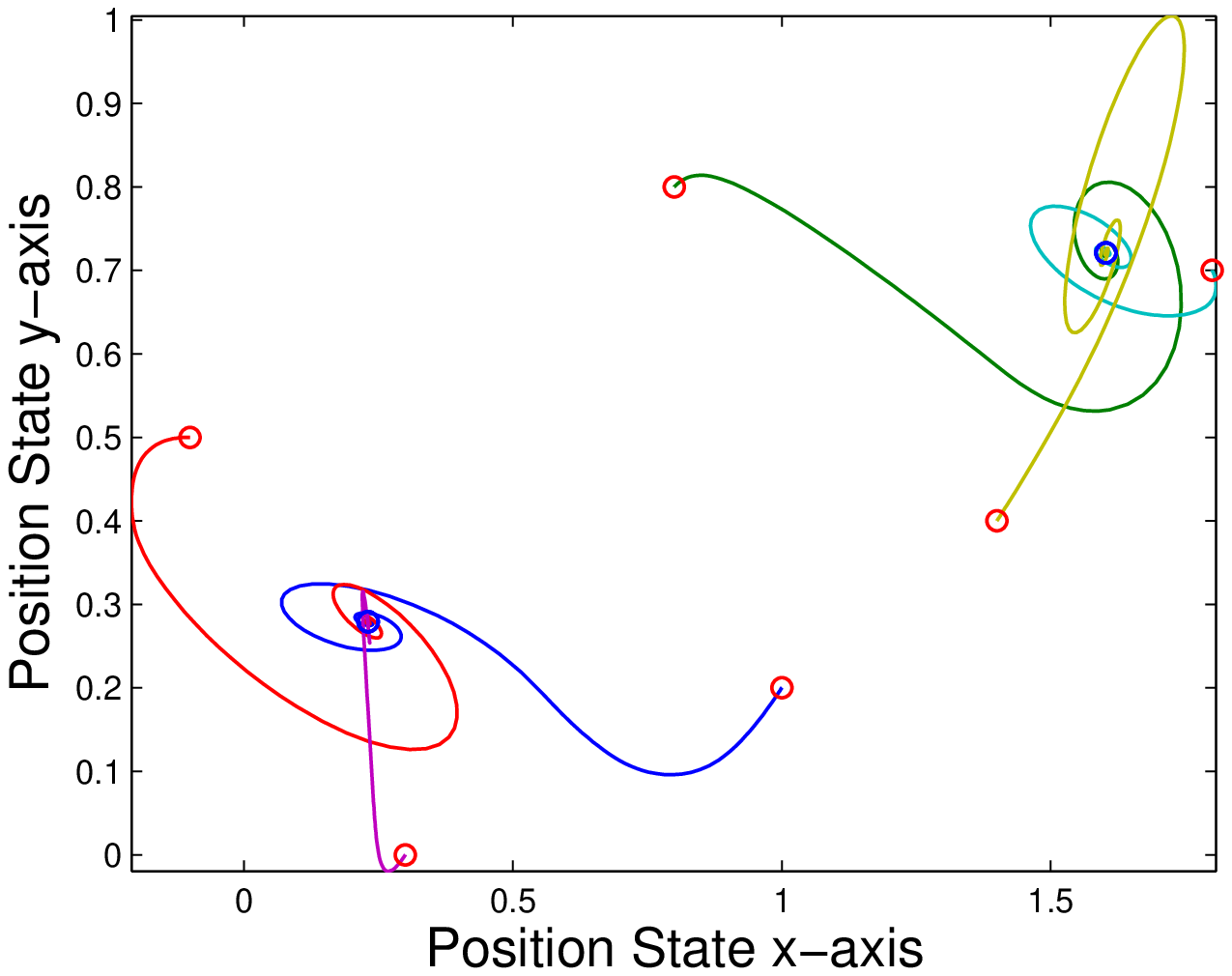}
\includegraphics[width=0.4\textwidth]{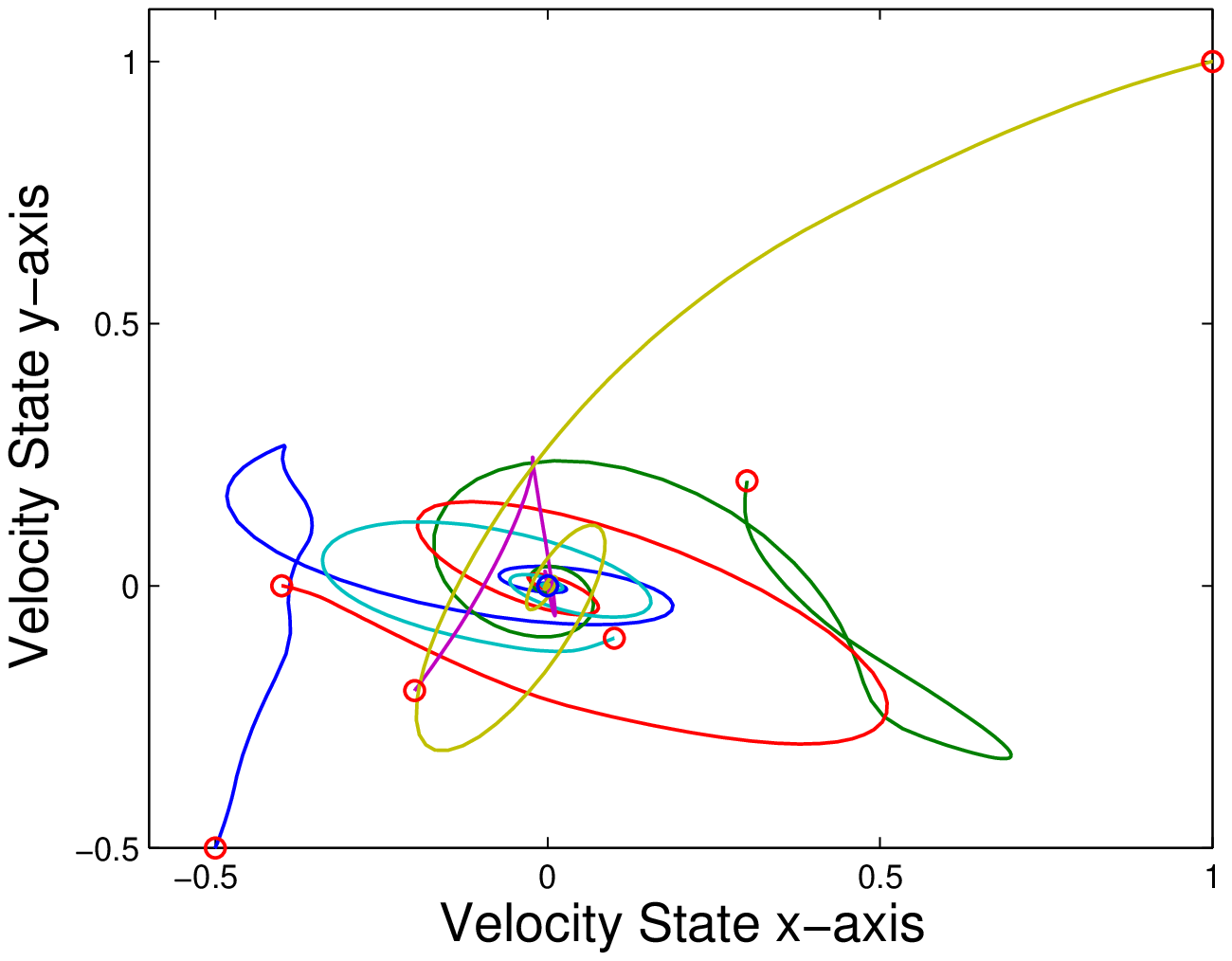}
\caption{Protocol (\ref{ccu2}) fail to solve the rendezvous problem of agents with dynamics (\ref{ct2}).}\label{cfrendezvous}
\end{figure}
\begin{figure}[htbp]
\centering
\includegraphics[width=0.4\textwidth]{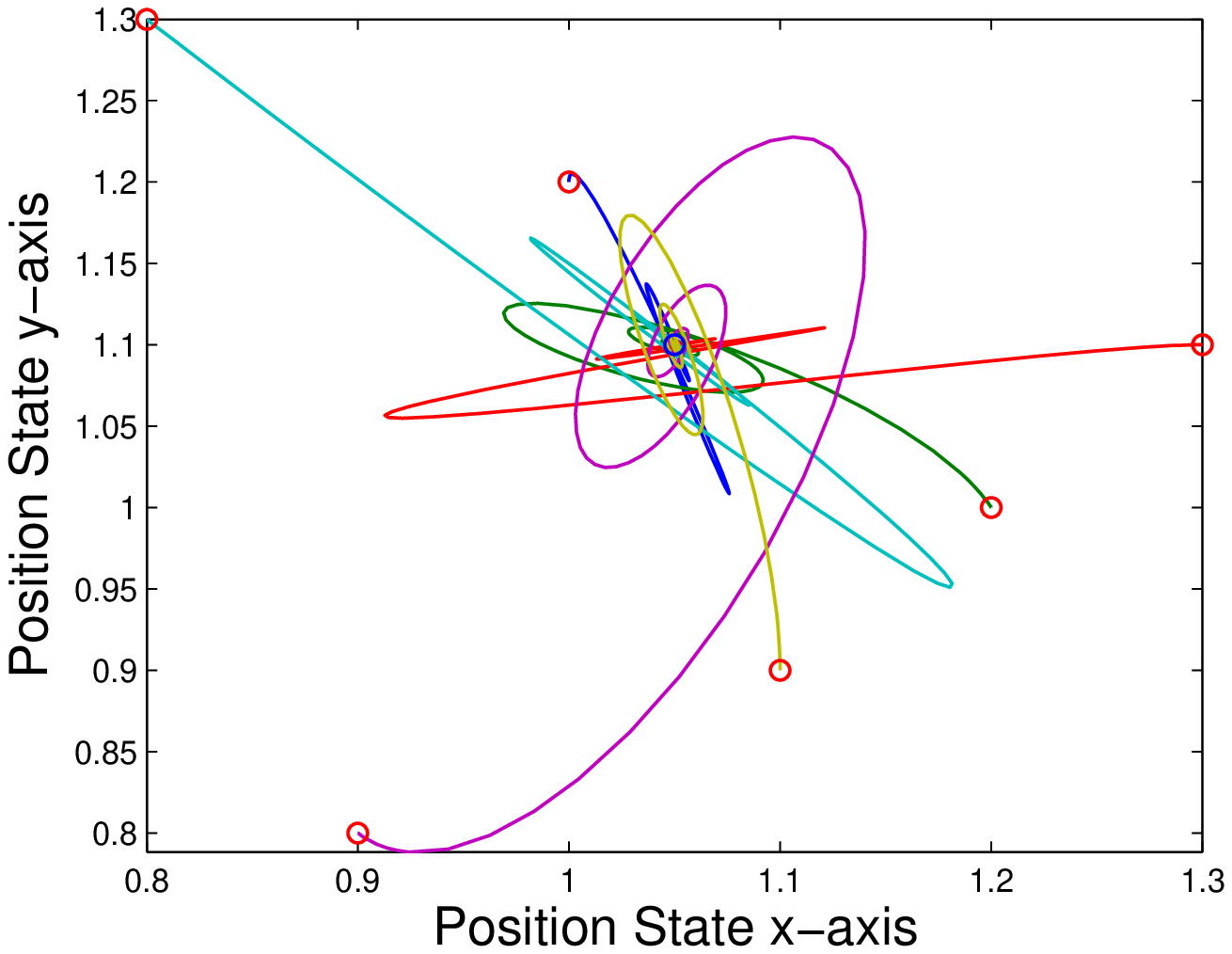}
\includegraphics[width=0.4\textwidth]{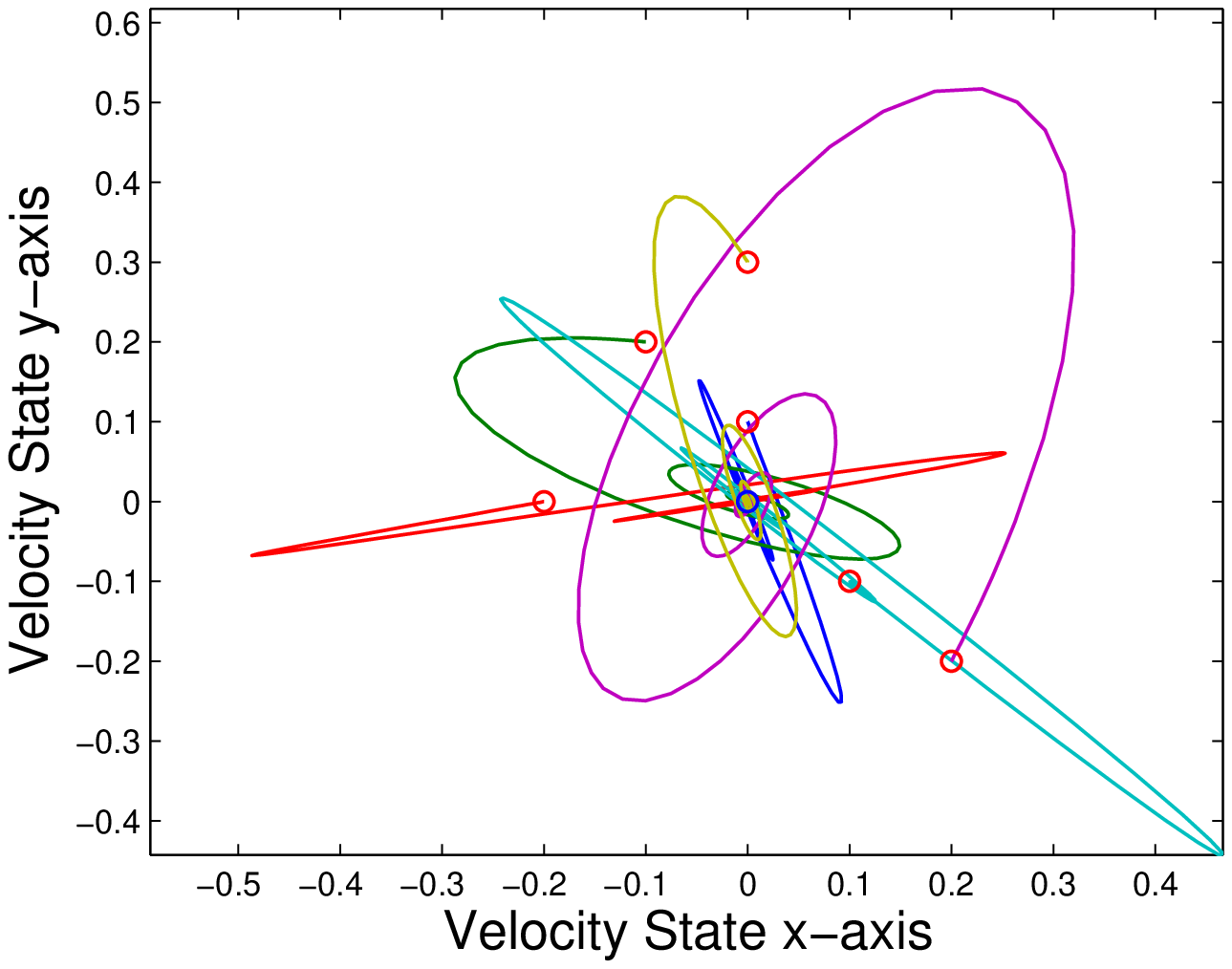}
\caption{Protocol (\ref{ccu2}) success to solve the rendezvous problem of agents with dynamics (\ref{ct2}).}\label{csrendezvous}
\end{figure}
For discrete-time systems with dynamics (\ref{dt2}), consider a system consisting of $6$ agents. Applying protocol (\ref{ddu2}) with (\ref{d alpha}) as the communication weight. Let $h_1=1$, $h_2=1.5$, $h_3=0.14$, then (\ref{d2c1}) and (\ref{d21c2}) are satisfied. When the initial states of all the agents are restricted by (\ref{ddu2con}) with $r=0.1$, Fig. \ref{dsrendezvous} shows that the rendezvous is reached.
\begin{figure}[htbp]
\centering
\includegraphics[width=0.4\textwidth]{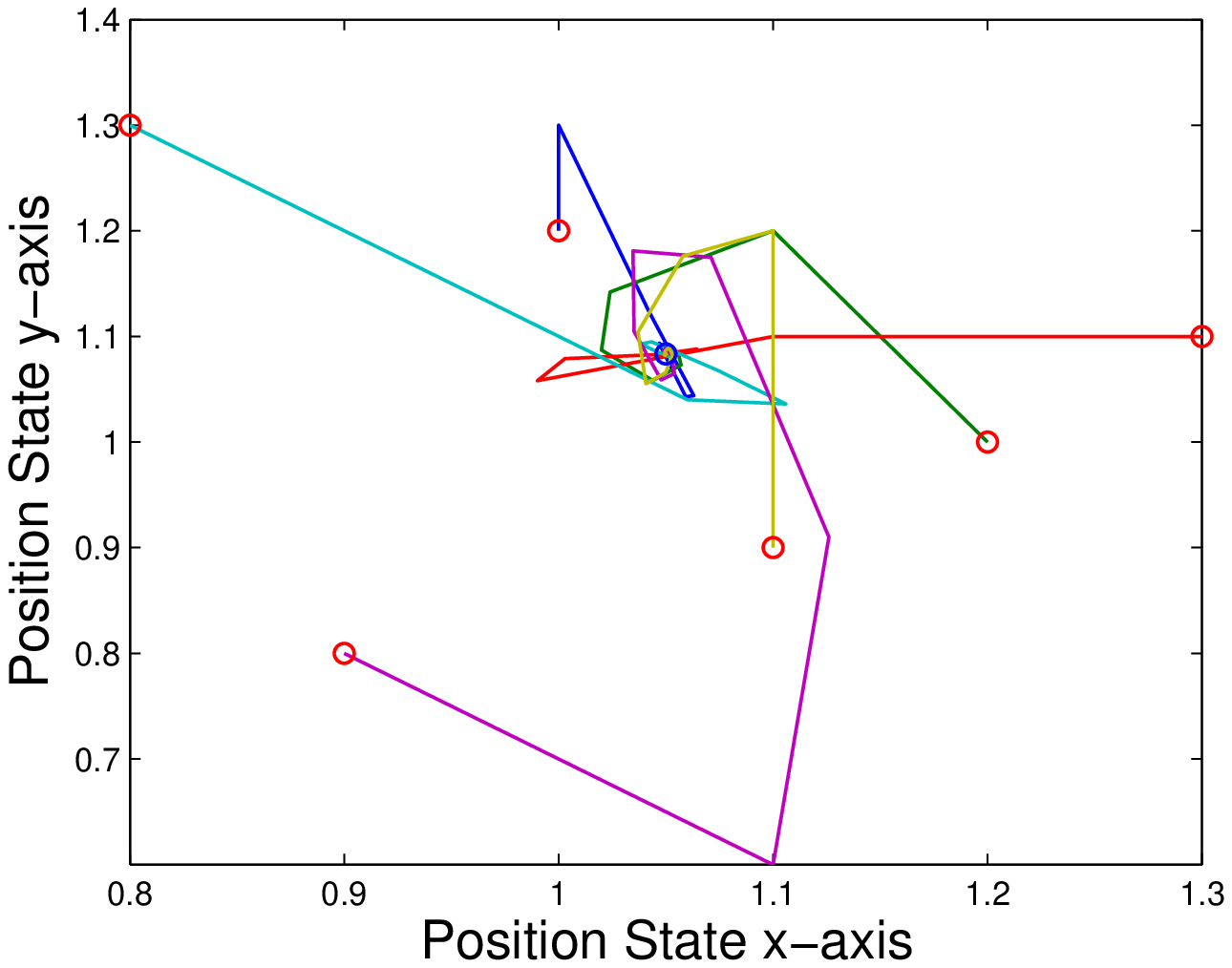}
\includegraphics[width=0.4\textwidth]{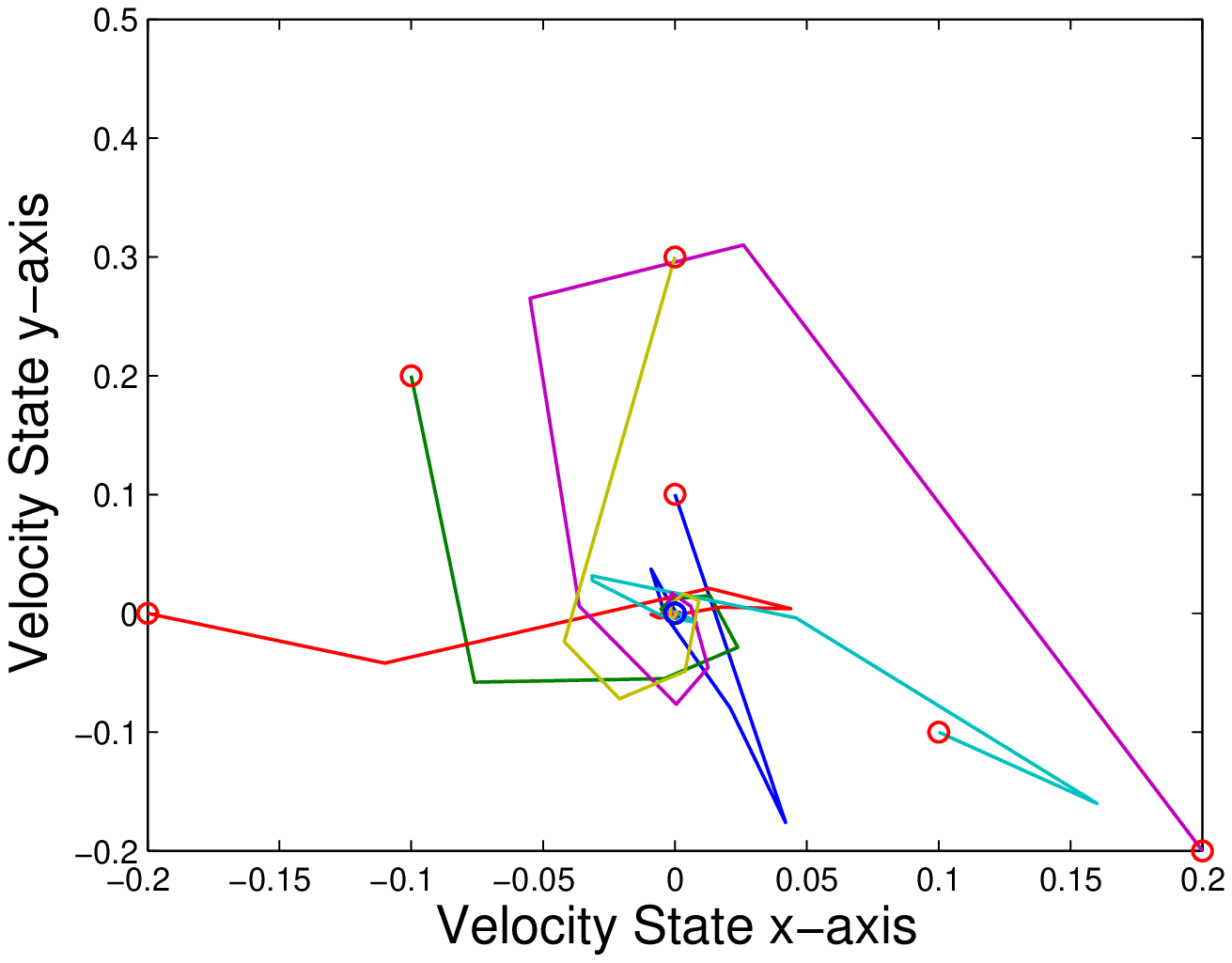}
\caption{Protocol (\ref{ddu2}) success to solve the rendezvous problem of agents with dynamics (\ref{dt2}).}\label{dsrendezvous}
\end{figure}

\section{Conclusion}
In this paper, the consensus problem for two classes of state-dependent switching systems have been considered. The first case describes some systems in networks with fixed connectivity. For these systems, the volume of information in communication varies but always exists as the evolution of the agents. The second one represents some systems whose communication graph is entirely determined by agents' states and thus some interaction links may be lost as the system runs. Under each kind of information transmission, the continuous-time and discrete-time systems have been studied respectively. In networks with fixed connectivity, we have proved that under a connected communication graph, consensus is reached if the state-dependent weight $\alpha(\cdot)$ or the initial configuration of the agents satisfies some conditions. In networks with state-dependent connectivity, consensus would be reached if the initial states of all the agents are under a restriction. The results of these general nonlinear systems have been applied to C-S model, opinion dynamics and rendezvous, the applications have been verified by several simulations.

Nevertheless, all the criterions for consensus are sufficient but not necessary and hence can probably be further relaxed. For example, how to generalize the undirected communication graph to be a directed one in the first case and whether the right hand side of the inequality in initial conditions can be larger. These problems are currently under exploring. Moreover, if $\alpha(\cdot)$ in continuous-time systems is relaxed to be discontinuous, the trajectory of the agents should be considered in the sense of set-valued analysis. A similar result may be obtained by nonsmooth Lynapunov methods.



%
\appendix[Proofs of Several Lemmas and Propositions]

{\it Proof of Lemma \ref{rankL}:} By the definition of $L$, $\mathbf{1}_n$ is always the eigenvector of $L$ associated with zero. Therefore, $M$ is the subspace of the eigenspace of $L\otimes I_m$ corresponding to zero, $i.e.$, $M\subset H_0(L\otimes I_m)$. From the result in \cite{Saber04}, together with the connectivity of graph $\mathcal{G}$, we have $rank(L)=n-1$. Hence, $dim(H_0(L))=1$, it follows that $dim(H_0(L\otimes I_m))=m=dim(M)$. Thus, $H_0(L\otimes I_m)=M$.
\QEDA

The proof of Lemma \ref{graph} is based on the following two lemmas.
\begin{lemma}(Menger's Theorem \cite{West01})\label{Menger}
If $x$, $y$ are vertices of a graph $\mathcal{G}$ and $(x,y)\notin\mathcal{E}(\mathcal{G})$, then the minimum size of an $x,y-$cut equals the maximum number of pairwise internally disjoint $x,y-$paths.
\end{lemma}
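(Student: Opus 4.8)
The plan is to prove that the two quantities coincide by establishing each as an upper bound for the other. Write $c(x,y)$ for the minimum size of an $x,y$-cut and $p(x,y)$ for the maximum number of pairwise internally disjoint $x,y$-paths. The inequality $p(x,y)\le c(x,y)$ is the routine half: fix any $x,y$-cut $S$; since $S$ separates $x$ from $y$, every $x,y$-path must contain an internal vertex lying in $S$, and internal disjointness of a family of paths forces these representatives to be distinct. Hence any such family has at most $|S|$ members, and minimizing over $S$ gives $p(x,y)\le c(x,y)$.

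The real content is the reverse inequality $c(x,y)\le p(x,y)$: from a minimum cut of size $k$ one must actually exhibit $k$ internally disjoint paths. I would obtain this by reducing to the max-flow min-cut theorem through the standard vertex-splitting construction. Build a directed capacitated network $N$ by replacing each vertex $v\in\mathcal{V}(\mathcal{G})\setminus\{x,y\}$ with an arc $v^-\to v^+$ of capacity $1$, and each undirected edge $\{u,v\}\in\mathcal{E}(\mathcal{G})$ with two arcs $u^+\to v^-$ and $v^+\to u^-$ of capacity $+\infty$; take $x$ as source and $y$ as sink, left unsplit. The hypothesis $(x,y)\notin\mathcal{E}(\mathcal{G})$ is exactly what guarantees that no infinite-capacity arc joins source to sink directly, so that finite cuts exist at all.

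In this network a minimum cut of finite capacity can only use unit-capacity vertex-arcs, since any infinite arc is too expensive to sever, and the set of split vertices whose arcs are cut is precisely an $x,y$-cut in $\mathcal{G}$; conversely every $x,y$-cut of $\mathcal{G}$ yields a cut of $N$ of equal capacity. Thus the minimum cut capacity of $N$ equals $c(x,y)$. On the flow side, an integral maximum flow, which exists because all capacities are integral or infinite, decomposes into unit-value $x,y$-dipaths, and the unit capacity on each arc $v^-\to v^+$ forces any two of these dipaths to pass through disjoint sets of internal vertices, hence to project to internally disjoint $x,y$-paths in $\mathcal{G}$; so the maximum flow value equals $p(x,y)$. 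Applying max-flow min-cut then yields $c(x,y)=p(x,y)$.

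The main obstacle, and the step I would write out most carefully, is the two-way correspondence in the previous paragraph: verifying that finite minimum cuts of $N$ are exactly the images of $x,y$-cuts of $\mathcal{G}$ of the same size, and that an integral maximum flow path-decomposes into dipaths that remain internally disjoint after projection. A fully self-contained alternative, avoiding flows entirely, is induction on $|\mathcal{E}(\mathcal{G})|$: delete an edge $e=uv$; if the minimum cut of $\mathcal{G}-e$ is still $k$ one finishes by the inductive hypothesis, and otherwise $\mathcal{G}-e$ has a cut $T$ of size $k-1$, whence $T\cup\{u\}$ and $T\cup\{v\}$ are minimum cuts of $\mathcal{G}$ that split the problem into two strictly smaller instances whose paths are spliced across $e$. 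The bookkeeping in that splicing is where the genuine care is required. Since the statement is quoted from \cite{West01}, in the paper itself it suffices to cite the result; the sketch above is how I would justify it from first principles if needed.
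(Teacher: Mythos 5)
Your proposal is correct in substance, but there is nothing in the paper to compare it against: the paper states Menger's Theorem purely as a quoted result from \cite{West01} (it is used only to prove Lemma \ref{graph}) and supplies no proof of its own, which your final sentence correctly anticipates. Your argument is the standard reduction to max-flow min-cut via vertex splitting, and the outline is sound: the easy inequality $p(x,y)\le c(x,y)$ by choosing distinct representatives in a cut, and the hard direction by showing finite cuts of the auxiliary network $N$ correspond exactly to $x,y$-cuts of $\mathcal{G}$ and that an integral maximum flow projects to internally disjoint paths. A few points you would need to nail down in a full write-up: (i) the integrality theorem is usually stated for finite capacities, so replace $+\infty$ by a sufficiently large finite value (e.g.\ $n$), justified because the all-vertex-arcs cut already has finite capacity $n-2$ --- this is also where the hypothesis $(x,y)\notin\mathcal{E}(\mathcal{G})$ enters, exactly as you say; (ii) flow decomposition yields paths \emph{and cycles}, and you should note the cycles are discarded without changing the value; (iii) the unit capacity on $v^-\to v^+$ guarantees not only disjointness across dipaths but also that each dipath projects to a \emph{simple} path in $\mathcal{G}$, which is worth saying explicitly; (iv) edges incident to the unsplit vertices $x,y$ need the convention $x^\pm=x$, $y^\pm=y$. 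Your alternative inductive sketch is also a known valid route (and its step ``deletion of an edge reduces the cut size by at most $1$'' is precisely the paper's Lemma \ref{Deletion}), though the splicing step requires case analysis when the deleted edge is incident to $x$ or $y$, which you flag but do not carry out. In the context of this paper, the citation is the appropriate choice; your sketch buys self-containedness at the cost of importing the max-flow machinery.
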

\begin{lemma}(\cite{West01})\label{Deletion}
Deletion of an edge reduces connectivity by at most 1.
\end{lemma}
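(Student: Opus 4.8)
The plan is to prove the equivalent statement $\kappa(\mathcal{G})\leq\kappa(\mathcal{G}-e)+1$, which says that any minimum vertex cut of $\mathcal{G}-e$ can be enlarged by at most one vertex into a cut of $\mathcal{G}$. Write $e=(u,v)$, set $\mathcal{G}'=\mathcal{G}-e$, and let $S$ be a minimum cut of $\mathcal{G}'$, so $|S|=\kappa(\mathcal{G}')$ and $\mathcal{G}'-S$ is disconnected or a single vertex. The organizing remark is that $\mathcal{G}$ and $\mathcal{G}'$ agree except for the edge $e$; consequently, for any vertex set $T$ with $u\in T$ or $v\in T$ one has $\mathcal{G}-T=\mathcal{G}'-T$, since deleting an endpoint of $e$ erases the only difference between the two graphs.

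First I would clear the routine cases. If $S$ already meets $\{u,v\}$, then $\mathcal{G}-S=\mathcal{G}'-S$ is disconnected (or trivial) and $S$ is a cut of $\mathcal{G}$, giving $\kappa(\mathcal{G})\leq|S|$. Otherwise $u,v\notin S$; here $\mathcal{G}'-S$ must have at least two components, since a single leftover vertex cannot contain both $u$ and $v$. If $u$ and $v$ lie in the same component of $\mathcal{G}'-S$, then restoring $e$ joins two vertices already in one component and does not merge distinct components, so $\mathcal{G}-S$ stays disconnected and again $\kappa(\mathcal{G})\leq|S|$.

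The remaining case, where $u$ and $v$ sit in different components $A\ni u$ and $B\ni v$ of $\mathcal{G}'-S$, is the crux. Here I would try to use $S\cup\{u\}$ or $S\cup\{v\}$ as a cut of $\mathcal{G}$; by the organizing remark $\mathcal{G}-(S\cup\{u\})=(\mathcal{G}'-S)-u$, and deleting $u$ keeps the remaining graph disconnected unless $A=\{u\}$ and $\mathcal{G}'-S$ has exactly two components, in which case the leftover is just $B$ and is connected. The symmetric failure for $S\cup\{v\}$ occurs only when $B=\{v\}$ and there are exactly two components. Thus at least one of $S\cup\{u\}$, $S\cup\{v\}$ is a genuine cut of size $|S|+1$, except in the single degenerate situation $A=\{u\}$, $B=\{v\}$ with no other components. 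That degenerate situation is the main obstacle, and it is handled not by cutting but by counting: there $V(\mathcal{G})=S\cup\{u,v\}$ has exactly $|S|+2$ vertices, so the universal bound $\kappa(\mathcal{G})\leq n-1=|S|+1$, immediate from the definition of $\kappa$, closes the gap.

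Assembling the cases yields $\kappa(\mathcal{G})\leq|S|+1=\kappa(\mathcal{G}')+1$ in every instance, which is the claim. An alternative derivation would use Menger's Theorem (Lemma \ref{Menger}): for fixed non-adjacent endpoints the edge $e$ can belong to at most one path in any family of internally disjoint paths, so its removal lowers each local connectivity by at most one; I prefer the direct cut argument because it sidesteps the extra bookkeeping needed for pairs involving $u$ or $v$ and for pairs that are adjacent.
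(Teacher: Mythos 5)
Your proof is correct, but note that the paper itself offers no proof of this lemma: it is quoted from West's textbook \cite{West01} and used as a black box in the proof of Lemma \ref{graph}, so there is no internal argument to compare against. Your direct cut argument is sound and complete under the paper's definition of $\kappa$ (minimum size of a set $S$ with $\mathcal{G}-S$ disconnected or a single vertex): the observation $\mathcal{G}-T=\mathcal{G}'-T$ whenever $T$ meets $\{u,v\}$ correctly disposes of the case $S\cap\{u,v\}\neq\emptyset$; the same-component case is right since restoring $e$ inside one component cannot merge components; and in the crux case your analysis of when $S\cup\{u\}$ or $S\cup\{v\}$ fails is accurate. One small remark: in the degenerate situation $A=\{u\}$, $B=\{v\}$ with no other components, the counting fallback $\kappa(\mathcal{G})\leq n-1$ is correct but actually redundant under the paper's convention, because $\mathcal{G}-(S\cup\{u\})$ is then the single vertex $v$, so $S\cup\{u\}$ already qualifies as a cut of size $|S|+1$ by the ``or has only one vertex'' clause; the two observations are of course equivalent, since $\kappa(\mathcal{G})\leq n-1$ is itself proved by leaving one vertex. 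Your closing comparison is also apt: the Menger-based route you sketch is the one a reader might expect given that the paper invokes Lemma \ref{Menger} right next to this lemma, but it only controls local connectivities between non-adjacent pairs and so needs extra bookkeeping for adjacent pairs, whereas your argument works directly from the definition and is the cleaner choice here.
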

{\it Proof of Lemma \ref{graph}:} Assume that there exist a pair of agents $i$ and $j$, and the maximum number of disjoint paths between them is $l<k^*$. We discuss the problem in the following two cases.

Case1. If $(i,j)\notin\mathcal{E}(\mathcal{G})$, from Lemma \ref{Menger}, the minimum size of an $i$, $j-$cut in graph $\mathcal{G}$ is $l$. This means that the minimum size of a vertex set disconnecting $i$ and $j$ is $l$. Therefore, $\kappa(\mathcal{G})\leq l<k^*$, which is a contradiction.

Case2. If $(i,j)\in\mathcal{E}(\mathcal{G})$. Let $\mathcal{G}'=\mathcal{G}-\{(i,j)\}$, from Lemma \ref{Deletion}, $\kappa(\mathcal{G}')\geq \kappa(\mathcal{G})-1$. By Menger's Theorem, the minimum size of an $i$, $j-$cut in graph $\mathcal{G}'$ is $l-1$. Hence, $\kappa(\mathcal{G}')\leq l-1$. Then, $\kappa(\mathcal{G})\leq\kappa(\mathcal{G}')+1 \leq l<k^*$, which conflicts with $\kappa(\mathcal{G})=k^*$.
\QEDA
{\it Proof of Lemma \ref{n-1}:} Without loss of generality, suppose that $\mathcal{G}$ has $r$ connected components, with $\mathcal{V}_1,\cdots,\mathcal{V}_r$ as the corresponding set of nodes, $|\mathcal{V}_1|\leq\cdots\leq|\mathcal{V}_r|$. Let $\mathcal{V}_p$ be the first set which has more than one element. That is, $p=\min\limits_{|\mathcal{V}_i|\geq2}\{1,\cdots,r\}$. Let $f(r)$ denote the minimal number of pairs of disconnected nodes, $n_i=|\mathcal{V}_i|$. We have
$$
f(r)= C_n^2-C_{n_p}^2-C_{n_{p+1}}^2-\cdots-C_{n_r}^2, i,j=1,\cdots,r.
$$
Combining $\mathcal{V}_p$ and $\mathcal{V}_{p+1}$, it follows that
$$
f(r-1)\leq C_n^2-C_{n_p+n_{p+1}}^2-C_{n_{p+2}}^2-\cdots-C_{n_r}^2.
$$
Thus,
\begin{center}
$f(r)-f(r-1)\geq C_{n_p+n_{p+1}}^2-C_{n_p}^2-C_{n_{p+1}}^2>0$.
\end{center}
Consequently, $f(r)$ is a decreasing function of $r$. Since the graph is not connected, one has $r>1$. Thus, $f(r)\geq f(2)$. Recalling that $f(2)=\min\{n_1n_2\}=\min\{n_1(n-n_1)\}=n-1$. Therefore, $f(r)\geq n-1$.
\QEDA

{\it Proof of Lemma \ref{bound}:} If $||p(t)||$ is upper bounded, we obtain the upper bound $B$ of $||x_i-x_j||$ for any $i,j\in\mathcal{V}$ from Lemma \ref{||x||}. Using $e$ to denote the eigenvector associated with $\lambda_2(L_x)$, due to the fact that $\alpha(s)$ is nonincreasing of $s$, we have
\[
\begin{split}
\lambda_2(L_x)=\frac{e^TL_xe}{e^Te}&=\frac{\sum\limits_{i\in\mathcal{V}} \sum\limits_{j\in\mathcal{V}}G_{ij}\alpha_{ij}||e_i-e_j||^2}{2e^Te}\\
&\geq\alpha(B)\cdot\frac{\sum\limits_{i\in\mathcal{V}} \sum\limits_{j\in\mathcal{V}}G_{ij}||e_i-e_j||^2}{2e^Te}\\
&=\alpha(B)\cdot\frac{e^T\bar{L}e}{e^Te}\\
&\geq\alpha(B)\lambda_2(\bar{L}),
\end{split}
\]
where $\bar{L}$ is the Laplacian matrix of graph $\bar{\mathcal{G}}=(\mathcal{V}, \mathcal{E}, \bar{\mathcal{A}})$ with $\bar{\mathcal{A}}=G$. Since the communication topology is connected, $\lambda_2(\bar{L})$ is positive constant. Thus, $\lambda_2(L_x)$ has a positive lower bound.
\QEDA

{\it Proof of Proposition \ref{prW}:} (1). It is easy to see that $w(z)=0$ if and only if $z=0$, and $W\geq0$ for any $x\in\mathbb{R}^n$. Then $W=0$ if $x_i=x_j$ for any $i,j\in\mathcal{V}$. Otherwise, suppose $W=0$ is valid, then for any $(i,j)\in\mathcal{E}$, one has $w(||x_{ij}(t)||^2)=W_{ij}=0$, implying that $||x_i-x_j||=0$. Since graph $\mathcal{G}$ is connected, one has $||x_i-x_j||=0$ for any $i,j\in\mathcal{V}$.

(2). Suppose that $0<z_1<z_2$. We study this problem in the following three cases.

Case1. $z_1<z_2<r$. Then $w(z_2)-w(z_1)=\alpha(r)(z_2-z_1)\geq0$.

Case2. $z_1<r\leq z_2$. Then $w(z_2)-w(z_1)\geq\alpha(r)r-\alpha(r)z_1 \geq 0$.

Case3. $r<z_1<z_2$. If $\lfloor\frac{z_2}{r}\rfloor> \lfloor\frac{z_1}{r}\rfloor$, then
\[
\begin{split}
w(z_2)-w(z_1)&\geq\alpha(\lfloor\frac{z_1}{r}\rfloor r+r)r+ \alpha(\lceil \frac{z_2}{r}\rceil r)(z_2-\lfloor\frac{z_2}{r}\rfloor r)\\
&-\alpha(\lceil\frac{z_1}{r}\rceil r)(z_1-\lfloor\frac{z_1}{r}\rfloor r)\\
&\geq\alpha(\lceil\frac{z_2}{r}\rceil r)(z_2-\lfloor\frac{z_2}{r}\rfloor r)\geq0.
\end{split}
\]
If $\lfloor\frac{z_2}{r}\rfloor=\lfloor\frac{z_1}{r}\rfloor$, then $\lceil\frac{z_2}{r}\rceil r=\lceil\frac{z_1}{r}\rceil$ and $z_2-\lfloor\frac{z_2}{r}\rfloor r>z_1-\lfloor\frac{z_1}{r}\rfloor r$. Hence, $w(z_2)-w(z_1)=\alpha(\lceil\frac{z_2}{r}\rceil r)(z_2-\lfloor\frac{z_2}{r}\rfloor r)-\alpha(\lceil\frac{z_1}{r}\rceil r)(z_1-\lfloor\frac{z_1}{r}\rfloor r)\geq0$.

(3). For any $t\geq0$, we discuss the problem in the following two cases.

Case1. $||x_{ij}(t+1)||\geq||x_{ij}(t)||$. Then $\alpha_{ij}(x(t+1))\leq \alpha_{ij}(x(t))$. From (2), one has $W_{ij}(t+1)-W_{ij}(t)\geq0$. Therefore,
$$
W_{ij}(t+1)-W_{ij}(t) \leq \alpha_{ij}(x(t))(||x_{ij}(t+1)||^2-||x_{ij}(t)||^2).
$$
Together with $W(t)=\frac{1}{2}\sum\limits_{i\in\mathcal{V}} \sum\limits_{j\in\mathcal{V}}G_{ij}W_{ij}(t)$, (\ref{dt1W}) is obtained.

Case2. $||x_{ij}(t+1)||<||x_{ij}(t)||$. Then $\alpha_{ij}(x(t+1))\geq \alpha_{ij}(x(t))$. From (2), one has $W_{ij}(t+1)-W_{ij}(t)\leq0$. Therefore,
$$
W_{ij}(t+1)-W_{ij}(t) \leq \alpha_{ij}(x(t))(||x_{ij}(t+1)||^2-||x_{ij}(t)||^2).
$$
Together with $W(t)=\frac{1}{2}\sum\limits_{i\in\mathcal{V}} \sum\limits_{j\in\mathcal{V}}G_{ij}W_{ij}(t)$, (\ref{dt1W}) is obtained.

(4). For any $r\leq z<\infty$, $\alpha(\cdot)$ is Riemann integral on $[0,z]$ since it is monotonous and bounded by $\alpha(0)$. Then we have
\[
\begin{split}
\int_0^z&\alpha(s)ds=\int_0^r\alpha(s)ds+\cdots+\int_{(\lfloor \frac{z}{r}\rfloor-1)r}^{\lfloor \frac{z}{r}\rfloor r}\alpha(s)ds+ \int_{\lfloor \frac{z}{r}\rfloor r}^z\alpha(s)ds\\
&\geq\int_0^r\alpha(r)ds+\cdots+\int_{(\lfloor \frac{z}{r}\rfloor-1)r}^{\lfloor \frac{z}{r}\rfloor r}\alpha(\lfloor \frac{z}{r}\rfloor r)ds+ \int_{\lfloor \frac{z}{r}\rfloor r}^z\alpha(\lceil \frac{z}{r}\rceil r)ds\\
&=\sum\limits_{s=1}^{\lfloor \frac{z}{r}\rfloor}\alpha(sr)+ \alpha(\lceil \frac{z}{r}\rceil r)(z-\lfloor \frac{z}{r}\rfloor r)=w(z).
\end{split}
\]
Furthermore,
\[
\begin{split}
w(z)-\int_r^z\alpha(s)ds&= \sum\limits_{s=1}^{\lfloor\frac{z}{r}\rfloor}\alpha(sr)r+ \alpha(\lceil\frac{z}{r}\rceil r)(z-\lfloor\frac{z}{r}\rfloor r)-\int_r^{\lfloor\frac{z}{r}\rfloor r}\alpha(s)ds -\int_{\lfloor\frac{z}{r}\rfloor r}^z\alpha(s)ds\\
&\geq\alpha(\lfloor\frac{z}{r}\rfloor r)r+ \alpha(\lceil\frac{z}{r}\rceil r)(z-\lfloor\frac{z}{r}\rfloor r)- \int_{\lfloor\frac{z}{r}\rfloor r}^z\alpha(s)ds\\
&\geq\alpha(\lfloor\frac{z}{r}\rfloor r)(z-\lfloor\frac{z}{r}\rfloor r)\geq0.
\end{split}
\]
Therefore, it holds that $\int_r^z\alpha(s)ds\leq w(z)\leq\int_0^z\alpha(s)ds$. Since $\lim\limits_{r\rightarrow0} \int_r^z\alpha(s)ds=\int_0^z\alpha(s)ds$, together with Squeeze Theorem, it follows that $\lim\limits_{r\rightarrow0}w(z) =\int_0^z\alpha(s)ds$ for $z\geq0$.
\QEDA

The proof of Proposition \ref{pr 2n-3} is based on the following Lemma.
\begin{lemma}\label{pr symmetry}
If the initial states are symmetrically distributed, the states of all the agents in model (\ref{smooth}) will be symmetrically distributed for any $t\geq0$.
\end{lemma}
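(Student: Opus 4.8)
The plan is to exploit the reflection symmetry built into model (\ref{smooth}) and then invoke uniqueness of solutions. First I would encode the symmetric-distribution hypothesis as a single algebraic identity. Let $\sigma$ be the index involution that pairs each agent with its mirror partner, so that the hypothesis reads $x_i(0)+x_{\sigma(i)}(0)=2x_0$ for every $i$, where $x_0$ is the common centre. Summing this over $i$ and using that $\sigma$ is a bijection gives $x_0=\frac{1}{n}\sum_{i\in\mathcal{V}}x_i(0)$, i.e. $x_0$ is the initial average; since $\mathcal{G}$ is undirected the average is conserved, so the \emph{same} $x_0$ serves for all $t$. Define the reflection map $\Phi$ on state vectors by $\Phi(x)_i=2x_0-x_{\sigma(i)}$. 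Then the hypothesis is exactly $\Phi(x(0))=x(0)$, and the goal is to prove $\Phi(x(t))=x(t)$ for all $t\ge0$.

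Second I would show that $\Phi$ carries solutions to solutions, i.e. that the vector field is equivariant under $\Phi$. Setting $\tilde{x}(t)=\Phi(x(t))$ we have $\dot{\tilde{x}}_i=-\dot{x}_{\sigma(i)}$. Using $\tilde{x}_i-\tilde{x}_k=x_{\sigma(k)}-x_{\sigma(i)}$, so that $\|\tilde{x}_i-\tilde{x}_k\|^2=\|x_{\sigma(i)}-x_{\sigma(k)}\|^2$ and $\tilde{x}_k-\tilde{x}_i=x_{\sigma(i)}-x_{\sigma(k)}$, and re-indexing the sum through the bijection $k\mapsto\sigma(k)=j$, the right-hand side of (\ref{smooth}) evaluated at $\tilde{x}$ becomes $\sum_j\alpha(\|x_{\sigma(i)}-x_j\|^2)(x_{\sigma(i)}-x_j)=-\sum_j\alpha(\|x_{\sigma(i)}-x_j\|^2)(x_j-x_{\sigma(i)})=-\dot{x}_{\sigma(i)}=\dot{\tilde{x}}_i$. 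Hence $\tilde{x}(t)$ satisfies the very same differential equation as $x(t)$; this equivariance check is pure bookkeeping and should go through cleanly.

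Third I would close the argument by uniqueness of solutions. Because $\alpha(\cdot)$ is continuous and, in the smoothed profile (\ref{c alpha}), $f$ is Lipschitz, the map $x\mapsto\alpha(\|x_i-x_j\|^2)(x_j-x_i)$ is locally Lipschitz; solutions therefore exist, are unique, and stay bounded (the states remain in the convex hull of their initial values). Now $x(t)$ and $\tilde{x}(t)=\Phi(x(t))$ both solve (\ref{smooth}) with the identical initial datum $x(0)=\Phi(x(0))$, so uniqueness forces $\Phi(x(t))=x(t)$ for all $t\ge0$, that is $x_i(t)+x_{\sigma(i)}(t)=2x_0$ for every $i$. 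This is precisely the statement that the states stay symmetrically distributed.

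The main obstacle I anticipate is not the equivariance algebra but the justification of uniqueness: one must verify that the admitted $\alpha(\cdot)$ renders the vector field Lipschitz enough for Picard--Lindel\"of, which is exactly why the smoothed model (\ref{c alpha}) replaces the discontinuous H--K weight by a Lipschitz $f$. The second point to be careful about is that one fixed centre $x_0$ works uniformly in $t$; this is secured by the invariance of the mean under the undirected dynamics noted in the first step. With these two technical points in place, the reflection-invariance argument delivers the lemma.
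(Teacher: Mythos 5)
Your proof is correct, and it reaches the lemma by a genuinely different route than the paper. The paper argues directly on the symmetric configuration set: with the mirror pairing $i+j=n+1$ (the definition's ``$i+j=n$'' in the text is an indexing slip), it shows that at any symmetrically distributed state the neighbor sets of a mirror pair $(i,j)$ are themselves mirrored, that $x_i-x_k=x_l-x_j$ forces $\alpha_{ik}=\alpha_{jl}$, hence $\dot{x}_i+\dot{x}_j=0$ for every pair, and then declares the set $M=\{x~|~x_i+x_j=x_1+x_n,~i+j=n+1\}$ positively invariant. You package the identical cancellation as equivariance of the vector field under the reflection $\Phi(x)_i=2x_0-x_{\sigma(i)}$ and close with Picard--Lindel\"of: $x(\cdot)$ and $\Phi(x(\cdot))$ solve the same ODE from the same initial datum, so they coincide; your re-indexing $k\mapsto\sigma(k)$ is exactly the paper's pairing of $\mathcal{N}_i$ with $\mathcal{N}_j$. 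Your scaffolding is logically tighter, because the paper's step from ``the derivative of $x_i+x_j$ vanishes on $M$'' to ``$M$ is positively invariant'' itself tacitly requires uniqueness of solutions (tangency alone does not yield invariance for merely continuous vector fields), and you make that hypothesis explicit, correctly observing that the Lipschitz $f$ in (\ref{c alpha}) is what secures it for model (\ref{smooth}). Your side remarks are also sound: summing the mirror identity over the bijection $\sigma$ identifies $x_0$ with the initial mean, and the convex-hull bound gives global existence --- though strictly neither is needed, since $\Phi$ is built from the fixed constant $2x_0$, which cancels in the equivariance computation, and uniqueness then automatically keeps the centre at $x_0$. What the paper's version buys is elementarity, a short derivative computation with no appeal to an abstract uniqueness theorem; what yours buys is completeness on precisely the point the paper leaves implicit, plus the observation that the reflection argument never uses the scalar ordering of the opinions.
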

\begin{proof} Suppose that all the opinions are symmetrically distributed at time $t\geq0$. For any $i+j=n+1$($i$, $j$ can be the same), the symmetric distribution implies that $x_i(t)+x_j(t)=x_1(t)+x_n(t)$, the neighbors of $i$ and $j$ are also symmetrically distributed. That is, for any $k\in\mathcal{N}_i(t)$, there exists a unique $l\in\mathcal{N}_j(t)$, such that $k+l=n+1$. Moreover, since $x_i(t)+x_j(t)=x_k(t)+x_l(t)$, one has $x_i(t)-x_k(t)=x_l(t)-x_j(t)$, implying that $\alpha_{ik}=\alpha_{jl}$. Therefore,
\[
\begin{split}
\dot{x}_i(t)+\dot{x}_j(t)&=\sum_{k\in\mathcal{N}_i(t)} \alpha_{ik}(x_k(t)-x_i(t))+ \sum_{l\in\mathcal{N}_j(t)} \alpha_{jl}(x_l(t)-x_j(t))\\
&=\sum_{k\in\mathcal{N}_i(t)}\alpha_{ik}x_k(t)+ \sum_{l\in\mathcal{N}_j(t)} \alpha_{jl}x_l(t)\\
&- \sum_{k\in\mathcal{N}_i(t)} \alpha_{ik}x_i(t)-\sum_{l\in\mathcal{N}_j(t)} \alpha_{lj}x_j(t)=0.
\end{split}
\]
Hence, $M=\{x~|~x_i(t)+x_j(t)=x_1(t)+x_n(t),~ i+j=n+1\}$ is a positively invariant set. Since $x(0)\in M$, the states will always be symmetrically distributed.
\end{proof}

{\it Proof of Proposition \ref{pr symmetry}:} Suppose graph $G$ has $r$ connected components with $\mathcal{V}_1$, $\cdots$, $\mathcal{V}_r$ as their vertex sets, and $|\mathcal{V}_i|=n_i$ for $i\in\{1,\cdots,r\}$. Since the agents are always symmetrically distributed, we let $n_k=n_j$ for any $k+j=1+r$. Let $g(r)$ be the number of pairs of connected nodes. Then $g(r)=\sum_{n_i>1}C_{n_i}^2$. We consider the problem in the following two cases:

Case1, n is odd. From the symmetry, we have $1\leq n_1\leq\frac{n-1}{2}$.

If $n_1=1$, $g(r)\leq C_{\sum_{n_i>1}n_i}^2\leq C_{n-n_1-n_r}^2=C_{n-2}^2 =\frac{n^2-5n+6}{2}$.

If $n_1=\frac{n-1}{2}$, $g(r)=C_{n_1}^2+C_{n_r}^2=\frac{n^2-4n+3}{4}$.

If $1<n_1<\frac{n-1}{2}$, $g(r)\leq C_{n_1}^2+C_{n-n_1-n_r}^2+C_{n_r}^2 =3n_1^2-2nn_1+\frac{n^2-n}{2}\leq\frac{n^2-8n+27}{4}$.

Case2, n is even. From the symmetry, we have $1\leq n_1\leq\frac{n}{2}$.

If $n_1=1$, $g(r)\leq C_{\sum_{n_i>1}n_i}^2\leq C_{n-n_1-n_r}^2=C_{n-2}^2 =\frac{n^2-5n+6}{2}$.

If $n_1=\frac{n}{2}$, $g(r)=C_{n_1}^2+C_{n_r}^2=\frac{n^2-2n}{4}$.

If $1<n_1<\frac{n}{2}$, $g(r)\leq C_{n_1}^2+C_{n-n_1-n_r}^2+C_{n_r}^2 =3n_1^2-2nn_1+\frac{n^2-n}{2}\leq\frac{n^2-6n+12}{4}$.

In conclusion, we can obtain that $g(r)\leq\frac{n^2-5n+6}{2}$ for $n\geq4$. Therefore, the minimal number of pairs of disconnected agents is $f(r)=C_n^2-g(r)\geq2n-3$.
\QEDA



\ifCLASSOPTIONcaptionsoff
  \newpage
\fi



%

\end{document}